\documentclass[a4paper,12pt]{report}

\usepackage[utf8]{inputenc}
\usepackage[T1]{fontenc}
\usepackage[english]{babel}
\usepackage{amsmath}
\usepackage{amssymb,amsfonts,amsthm}
\usepackage[a4paper,top=2.5cm,bottom=2.8cm,left=2.8cm,right=2.8cm]{geometry}
\usepackage{graphicx}
\usepackage{booktabs}
\usepackage{array}
\usepackage[square,authoryear]{natbib}
\usepackage{csquotes}
\usepackage{tikz}
\usepackage[hidelinks]{hyperref}
\hypersetup{pdftitle={Geodesics in Black-Hole Spacetimes},pdfauthor={Oliver F. Piattella}}
\usetikzlibrary{arrows.meta,decorations.markings,calc}

\usepackage{pgfplots}
\pgfplotsset{compat=1.18}
\usepgfplotslibrary{groupplots}
\definecolor{c1}{HTML}{5E81B5}\definecolor{c2}{HTML}{E19C24}
\definecolor{c3}{HTML}{8FB032}\definecolor{c4}{HTML}{EB6235}
\newcommand{\dd}{\mathop{}\!\mathrm{d}}
\newcommand{\sgn}{\operatorname{sgn}}
\theoremstyle{plain}\newtheorem{theorem}{Theorem}[chapter]
\theoremstyle{definition}\newtheorem{definition}{Definition}[chapter]
\newtheorem{example}{Example}[chapter]

\title{Geodesics in Black-Hole Spacetimes\\
\large A qualitative classification using Vieta's formulae}
\author{Oliver F. Piattella\footnote{Dipartimento di Scienza e Alta Tecnologia, Università degli Studi dell'Insubria, Como, Italy.\\ E-mail: \url{of.piattella@uninsubria.it}}}
\date{}

\begin{document}

\hypersetup{pageanchor=false}
\maketitle

% Dedication
\thispagestyle{empty}
\vspace*{\stretch{1}}
\begin{center}
\itshape To Giulio, the turning point of my motion

\vspace{3em}

In memory of Dorotea,\\
faithful companion through many of these pages
\end{center}
\vspace*{\stretch{2}}
\clearpage
\setcounter{page}{1}
\hypersetup{pageanchor=true}

\chapter*{Preface}
\addcontentsline{toc}{chapter}{Preface}

Among the most striking predictions of General Relativity are black holes: regions of spacetime bounded by an event horizon, from which nothing, not even light, can escape, and which, in classical theory, contain spacetime singularities. Understanding the motion of freely falling test bodies and light rays in black-hole spacetimes is both a fundamental problem of general relativity and an indispensable tool of relativistic astrophysics.

Historically, geodesics provided the first contact between the theory and observation. The anomalous advance of Mercury's perihelion is the leading relativistic correction to bound timelike orbits around the Sun, and the deflection of starlight measured in 1919 is the weak-field limit of null geodesics in the same geometry. These classic tests, however, probe the weak-field regime, where the departure from Newtonian motion is small. Close to a black hole, the situation is qualitatively different. Circular orbits of light exist; there is an innermost stable circular orbit, inside which no bound motion is possible; particles can whirl around the hole many times before escaping or plunging; and in the Kerr geometry, orbits are no longer confined to a plane, while the ergoregion admits states of negative energy.

These strong-field features are no longer of purely theoretical interest. The images of M87$^\ast$ and Sgr~A$^\ast$ obtained by the Event Horizon Telescope show a bright ring whose size is closely tied to the critical curve traced by unstable photon orbits. The same orbits govern the relativistic images of strong gravitational lensing and, in the eikonal limit, the quasinormal-mode spectrum of the black hole. The innermost stable circular orbit sets the inner edge of thin accretion disks, and with it the radiative efficiency of accretion and the X-ray methods used to measure black-hole spins. Extreme-mass-ratio inspirals, among the main targets of the LISA mission, are modeled to leading order as a slow drift through a sequence of Kerr geodesics, so that the emitted waveform encodes the fundamental frequencies of the orbits the small body traverses.

In all these applications, what matters first is the qualitative structure of the motion: which kinds of orbits exist for given constants of motion, where the transitions between them occur, and what happens at the boundaries. As we shall see, this structure is entirely encoded in the roots of the polynomial governing the radial motion and, for the Kerr geometry, in the roots of the one describing the polar motion.

Systematic treatments of geodesic motion can be found in the standard textbooks by \citet{weinberg1972gravitation}, \citet{misner1973gravitation}, \citet{chandrasekhar1998mathematical}, and \citet{FrolovZelnikov}, to which we refer the reader for a comprehensive account. 

This monograph investigates the geodesics of three black-hole geometries, namely the Schwarzschild, Kerr, and Schwarzschild--de~Sitter ones, by employing a different, algebraic approach. The aim is not to solve for the geodesic trajectories explicitly or numerically, but to classify the possible motions of test particles through their turning points. This amounts to studying the pattern of the roots of the polynomials that appear in the geodesic equations. To achieve this, the key algebraic tool adopted in this monograph is provided by Vieta's formulae, which relate the aforementioned roots to the constants of motion. Together with these formulae, the employment of Descartes' rule of signs also proves to be extremely useful.

This project stemmed from an idea that came to my mind when I was a PhD student studying Chandrasekhar's book. There, the root pattern of a polynomial was treated in relation to Kerr geodesics, but I felt that it was not being fully exploited. Then came Vieta's formulae, which I found baffling at first. But, thinking about it, if one just needs the qualitative features of a motion and, in particular, of its trajectory, then it makes sense that some information can be drawn from the coefficients of a polynomial alone. I thought that the idea was not completely trivial and, at least, fun to pursue. I could not always find the time to do that, though, so the project was developed intermittently over many years. Only recently, with the help of Claude, was I able to go the last mile, especially in producing the plots that I hope the reader will enjoy. The content and the idea, however, are the same as they were many years ago, when the only AIs I knew were R.~Daneel Olivaw and HAL~9000.

\vspace{1cm}

\noindent
\emph{October 2026\\OFP}

\tableofcontents

\chapter*{Conventions}
\addcontentsline{toc}{chapter}{Conventions}
\markboth{Conventions}{Conventions}

We use geometrized units $G = c = 1$, the metric signature is $(-,+,+,+)$, Greek indices run over the four spacetime coordinates, and repeated indices are summed over. 

Along a geodesic, $\lambda$ denotes the proper time for a massive particle and an affine parameter for a photon; a dot denotes differentiation with respect to $\lambda$. 

We only consider the region outside the event horizon, the domain of outer communication, unless stated otherwise; in Schwarzschild--de~Sitter spacetime, this is the static region between the event horizon and the cosmological horizon; turning points inside a horizon do not belong to the exterior classification.

\chapter{Introduction}

Thanks to the constants of motion, the geodesic equations in the Schwarzschild and Kerr spacetimes can be reduced to first-order differential equations in which the square of the radial or polar velocity equals a polynomial. In Schwarzschild spacetime, the radial polynomial is cubic; in Kerr spacetime, it is quartic, accompanied by a quadratic polynomial that governs the polar motion; in Schwarzschild--de~Sitter spacetime, the cosmological constant raises the degree of the timelike radial polynomial to five. The coefficients of these polynomials depend only on the constants of motion, and their real roots are the candidate \emph{turning points}, where the radial or polar velocity vanishes and the motion may reverse. The qualitative behavior of a geodesic, whether it stays bound, escapes to infinity, or falls into the black hole, is therefore encoded in where these roots lie.

The roots can be computed in closed form, but the resulting expressions are rarely the clearest route to a classification. This book follows a different route and reads the qualitative motion directly off the coefficients. The main tool is provided by Vieta's formulae, which express the sums and products of the roots in terms of the constants of motion. These relations alone can tell whether the number of negative roots is even or odd, exclude entire configurations of roots, or show that the roots cannot all be real. Two further classical tools complete the picture: the discriminant, which decides whether the roots are real, and Descartes' rule of signs, which counts how many of them lie beyond a given point, such as the event horizon. Together with the values of the polynomials at the horizon and at infinity, these tools classify the generic motions without computing a single root.

Not every real root is a turning point, however. A root is physically relevant only if it lies in the region under consideration, outside the event horizon for the radial motion and between the poles for the polar one, and if the other separated equation admits real motion for the same constants of motion. Roots inside the horizon, negative roots, and complex roots still enter Vieta's formulae, and they carry useful information, but they are not places where the motion turns around. In Kerr spacetime, there are constants of motion for which the radial polynomial has roots outside the horizon, but no polar motion is possible, and no geodesic exists at all.

The Schwarzschild problem provides a transparent introduction because spherical symmetry confines every geodesic to a plane. The Kerr problem is richer: axial symmetry leaves only one manifest component of angular momentum, while the hidden symmetry discovered by Carter supplies a further constant and separates the radial and polar motions \citep{Carter:1968rr, Carter:1968ks}. Classic classifications and analytic treatments include those of \citet{Wilkins:1972rs}, \citet{Schmidt:2002qk}, \citet{DrascoHughes:2004}, and \citet{FujitaHikida:2009}. Finally, the Schwarzschild--de~Sitter problem brings back spherical symmetry but adds a cosmological horizon and a quintic radial polynomial, for which the roots cannot be written in general by radicals: there, reading the motion off the coefficients is not only convenient but practically necessary.

The plan of the book is as follows. Chapter~\ref{Sec:Vietasform} introduces Vieta's formulae, together with the discriminant and Descartes' rule of signs; it then defines root patterns and turning points. Chapter~\ref{Sec:Schgeod} classifies Schwarzschild geodesics, first through the effective potential and then through the roots of the radial cubic. Chapter~\ref{Sec:Kerrgeod} treats Kerr geodesics: the polar motion, the radial quartic in each energy regime, spherical and equatorial circular orbits, and the photon shell. Chapter~\ref{Sec:SdS} treats Schwarzschild--de~Sitter geodesics: the horizons and the static radius, the inner and outer stable circular orbits, and the root patterns of the radial quintic. Each of these chapters ends with worked examples in which the predicted root patterns are compared with numerically computed roots. Chapter~\ref{Sec:Conclusions} summarizes the results.

\chapter{Vieta's formulae and turning points}\label{Sec:Vietasform}

In all the spacetimes studied in this book, the geodesic equations can be reduced to first-order, non-linear differential equations of the form $\dot x^2=P(x)$, one for each separated coordinate, where $P$ is a polynomial whose coefficients depend on the constants of motion. Everything that is qualitatively important about the motion, namely where it is allowed, where it turns back, and whether it can stay at a constant radius, is encoded in the roots of $P$. The strategy of this book is to extract this information without computing the roots.

This chapter provides the two tools that make this possible. The first is algebraic: Vieta's formulae (Section~\ref{Sec:Vieta}) relate the symmetric combinations of the roots to the coefficients, and hence to the constants of motion. In this way, they constrain the possible configurations of the roots, which we call root patterns (Section~\ref{Sec:RootPatterns}). The second is dynamical: the definition of a turning point (Section~\ref{Sec:TurningPoints}) states which roots correspond to an actual reversal of the motion, and how multiple roots describe orbits at constant radius and their stability. Two further algebraic tools, the discriminant and Descartes' rule of signs (Section~\ref{Sec:Discriminants}), complement Vieta's formulae by determining whether the roots are real and how many of them lie beyond a given point, such as the event horizon.

\section{Vieta's formulae}\label{Sec:Vieta}

Fran\c{c}ois Vi\`ete (1540--1603), Latinized as Franciscus Vieta, was a French lawyer and mathematician. Born in Fontenay-le-Comte, in Poitou, he spent most of his life in public service as a member of the Parlement of Brittany and later as a royal councilor to Henri III and Henri IV. Mathematics occupied his free time, and his contributions were lasting. In his \emph{In artem analyticem isagoge} (1591), he introduced the systematic use of letters for both unknown and known quantities, vowels for the former and consonants for the latter. For the first time, general equations could be written and manipulated, rather than particular numerical examples. This \emph{logistica speciosa} is the ancestor of modern algebraic notation. Vi\`ete also gave the first infinite product representation of $\pi$, and during the wars of Religion, he broke the ciphers of intercepted Spanish dispatches for Henri IV so successfully that Philip II of Spain is said to have accused him of sorcery.

The relations between the roots and the coefficients of a polynomial that now bear his name appear in \emph{De aequationum recognitione et emendatione}, published posthumously in 1615 by his pupil Alexander Anderson. Vi\`ete considered positive roots only. The general statement, valid for all roots, including negative and what were then called \enquote{impossible} (complex) roots, was given by Albert Girard in his \emph{Invention nouvelle en l'alg\`ebre} (1629), which is why the formulae are sometimes attributed to both. In modern language, they read as follows. 

Consider a polynomial of degree $n$,
\begin{equation}\label{ndegpoly}
 P_n(x)=a_nx^n+a_{n-1}x^{n-1}+\cdots+a_1x+a_0\,,
 \qquad a_n\neq0\,,
\end{equation}
with complex coefficients. By the fundamental theorem of algebra, it has $n$ complex roots, counted with multiplicity. If these roots are $r_1,\ldots,r_n$, then
\begin{equation}\label{ndegpoly2}
 P_n(x)=a_n\prod_{j=1}^{n}(x-r_j)\,.
\end{equation}
Define the elementary symmetric polynomials
\begin{equation}
 s_k:=\sum_{1\leq i_1<i_2<\cdots<i_k\leq n}
 r_{i_1}r_{i_2}\cdots r_{i_k}\,.
\end{equation}
Comparing equal powers of $x$ in Eqs.~\eqref{ndegpoly} and \eqref{ndegpoly2} gives Vieta's formulae:
\begin{equation}
 s_k = (-1)^k\frac{a_{n-k}}{a_n}\,.
\end{equation}
For example, for a cubic polynomial,
\begin{equation}\label{3degpoly}
 P_3(x)=a_3x^3+a_2x^2+a_1x+a_0\,,
\end{equation}
the formulae are
\begin{align}
 r_1+r_2+r_3&=-\frac{a_2}{a_3}\,,\\
 r_1r_2+r_1r_3+r_2r_3&=\frac{a_1}{a_3}\,,\\
 r_1r_2r_3&=-\frac{a_0}{a_3}\,.
\end{align}
For a quartic polynomial
\begin{equation}\label{4degpoly}
 P_4(x)=a_4x^4+a_3x^3+a_2x^2+a_1x+a_0\,,
\end{equation}
one similarly obtains
\begin{align}
 r_1+r_2+r_3+r_4&=-\frac{a_3}{a_4}\,,\\
 \sum_{i<j}r_ir_j&=\frac{a_2}{a_4}\,,\\
 \sum_{i<j<k}r_ir_jr_k&=-\frac{a_1}{a_4}\,,\\
 r_1r_2r_3r_4&=\frac{a_0}{a_4}\,.
\end{align}

\section{Discriminants and the rule of signs}\label{Sec:Discriminants}

Vieta's formulae constrain the symmetric functions of the roots but do not, by themselves, tell whether the roots are real or where they lie. Two classical tools complete the picture: the discriminant, which decides whether the roots are real, and Descartes' rule of signs, which counts how many of them lie beyond a given point. Both are polynomial functions of the coefficients, and hence of the constants of motion, so they share the main virtue of Vieta's formulae: no root needs to be computed. Standard references are \citet{Uspensky:1948} and, for the quartic criteria, \citet{Rees:1922}.

\subsection{The discriminant}\label{Sec:Discriminant}

For the polynomial $P_n$ of Eq.~\eqref{ndegpoly}, with roots $r_1,\ldots,r_n$, the discriminant is
\begin{equation}\label{GeneralDisc}
 \Delta_n:=a_n^{2n-2}\prod_{i<j}(r_i-r_j)^2\,.
\end{equation}
It is symmetric in the roots and, therefore, by Vieta's formulae, a polynomial in the coefficients, and it vanishes if and only if $P_n$ has a multiple root. For a quadratic, $P_2(x)=a_2x^2+a_1x+a_0$, it is the familiar
\begin{equation}
 \Delta_2=a_1^2-4a_2a_0\,.
\end{equation}
When the coefficients are real and the roots are distinct, the sign of the discriminant counts the complex-conjugate pairs. A pair $z,\bar z$ contributes the factor $(z-\bar z)^2=-4(\operatorname{Im}z)^2<0$, while all the other factors combine into positive quantities: two real roots give a square, a real root $r$ and a complex pair $z,\bar z$ give $(r-z)^2(r-\bar z)^2=|r-z|^4$, and two complex pairs $z,\bar z$ and $w,\bar w$ give $|z-w|^4|z-\bar w|^4$. Hence
\begin{equation}\label{DiscSign}
 \sgn\Delta_n=(-1)^m\,,
\end{equation}
where $m$ is the number of complex-conjugate pairs. For quadratics and cubics, $m$ can only be $0$ or $1$, and the sign of the discriminant determines whether the roots are real. For quartics, $m$ can be $0$, $1$, or $2$, and a positive discriminant does not distinguish between four real roots and two complex-conjugate pairs.

\paragraph{The cubic.} For $P_3(x)=a_3x^3+a_2x^2+a_1x+a_0$, writing the product in terms of the elementary symmetric polynomials and then using Vieta's formulae gives
\begin{equation}\label{CubicDisc}
 \Delta_3=18a_3a_2a_1a_0-4a_2^3a_0+a_2^2a_1^2-4a_3a_1^3-27a_3^2a_0^2\,.
\end{equation}
If the coefficients are real, Eq.~\eqref{DiscSign} gives:
\begin{itemize}
 \item $\Delta_3>0$: three distinct real roots;
 \item $\Delta_3<0$: one real root and a complex-conjugate pair;
 \item $\Delta_3=0$: a multiple root.
\end{itemize}

\paragraph{The quartic.} For $P_4(x)=a_4x^4+a_3x^3+a_2x^2+a_1x+a_0$, the discriminant $\Delta_4$ is a homogeneous polynomial of degree six in the coefficients. By Eq.~\eqref{DiscSign}, $\Delta_4<0$ means exactly one complex-conjugate pair, whereas $\Delta_4>0$ allows either none or two. To tell these apart, one needs the auxiliary quantities
\begin{align}
 \mathcal P&:=8a_4a_2-3a_3^2\,,\\
 \mathcal D&:=64a_4^3a_0-16a_4^2a_2^2+16a_4a_3^2a_2-16a_4^2a_3a_1-3a_3^4\,.
\end{align}
For real coefficients and $\Delta_4\neq0$, the following criteria hold; \citet{Rees:1922} stated them for the reduced quartic $x^4+qx^2+rx+s$, for which $\mathcal P$ and $\mathcal D$ are proportional to $q$ and $s-q^2/4$, with positive factors:
\begin{itemize}
 \item $\Delta_4<0$: two real roots and a complex-conjugate pair;
 \item $\Delta_4>0$, $\mathcal P<0$, and $\mathcal D<0$: four distinct real roots;
 \item $\Delta_4>0$ and ($\mathcal P>0$ or $\mathcal D>0$): two complex-conjugate pairs.
\end{itemize}
In the applications of this book, the auxiliary quantities are rarely needed: whenever a real root is known to exist, for instance from a sign change between $r=0$ and the horizon, two complex-conjugate pairs are impossible, and $\Delta_4>0$ already implies four real roots.

\subsection{Descartes' rule of signs}\label{Sec:Descartes}

The rule appears in the third book of René Descartes's \emph{La Géométrie} (Leiden, 1637), published as an appendix to the \emph{Discours de la méthode}. Descartes gave it without proof, as a statement about how many \enquote{true} (positive) and \enquote{false} (negative) roots an equation \emph{can} have: at most as many true roots as there are changes of sign in the sequence of
coefficients, and at most as many false roots as there are repetitions of sign. When no coefficient vanishes, replacing $x$ by $-x$ turns every change of sign into a repetition and vice versa, so the second half is equivalent to applying the first to $P(-x)$. When coefficients are missing, it can fail: $x^2-1$ has no repetition of sign, yet $x=-1$ is a root. The modern formulation, which counts the sign changes of $P(-x)$, avoids the problem. Newton restated the rule, again without proof, in his \emph{Arithmetica universalis} (1707). A first demonstration was given by Johann Andreas Segner in a dissertation (Jena, 1728) whose title attributes the rule to Thomas Harriot, and the first proof generally regarded as rigorous is due to Jean-Paul de Gua de Malves, in a memoir for the Paris Academy of Sciences (early 1740s). De Gua's argument compares the positive roots of a polynomial with those of its derivative, and the proof given below follows the same idea. Gauss later published a short algebraic proof (\emph{Beweis eines algebraischen
Lehrsatzes}, Journal für die reine und angewandte Mathematik \textbf{3}, 1828). For historical remarks and a sharpened form of the rule, which shows that nothing more can be inferred about the positive roots from the signs of the coefficients alone, see \citet{AlbouyFu2014}.

\begin{theorem}[Descartes' rule of signs]\label{Thm:Descartes}
Let $P(x)$ be a non-zero real polynomial, and let $V(P)$ be the number of sign changes in the sequence of its non-zero coefficients, ordered by degree. Then the number of positive roots of $P$, counted with multiplicity, equals $V(P)$ or is smaller than $V(P)$ by an even integer. Applying the rule to $P(-x)$ yields the same statement for the negative roots of $P$.
\end{theorem}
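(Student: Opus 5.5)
We prove the statement by induction on the degree of $P$, following de Gua's idea of comparing $P$ with its derivative. The argument splits into a parity claim and a bound claim.

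\textbf{Reductions.} First, dividing $P$ by a power of $x$ changes neither $V(P)$ nor the set of positive roots. Multiplying by $-1$ does not change them either. So we may assume $a_0\neq0$ and $a_n>0$, where $n=\deg P$. Write $Z(P)$ for the number of positive roots counted with multiplicity. The theorem then amounts to two claims: $Z(P)\equiv V(P)\pmod 2$, and $Z(P)\le V(P)$. The statement for negative roots follows at once by applying the result to $P(-x)$, since $r<0$ is a root of $P$ of multiplicity $k$ exactly when $-r>0$ is a root of $P(-x)$ of the same multiplicity.

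\textbf{Parity.} With $a_n>0$, the parity of $V(P)$ is determined by the first and last nonzero coefficients: it is even if $a_0>0$ and odd if $a_0<0$. Now $P(0)=a_0$ and $P(x)\to+\infty$ as $x\to+\infty$. Since $a_0\neq0$, $P$ does not vanish at $0$, and between $0$ and $+\infty$ it changes sign exactly at the roots of odd multiplicity. Hence $Z(P)$ is even if $a_0>0$ and odd if $a_0<0$. This shows $Z(P)\equiv V(P)\pmod 2$.

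\textbf{Bound.} We induct on $n$; the case $n=0$ is trivial. Apply Rolle's theorem on $(0,\infty)$. Between consecutive distinct positive roots of $P$ there is a root of $P'$. In addition, a root of $P$ of multiplicity $k$ is a root of $P'$ of multiplicity $k-1$. Counting both contributions gives $Z(P')\ge Z(P)-1$.

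Next compare the coefficient sequences. The coefficients of $P'$ are $ja_j$ for $j\ge1$, which have the same signs as $a_1,\dots,a_n$; only $a_0$ is dropped. Therefore $V(P')$ equals either $V(P)$ or $V(P)-1$.

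Combining these with the induction hypothesis $Z(P')\le V(P')$:
\[
Z(P)\le Z(P')+1\le V(P')+1\le V(P)+1 .
\]
This loses one unit, and recovering it is the step we expect to be the main obstacle. We use parity to do so. By the parity claim, $Z(P)\equiv V(P)\pmod 2$, so $Z(P)\neq V(P)+1$. Combined with $Z(P)\le V(P)+1$, this forces $Z(P)\le V(P)$.

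Parity also requires a small check in the case where $P'$ has $a_1=0$ or lower vanishing coefficients. This is harmless: the parity claim was proved for an arbitrary nonzero polynomial after removing powers of $x$, so it applies to $P'$ as well. The bound $Z(P)\le V(P)$ together with the parity congruence gives exactly the statement of the theorem: $Z(P)$ equals $V(P)$ or is smaller than it by an even integer.
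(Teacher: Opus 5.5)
Your proof is correct and follows the paper's argument essentially step for step: the reduction to $a_0\neq0$, the parity comparison of $\sgn(a_0a_n)$ with both $(-1)^{Z(P)}$ and $(-1)^{V(P)}$, and de Gua's induction via Rolle's theorem with $V(P')\le V(P)$, where parity recovers the lost unit. One small correction: your closing remark about parity for $P'$ addresses the wrong point. Parity is only ever used for $P$ itself; what must hold for $P'$ when $P'(0)=0$ is the inductive bound, and it does, because removing a factor $x^k$ lowers the degree and changes neither $Z$ nor $V$.
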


\begin{proof}
Write $N(P)$ for the number of positive roots of $P$, counted with multiplicity. If $P(x) = x^k Q(x)$ with $Q(0)\neq 0$, then $P$ and $Q$ have the same positive roots and the same sequence of non-zero coefficients, so it is enough to consider $P(0) = a_0 \neq 0$. Let $a_n$ be the leading coefficient.

\emph{Parity.} For $x\to 0^+$, the polynomial $P$ has the sign of $a_0$, and for $x\to+\infty$, the sign of $a_n$. On $(0,\infty)$, it changes sign exactly at its roots of odd multiplicity, so $\operatorname{sgn}(a_0 a_n) = (-1)^{N(P)}$. Running along the non-zero coefficients from $a_0$ to $a_n$, the sign flips $V(P)$ times, so $\operatorname{sgn}(a_0 a_n) = (-1)^{V(P)}$. Hence, $N(P) \equiv V(P) \pmod 2$.

\emph{Bound.} We show $N(P)\le V(P)$ by induction on the degree $n$. For $n=0$, both sides vanish. Let $n\ge 1$, and let $r_1<\dots<r_s$ be the distinct positive roots of $P$, with multiplicities $m_1,\dots,m_s$. Each $r_i$ is a root of $P'$ of multiplicity $m_i-1$, and by Rolle's theorem, $P'$ has at least one further root in each interval $(r_i,r_{i+1})$. Therefore
\begin{equation}
N(P') \ge \sum_{i=1}^{s}(m_i-1) + s - 1 = N(P) - 1\,.
\end{equation}
The coefficients of $P'$ are $j a_j$ with $j\ge 1$: they have the signs of $a_1,\dots,a_n$, so dropping $a_0$ can only remove a sign change, and $V(P')\le V(P)$. By the induction hypothesis applied to $P'$,
\begin{equation}
N(P) \le N(P') + 1 \le V(P') + 1 \le V(P) + 1\,.
\end{equation}
Since $N(P)$ and $V(P)$ have the same parity, $N(P) = V(P)+1$ is impossible, and $N(P)\le V(P)$. Together with the parity statement, this proves the theorem. The statement for negative roots follows because the negative roots of $P$ are the positive roots of $P(-x)$.
\end{proof}

The special case $V(P)=0$ is elementary: if every coefficient is non-negative and at least one is positive, then $P(x)>0$ for all $x>0$. It is used in Chapter~\ref{Sec:Kerrgeod} to show that vortical Kerr geodesics, whether timelike or null, have no radial turning point.

\paragraph{Real-rooted polynomials.} When all the roots are real, the rule is exact. Let $P(x)=\sum_{k=0}^n a_k x^k$ have degree $n$, only real roots, and $P(0)=a_0\neq0$. If $N_+$ and $N_-$ are the numbers of positive and negative roots, counted with multiplicity, then $N_++N_-=n$, while Theorem~\ref{Thm:Descartes} gives $N_+\leq V(P)$ and $N_-\leq V(P(-x))$.

We now show that $V(P)+V(P(-x))\leq n$. Let $0=k_0<k_1<\dots<k_m=n$ be the degrees of the non-zero coefficients of $P$;
the first is $0$ because $a_0\neq0$, and the last is $n$ because $a_n\neq0$. The polynomial $P(-x)=\sum_k(-1)^ka_kx^k$ has its non-zero coefficients in the same degrees, so both sign sequences are made of the same $m$ consecutive pairs $(a_{k_{l-1}},a_{k_l})$, $l=1,\dots,m$. Let $d_l=k_l-k_{l-1}\geq1$ be the gap in degree within the $l$-th pair. In $P(x)$ the pair produces a sign change
if $a_{k_{l-1}}a_{k_l}<0$; in $P(-x)$ the corresponding product is $(-1)^{d_l}a_{k_{l-1}}a_{k_l}$. Two cases arise.
\begin{itemize}
\item If $d_l$ is odd, the two products have opposite signs, so the pair produces a sign change in exactly one of the two sequences: one change in total, and $1\leq d_l$.
\item If $d_l$ is even, the two products have the same sign, so the pair produces a sign change in both sequences or in neither: at most two changes in total, and $2\leq d_l$ because $d_l\geq2$.
\end{itemize}
In both cases, the $l$-th pair contributes at most $d_l$ sign changes to $V(P)+V(P(-x))$. Summing over the pairs, the gaps telescope:
\begin{equation}
V(P)+V(P(-x)) \leq \sum_{l=1}^m d_l = k_m-k_0 = n\,.
\end{equation}
For example, $x^2-1$ has a single pair with $d=2$, and both $x^2-1$ and $P(-x)=x^2-1$ have one sign change: the bound $n=2$ is attained.

Putting everything together,
\begin{equation}
n = N_++N_- \leq V(P)+V(P(-x)) \leq n\,,
\end{equation}
so $N_++N_-=V(P)+V(P(-x))$. Since $N_+\leq V(P)$ and $N_-\leq V(P(-x))$ separately, neither inequality can be strict: $N_+=V(P)$ and
$N_-=V(P(-x))$.

Applied to the shifted polynomial $P(c+h)=\sum_kP^{(k)}(c)\,h^k/k!$, this gives the form used in the following chapters: \emph{if all roots of $P$ are real and $P(c)\neq0$, the number of roots larger than $c$ equals the number of sign changes in the sequence of Taylor coefficients $P^{(k)}(c)/k!$, ordered by degree}. With $c=r_{\rm H}$, it counts the roots of the Kerr radial polynomial outside the event horizon (Section~\ref{Sec:KerrBound}). \citet{Wilkins:1972rs} used the rule in this shifted form, with $c=M$, to prove that Kerr geodesics with $\varepsilon^2\geq1$ cannot be bound.

\section{Root patterns}\label{Sec:RootPatterns}

By the \emph{root pattern} of a polynomial, we mean the qualitative configuration of its roots: how many are real and how many form complex-conjugate pairs, the signs and ordering of the real roots, their multiplicities, and their position relative to the boundary of the physical domain. Examples are $0<r_1<r_2<r_3$, or one negative root and a complex-conjugate pair. The classification of geodesics in this book is, at its core, a classification of the possible root patterns of the radial and polar polynomials.

For a polynomial with real coefficients, Vieta's formulae restrict the admissible patterns in three ways, none of which requires computing a root.
\begin{enumerate}
 \item \textbf{Signs.} Suppose $a_0\neq0$, so that no root vanishes, and let $N_-$ be the number of negative real roots, counted with multiplicity. In the product of all the roots, a complex-conjugate pair $z,\bar z$ contributes the factor $z\bar z=|z|^2>0$ and a positive real root contributes a positive factor, so only the negative roots affect the sign:
 \begin{equation}
 \sgn\Bigl(\prod_i r_i\Bigr)=\sgn\left[(-1)^n\frac{a_0}{a_n}\right] = (-1)^{N_-}\,.
 \end{equation}
 The sign of the product, therefore, tells us whether the number of negative roots is even or odd, though not the number itself. For instance, the roots of the cubic $x^3-b^2x+2Mb^2$ have product $-2Mb^2<0$, so there is an odd number of negative roots: one positive root accompanied by a complex-conjugate pair is impossible.
 \item \textbf{Excluded patterns.} A proposed assignment of signs must satisfy all the relations at once. Several relations taken together can exclude a pattern that each of them separately would allow; examples appear in Chapters~\ref{Sec:Schgeod} and~\ref{Sec:Kerrgeod}.
 \item \textbf{Reality.} The sum of the squares of the roots is a symmetric function, and
 \begin{equation}
 \sum_i r_i^2=s_1^2-2s_2\,.
 \end{equation}
 If the right-hand side is negative, the roots cannot all be real. The converse does not hold.
\end{enumerate}
Vieta's formulae are thus necessary constraints on a proposed root pattern: they can rule a pattern out, but a pattern compatible with them need not correspond to a geodesic. The roots may lie outside the physical domain, the pattern may not be realized for any admissible values of the constants of motion, or the other separated equation may forbid motion altogether. In the following chapters, Vieta's formulae will therefore always be used together with the discriminant and Descartes' rule of signs (Section~\ref{Sec:Discriminants}), which decide whether the roots are real and where they lie, and with the inequalities that define the physically allowed radial and polar regions.

\section{Turning points}\label{Sec:TurningPoints}

In all the spacetimes studied in this book, the motion in each separated coordinate $x$ (the radius $r$, or the polar variable $z=\cos^2\theta$) is governed by an equation of the form
\begin{equation}\label{GenericFirstIntegral}
 \left(\frac{\dd x}{\dd\sigma}\right)^2=P(x)\,,
\end{equation}
where $\sigma$ is a suitable parameter along the geodesic and $P$ is a smooth function, in practice a polynomial, on the coordinate domain $D$. Since the left-hand side is non-negative, the motion is confined to the \emph{allowed region}
\begin{equation}
 \mathcal A:=\{x\in D:\ P(x)\geq0\}\,,
\end{equation}
and, more precisely, to the connected component of $\mathcal A$ containing the initial position. Wherever $\dd x/\dd\sigma\neq0$, differentiating Eq.~\eqref{GenericFirstIntegral} gives
\begin{equation}\label{GenericSecondOrder}
 \frac{\dd^2x}{\dd\sigma^2}=\frac12\,P'(x)\,,
\end{equation}
where $P'(x) = \dd P(x)/\dd x$, and this second-order equation, which in the applications follows directly from the geodesic equation, holds along the entire motion.

\begin{definition}
A \emph{turning point} is a point $x_\ast\in D$ such that
\begin{equation}
 P(x_\ast)=0\,,\qquad P'(x_\ast)\neq0\,,
\end{equation}
that is, a simple root of $P$ lying in the coordinate domain.
\end{definition}

The name is justified by the behavior of the motion near $x_\ast$. Expanding $P(x)\simeq P'(x_\ast)(x-x_\ast)$, the parameter needed to reach $x_\ast$ from a nearby point $x_0$,
\begin{equation}
 \sigma_\ast-\sigma_0=\int_{x_0}^{x_\ast}\frac{\dd x}{\sqrt{P(x)}}
 \simeq\int_{x_0}^{x_\ast}\frac{\dd x}{\sqrt{P'(x_\ast)(x-x_\ast)}}\,,
\end{equation}
is finite. At $x_\ast$, the velocity vanishes while, by Eq.~\eqref{GenericSecondOrder}, the acceleration $P'(x_\ast)/2$ does not, and it points into the allowed region. Hence $x(\sigma)\simeq x_\ast+\tfrac14P'(x_\ast)(\sigma-\sigma_\ast)^2$: the coordinate reaches $x_\ast$, stops for an instant, and reverses. A bounded component $[x_1,x_2]$ of $\mathcal A$ with a turning point at each
end and $P>0$ in between therefore gives an oscillation, periodic in $\sigma$, between them, with period
\begin{equation}
 T=2\int_{x_1}^{x_2}\frac{\dd x}{\sqrt{P(x)}}\,,
\end{equation}
which is finite because $1/\sqrt{P}$ is integrable at a simple root. If $P$ has a double root inside the component, the integral diverges there, and the motion approaches the corresponding unstable circular orbit asymptotically
instead (cf.\ the separatrix in Section~\ref{Sec:EffPot}).

Multiple roots behave differently. If $P(x_\ast)=P'(x_\ast)=0$, then $x(\sigma)=x_\ast$ is itself a solution of both Eqs.~\eqref{GenericFirstIntegral} and \eqref{GenericSecondOrder}: a motion at constant $x$, such as a circular or spherical orbit. Near a double root, $P(x)\simeq\tfrac12P''(x_\ast)(x-x_\ast)^2$ and the integral for $\sigma_\ast-\sigma_0$ diverges logarithmically, so $x_\ast$ is never reached from a different starting point. Two cases arise.
\begin{itemize}
 \item If $P''(x_\ast)<0$, the point $x_\ast$ is an isolated point of $\mathcal A$: the constant solution is \emph{stable}, and nearby motions with slightly different constants oscillate about it.
 \item If $P''(x_\ast)>0$, the allowed region extends on both sides of $x_\ast$: the constant solution is \emph{unstable}, and there are motions that approach it asymptotically, as $\sigma\to\pm\infty$, without ever reaching it. These separate qualitatively different families of orbits.
\end{itemize}
A triple root marks the transition between the two cases and, for circular orbits, the marginally stable orbit.

Finally, an endpoint of $\mathcal A$ need not be a root of $P$ at all. When the allowed region extends to a horizon or to infinity, the motion crosses the horizon or escapes, and there is no turning point at that end.

In the applications, $P$ is often replaced by $\tilde P=g(x)P$ with $g>0$ on $D$, for instance $r^3\dot r^2$ instead of $\dot r^2$ in Schwarzschild spacetime, to obtain a polynomial. This changes neither the roots in $D$ nor their multiplicities, nor the sign of the function, so the definitions above apply to $\tilde P$ unchanged.

\chapter{Geodesics in Schwarzschild spacetime}\label{Sec:Schgeod}

The Schwarzschild solution \citep{Schwarzschild:1916uq} is, by Birkhoff's theorem, locally the unique spherically symmetric vacuum solution of Einstein's equations \citep[see, e.g.,][]{weinberg1972gravitation,FrolovZelnikov}. In its black-hole interpretation, the exterior region is $r>r_{\rm H}=2M$, where $r_{\rm H}$ denotes the event horizon. The geodesic problem has been studied since the earliest years of general relativity. Classic analyses include those of \citet{Hagihara:1931}, who classified the orbits according to the distribution of the roots of the radial cubic, and of \citet{Darwin:1959,Darwin:1961}, who located the turning points using the rule of signs and the sum of the roots. Several of the algebraic results of this chapter appear, in a different notation, in their work.

The chapter is organized as follows. Section~\ref{Sec:SchwEqs} derives the geodesic equations and reduces the radial motion to a one-dimensional problem. Section~\ref{Sec:EffPot} explains how this problem is solved qualitatively with an effective potential, and Section~\ref{Sec:SchwTimelikePot} applies the method to massive particles. Section~\ref{Sec:SchwVieta} obtains the same classification from Vieta's formulae, and Section~\ref{Sec:SchwNull} treats photons using both methods.

\section{Metric, geodesic equations, and constants of motion}\label{Sec:SchwEqs}

In Schwarzschild coordinates and geometrized units, the line element is
\begin{equation}\label{Schwmetric}
 \dd s^2=
 -f(r)\,\dd t^2
 +\frac{\dd r^2}{f(r)}
 +r^2\left(\dd\theta^2+\sin^2\theta\,\dd\varphi^2\right)\,,
 \qquad
 f(r):=1-\frac{2M}{r}\,,
\end{equation}
where $M>0$. Let $\lambda$ be the proper time for a massive particle and an affine parameter for a photon, and denote derivatives with respect to $\lambda$ by a dot. With a prime denoting $\dd/\dd r$, so that $f'=2M/r^2$, the non-vanishing Christoffel symbols are
\begin{gather}
 \Gamma^t_{tr}=\frac{f'}{2f}\,,\qquad
 \Gamma^r_{tt}=\frac{ff'}{2}\,,\qquad
 \Gamma^r_{rr}=-\frac{f'}{2f}\,,\qquad
 \Gamma^r_{\theta\theta}=-rf\,,\qquad
 \Gamma^r_{\varphi\varphi}=-rf\sin^2\theta\,,\nonumber\\
 \Gamma^\theta_{r\theta}=\Gamma^\varphi_{r\varphi}=\frac1r\,,\qquad
 \Gamma^\theta_{\varphi\varphi}=-\sin\theta\cos\theta\,,\qquad
 \Gamma^\varphi_{\theta\varphi}=\cot\theta\,,
\end{gather}
and the geodesic equation $\ddot x^\mu+\Gamma^\mu_{\nu\rho}\dot x^\nu\dot x^\rho=0$ gives one equation for each coordinate:
\begin{align}
 &\ddot t+\frac{f'}{f}\,\dot t\,\dot r=0\,,\label{SchwGeodT}\\
 &\ddot r+\frac{ff'}{2}\,\dot t^2-\frac{f'}{2f}\,\dot r^2
 -rf\left(\dot\theta^2+\sin^2\theta\,\dot\varphi^2\right)=0\,,\label{SchwGeodR}\\
 &\ddot\theta+\frac2r\,\dot r\,\dot\theta-\sin\theta\cos\theta\,\dot\varphi^2=0\,,\label{SchwGeodTheta}\\
 &\ddot\varphi+\frac2r\,\dot r\,\dot\varphi+2\cot\theta\,\dot\theta\,\dot\varphi=0\,.\label{SchwGeodPhi}
\end{align}

\paragraph{Planar motion.} Because of spherical symmetry, we may rotate the coordinates so that the initial position and velocity of the particle lie in the equatorial plane, $\theta=\pi/2$ and $\dot\theta=0$ at $\lambda=0$. Equation~\eqref{SchwGeodTheta} then gives $\ddot\theta=0$ at $\lambda=0$, and $\theta(\lambda)=\pi/2$ solves it for all $\lambda$; by the uniqueness of solutions, the particle never leaves the plane. We set $\theta=\pi/2$ from now on.

\paragraph{Constants of motion.} With $\theta=\pi/2$, Eqs.~\eqref{SchwGeodT} and \eqref{SchwGeodPhi} are total derivatives,
\begin{equation}
 \frac{\dd}{\dd\lambda}\left(f\,\dot t\right)=0\,,
 \qquad
 \frac{\dd}{\dd\lambda}\left(r^2\dot\varphi\right)=0\,,
\end{equation}
as one checks using $\dd f/\dd\lambda=f'\dot r$. They integrate to
\begin{align}
 \frac{\dd t}{\dd\lambda}
 &=\frac{\varepsilon}{f(r)}\,,\label{tschweq}\\
 \frac{\dd\varphi}{\dd\lambda}
 &=\frac{L}{r^2}\,,\label{phischweq}
\end{align}
where $\varepsilon$ and $L$ are integration constants. This is no accident: they are the conserved quantities $\varepsilon=-g_{\mu\nu}(\partial_t)^\mu\dot x^\nu$ and $L=g_{\mu\nu}(\partial_\varphi)^\mu\dot x^\nu$ associated with the Killing vectors $\partial_t$ and $\partial_\varphi$, reflecting the fact that the metric depends on neither $t$ nor $\varphi$. The minus sign makes $\varepsilon$ positive for future-directed motion outside the horizon.

For a massive particle, far from the black hole $\dot t$ reduces to the Lorentz factor of the particle relative to the static observers, so $\varepsilon$ is the energy per unit rest mass measured at infinity, rest energy included: a particle with $\varepsilon<1$ cannot reach infinity, and one with $\varepsilon=1$ reaches it with zero velocity. Similarly, $L$ is the angular momentum per unit rest mass, reducing to the Newtonian $r^2\dot\varphi$ in the weak-field limit. For a photon, the affine parameter is defined only up to $\lambda\to\alpha\lambda+\beta$, under which $\varepsilon$ and $L$ are both multiplied by $1/\alpha$. Only their ratio
\begin{equation}
 b:=\frac{L}{\varepsilon}
\end{equation}
is independent of this convention. It is the \emph{impact parameter}: the distance between the center and the straight line along which the photon travels far from the black hole.

\paragraph{The radial equation.} Substituting Eqs.~\eqref{tschweq} and \eqref{phischweq} into Eq.~\eqref{SchwGeodR} leaves a second-order equation for $r(\lambda)$ alone,
\begin{equation}\label{SchwRadialSecond}
 \ddot r+\frac{f'\left(\varepsilon^2-\dot r^2\right)}{2f}
 -\frac{fL^2}{r^3}=0\,.
\end{equation}
We need not integrate it, however, because the line element itself already provides a first integral. Along the geodesic,
\begin{equation}\label{constrlambda}
 -g_{\mu\nu}\dot x^\mu\dot x^\nu=\kappa\,,
 \qquad
 \kappa=
 \begin{cases}
 1,&\text{timelike geodesics}\,,\\
 0,&\text{null geodesics}\,,
 \end{cases}
\end{equation}
which for $\kappa=1$ is the statement that $\lambda$ is proper time, and for $\kappa=0$ that the curve is null. Inserting the metric and the two constants of motion gives
\begin{equation}\label{SchwRadialFirst}
 \left(\frac{\dd r}{\dd\lambda}\right)^2
 =\varepsilon^2-
 f(r)
 \left(\kappa+\frac{L^2}{r^2}\right)\,,
\end{equation}
or, in the form of an energy equation,
\begin{equation}\label{SchwEnergyEquation}
 \frac12\,\dot r^2+V_\kappa(r)
 =\frac{\varepsilon^2}{2}\,,
 \qquad
 V_\kappa(r):=\frac12\,
 f(r)
 \left(\kappa+\frac{L^2}{r^2}\right)\,.
\end{equation}
Equation~\eqref{SchwRadialFirst} is indeed a first integral of Eq.~\eqref{SchwRadialSecond}: using it to eliminate $\varepsilon^2-\dot r^2$ from Eq.~\eqref{SchwRadialSecond} gives
\begin{equation}\label{SchwRadialForce}
 \ddot r
 =-\frac{\dd V_\kappa}{\dd r}
 =-\frac{\kappa M}{r^2}+\frac{L^2}{r^3}
 -\frac{3ML^2}{r^4}\,.
\end{equation}
The two equations are not quite equivalent. Equation~\eqref{SchwEnergyEquation} tells us where the motion is allowed, $V_\kappa(r)\leq\varepsilon^2/2$, and where it turns back, but at a radius where $\dot r=0$ it cannot tell whether the particle turns back or stays there. That information is in Eq.~\eqref{SchwRadialForce}: a circular orbit, $\dot r=0$ for all $\lambda$, requires $\ddot r=0$ as well, that is, $V_\kappa'(r)=0$. This is the origin of the double-root condition used throughout the book.

The analogy with a one-dimensional Newtonian particle in the potential $V_\kappa$ is exact at the level of the radial equation, but only there: $\lambda$ is proper time or an affine parameter, not the time $t$ measured at infinity, and the term $-3ML^2/r^4$ has no Newtonian counterpart. It is this term that makes the potential turn over at small radii and allows capture.

\paragraph{Remark.} Nothing in this section used the explicit form of $f(r)$ before Eq.~\eqref{SchwRadialForce}. For any metric of the form~\eqref{Schwmetric} with a general function $f(r)$, the motion is planar, $\varepsilon=f\,\dot t$ and $L=r^2\dot\varphi$ are conserved, and the radial motion obeys
\begin{equation}
 \dot r^2=\varepsilon^2-f(r)\left(\kappa+\frac{L^2}{r^2}\right)\,,
 \qquad
 \ddot r=-\frac{\dd V_\kappa}{\dd r}\,,
 \qquad
 V_\kappa=\frac12\,f(r)\left(\kappa+\frac{L^2}{r^2}\right)\,.
\end{equation}
When $f$ is a rational function of $r$, multiplying by a suitable power of $r$ turns the first integral into a polynomial equation, and the method of this book applies. For the Reissner--Nordstr\"om metric, $f=1-2M/r+Q^2/r^2$, the radial polynomial is quartic for both timelike and null geodesics. For the Schwarzschild--de~Sitter metric, $f=1-2M/r-\Lambda r^2/3$, the null radial polynomial is the Schwarzschild cubic with $\varepsilon^2$ replaced by $\varepsilon^2+\Lambda L^2/3$, so the cosmological constant leaves the photon sphere at $r=3M$ unchanged (Chapter~\ref{Sec:SdS}).

\section{The effective-potential method}\label{Sec:EffPot}

Equation~\eqref{SchwEnergyEquation} has exactly the form of the energy equation of a Newtonian particle of unit mass moving on a line in the potential $V_\kappa(r)$, with total energy $\varepsilon^2/2$. This observation makes it possible to read off the qualitative behavior of every radial motion from a single graph, without solving any equations. The method, familiar from Newtonian celestial mechanics, goes as follows; it is illustrated in Fig.~\ref{SchwVschematic}.
\begin{enumerate}
 \item Draw the graph of $V_\kappa(r)$ for the given value of $L$, and the horizontal line at height $\varepsilon^2/2$. Since $\varepsilon$ is conserved, the particle stays on this line for its whole history.
 \item Since $\tfrac12\dot r^2=\varepsilon^2/2-V_\kappa(r)\geq0$, the particle can only be where the graph lies on or below the line. These are the \emph{allowed regions}. Where the graph lies above the line, the region is classically forbidden: the particle would need negative kinetic energy to be there, and the potential acts as a \emph{barrier}.
 \item Where the line crosses the graph, $\dot r=0$. At a crossing with $V_\kappa'\neq0$, the acceleration $\ddot r=-V_\kappa'$ does not vanish and points back into the allowed region: this is a \emph{turning point}, a simple root of $\varepsilon^2-2V_\kappa$ in the sense of Section~\ref{Sec:TurningPoints}. The connected components of the allowed region, bounded by turning points, by the horizon or by infinity, are the \emph{branches} of the motion. A particle stays on one branch for its entire exterior history.
 \item Where the line is tangent to the graph at a stationary point $r_c$, so that $V_\kappa(r_c)=\varepsilon^2/2$ and $V_\kappa'(r_c)=0$, the particle can stay at $r=r_c$ forever: this is a \emph{circular orbit}. At a minimum of $V_\kappa$ the circular orbit is \emph{stable}: a small change of energy produces small radial oscillations about $r_c$. At a maximum it is \emph{unstable}: a small perturbation makes the particle roll off the top of the barrier, and particles with exactly this energy approach $r_c$ asymptotically without ever reaching it. These correspond to the double roots of Section~\ref{Sec:TurningPoints}.
 \item Finally, the behavior at the ends of the domain decides whether a branch reaches the horizon or infinity. Since $f(r_{\rm H})=0$, the potential vanishes at the horizon, so a neighborhood of the horizon is always allowed. At large radii, $V_\kappa$ tends to $\kappa/2$, so a massive particle can reach infinity only if $\varepsilon^2\geq1$.
\end{enumerate}
The qualitative classification of the radial motion therefore reduces to three pieces of information: the positions of the stationary points of $V_\kappa$, the values of $V_\kappa$ there, and its asymptotic value. The method is powerful and intuitive, and its results will serve as a benchmark for the algebraic approach of Section~\ref{Sec:SchwVieta}. It works so well here because the energy enters the radial equation only through the constant $\varepsilon^2/2$, while the potential depends on $L$ alone: for each value of $L$ there is a single curve, and all energies are dealt with by moving a horizontal line up and down.

\begin{figure}[htbp]
\centering
\includegraphics{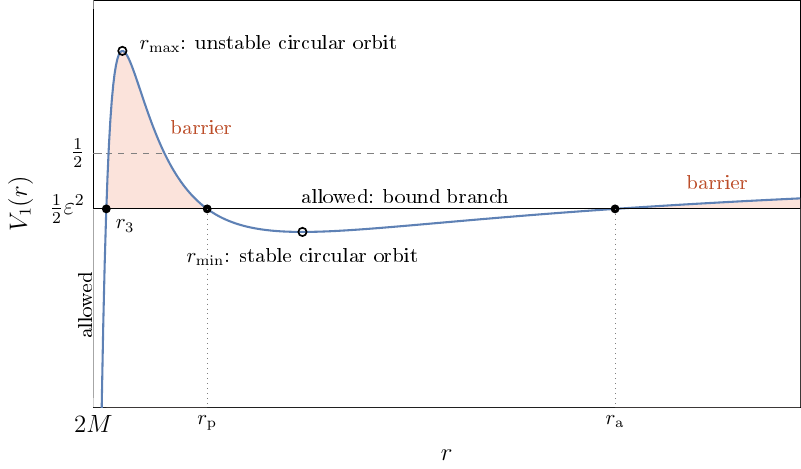}
 \caption{The effective-potential method, for a timelike geodesic with $L=4.3M$. The horizontal line is the energy level $\varepsilon^2/2$; the potential barriers, where $V_1>\varepsilon^2/2$, are shaded. The line meets the graph at three turning points, which bound two branches: a horizon-connected branch, $2M<r\leq r_3$, and a bound branch, $r_{\rm p}\leq r\leq r_{\rm a}$. Raising or lowering the line changes the turning points. When the line touches the graph at the maximum $r_{\max}$ or the minimum $r_{\min}$ (open circles), the particle can move on an unstable or a stable circular orbit, respectively.}
 \label{SchwVschematic}
\end{figure}

\section{Timelike geodesics: effective-potential analysis}\label{Sec:SchwTimelikePot}

For $\kappa=1$, the effective potential is
\begin{equation}\label{Schwpotentialmassive}
 V_1(r)=\frac12\left(1+\frac{L^2}{r^2}\right)
 \left(1-\frac{2M}{r}\right)\,.
\end{equation}
It vanishes at the horizon and approaches $1/2$ from below as $r\to\infty$. Following the method of Section~\ref{Sec:EffPot}, we first find its stationary points, which are the circular orbits, and then classify the motion.

\subsection{Stationary points and circular orbits}\label{Sec:SchwCircular}

The first two derivatives of the potential are
\begin{align}
 V_1'(r)&=\frac{M}{r^2}-\frac{L^2}{r^3}
 +\frac{3ML^2}{r^4}\,,\\
 V_1''(r)&=-\frac{2M}{r^3}+\frac{3L^2}{r^4}
 -\frac{12ML^2}{r^5}\,.
\end{align}
Stationary points satisfy $Mr^2-L^2r+3ML^2=0$, and are
\begin{equation}\label{SchwExtrema}
 r_{\max}
 =\frac{L^2}{2M}
 \left(1-\sqrt{1-\frac{12M^2}{L^2}}\right)\,,
 \qquad
 r_{\min}
 =\frac{L^2}{2M}
 \left(1+\sqrt{1-\frac{12M^2}{L^2}}\right)\,.
\end{equation}
For $L^2<12M^2$ there are no stationary points and the potential increases monotonically in the exterior region. For $L^2>12M^2$, the potential has a maximum at $r_{\max}$ and a minimum at $r_{\min}>r_{\max}$. As $L$ grows, $r_{\max}$ decreases toward $3M$, while $r_{\min}$ grows without bound, approximately as $L^2/M$, which is the radius of a Newtonian circular orbit with angular momentum $L$.

Each stationary point is the radius of a circular orbit, for the energy that puts the horizontal line at the level of the stationary point. It is instructive to reverse the logic and ask which constants of motion give a circular orbit at a prescribed radius $r_c$. The two conditions $V_1(r_c)=\varepsilon^2/2$ and $V_1'(r_c)=0$ give
\begin{equation}\label{SchwCircularConstants}
 L^2=\frac{Mr_c^2}{r_c-3M}\,,
 \qquad
 \varepsilon^2=\frac{(r_c-2M)^2}{r_c(r_c-3M)}\,,
\end{equation}
which are plotted in Fig.~\ref{SchwCircularFig}. Circular timelike orbits therefore exist only for $r_c>3M$. As $r_c\to3M$, both $L$ and $\varepsilon$ diverge: the orbiting particle approaches the speed of light, and the circular orbit approaches the photon sphere of Section~\ref{Sec:SchwNull}. Substituting Eq.~\eqref{SchwCircularConstants} into $V_1''$ gives
\begin{equation}
 V_1''(r_c)=\frac{M\,(r_c-6M)}{r_c^3\,(r_c-3M)}\,,
\end{equation}
so circular orbits are unstable for $3M<r_c<6M$, where they sit at the maximum $r_{\max}$, and stable for $r_c>6M$, where they sit at the minimum $r_{\min}$.

\paragraph{The innermost stable circular orbit.} Both constants in Eq.~\eqref{SchwCircularConstants} are minimal at $r_c=6M$. At $L^2=12M^2$ the maximum and the minimum of the potential merge into an inflection point, at $r=6M$. Since $r_{\min}$ decreases toward $6M$ as $L^2$ decreases toward $12M^2$, this radius is the smallest one at which a stable circular orbit exists. It is therefore called the \emph{innermost stable circular orbit} (ISCO):
\begin{equation}
 r_{\rm ISCO}=6M\,,\qquad
 L_{\rm ISCO}^2=12M^2\,,\qquad
 \varepsilon_{\rm ISCO}^2=\frac89\,.
\end{equation}
A particle on the ISCO has lost the fraction $1-\sqrt{8/9}\simeq5.7\%$ of its rest energy, as measured from infinity. This is the efficiency with which matter spiraling slowly through a sequence of circular orbits, as in a thin accretion disk, radiates its rest energy before plunging into the black hole.

\paragraph{The marginally bound orbit.} The circular orbit with $\varepsilon=1$ has the energy of a particle at rest at infinity. Equation~\eqref{SchwCircularConstants} gives $\varepsilon^2=1$ for $r_c=4M$, so the \emph{marginally bound} circular orbit is
\begin{equation}
 r_{\rm mb}=4M\,,\qquad L_{\rm mb}=4M\,,\qquad
 \varepsilon_{\rm mb}=1\,.
\end{equation}
Since $4M<6M$, it is unstable. In terms of the potential, for $L=4M$ the maximum lies exactly at the asymptotic value $1/2$. For $L>4M$ the maximum rises above $1/2$, since $\varepsilon^2(r_c)>1$ for $3M<r_c<4M$, and a particle coming from infinity can be reflected by the barrier; for $12M^2<L^2<16M^2$ the maximum lies below $1/2$.

For fixed $L$ such that $L^2 > 12M^2$, a particle with energy exactly $\varepsilon^2=2V_{\max}$ does not reach a turning point on the barrier: it approaches the unstable circular orbit at $r_{\max}$ asymptotically, winding around it infinitely many times. These asymptotic orbits form the \emph{separatrix}, the boundary between qualitatively different motions. When the unstable orbit lies between $4M$ and $6M$, so that $2V_{\max}<1$, the separatrix divides bound orbits from orbits that plunge into the black hole. When it lies between $3M$ and $4M$, so that $2V_{\max}>1$, it divides scattering from capture. In the plane of the constants of motion of Fig.~\ref{SchwParamSpace}, the separatrix is the red curve; in terms of the parameters of a bound orbit, it is the line $p=6+2e$ derived in Section~\ref{Sec:SchwSeparatrix}. Orbits close to the separatrix spend a long time near the unstable circular orbit before moving away, and display the familiar zoom-whirl behavior \citep{Darwin:1961,LevinPerezGiz:2008}; an example is shown in Fig.~\ref{SchwOrbits}(b).

The four qualitatively different shapes of the potential, with no stationary points, with an inflection point at the ISCO, with a maximum at the asymptotic value $1/2$, and with a maximum above it, are illustrated in Fig.~\ref{SchwV}.

\begin{figure}[htbp]
\centering
\includegraphics{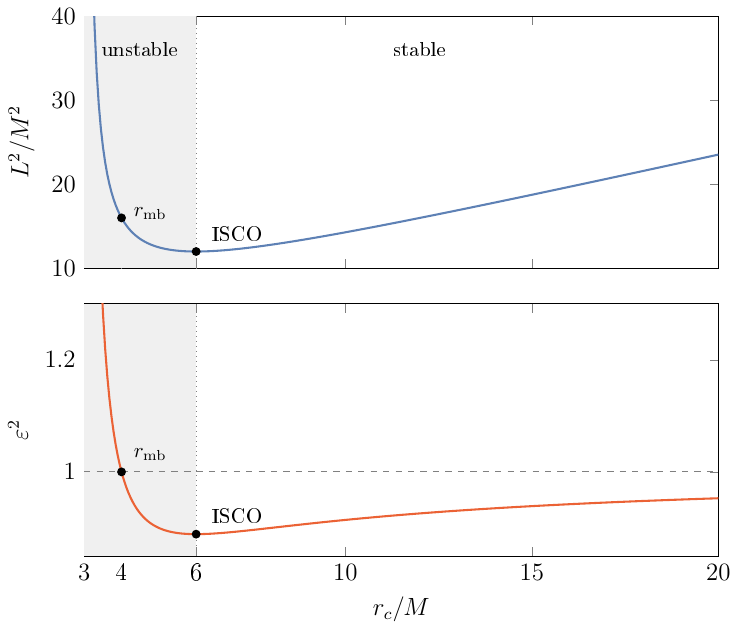}
 \caption{Angular momentum (top) and energy (bottom) of circular timelike orbits as functions of their radius, Eq.~\eqref{SchwCircularConstants}. Both diverge as $r_c\to3M$ and have their minimum at the ISCO, $r_c=6M$. Orbits in the shaded region, $3M<r_c<6M$, are unstable. The marginally bound orbit $r_{\rm mb}=4M$ has $\varepsilon^2=1$: circular orbits inside it have $\varepsilon>1$ and are unbound.}
 \label{SchwCircularFig}
\end{figure}

\begin{figure}[htbp]
\centering
 \includegraphics{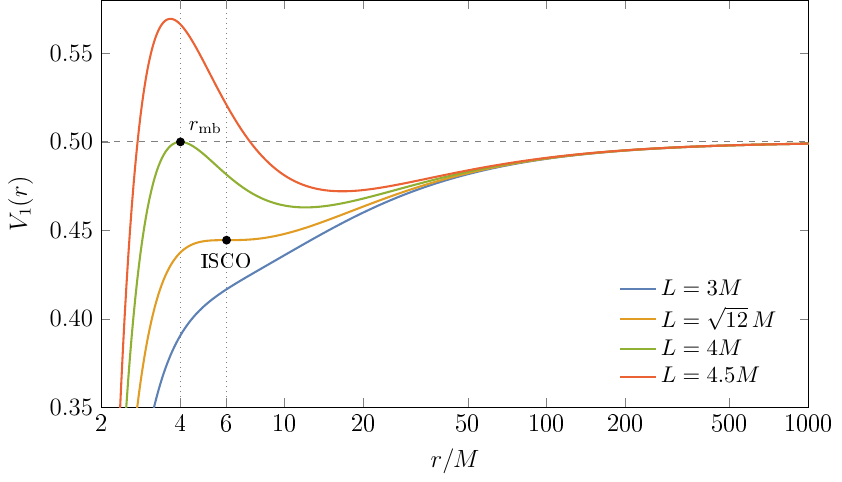}
 \caption{Effective potential $V_1(r)$ for timelike geodesics, Eq.~\eqref{Schwpotentialmassive}, for four values of the angular momentum, in the exterior region $r>2M$. Motion with energy $\varepsilon$ is allowed where $V_1(r)\leq\varepsilon^2/2$. The dashed line is the asymptotic value $V_1\to1/2$, which corresponds to $\varepsilon=1$ and separates bound from unbound motion. For $L=3M$ the potential increases monotonically and there are no circular orbits. For $L=\sqrt{12}\,M$ the maximum and minimum merge into an inflection point at the innermost stable circular orbit, $r_{\rm ISCO}=6M$, where $\varepsilon^2=8/9$. For $L=4M$ the maximum lies exactly at $V_1=1/2$, at the marginally bound radius $r_{\rm mb}=4M$: a particle falling from rest at infinity approaches this unstable circular orbit asymptotically. For $L=4.5M$ the maximum rises above $1/2$, so particles coming from infinity with energy slightly above $\varepsilon=1$ are scattered by the potential barrier, while the minimum gives a stable circular orbit.}
 \label{SchwV}
\end{figure}

\subsection{Classification of the motion}\label{Sec:SchwClassification}

Whenever the extrema exist, that is, for $L^2>12M^2$, let
\begin{equation}
 V_{\max}:=V_1(r_{\max})\,,
 \qquad
 V_{\min}:=V_1(r_{\min})
\end{equation}
denote the height of the potential barrier and the bottom of the potential well. Comparing the energy level $\varepsilon^2/2$ with $V_{\max}$, $V_{\min}$, and the asymptotic value $1/2$ gives all the generic exterior motions, summarized in Table~\ref{tab:SchwPotentialClassification}. The same information is displayed graphically in the plane of the constants of motion in Fig.~\ref{SchwParamSpace}.

\begin{table}[htbp]
\centering
\small
\begin{tabular}{@{}lll@{}}
\toprule
Angular momentum & Energy range & Exterior radial motion\\
\midrule
$L^2<12M^2$ & $\varepsilon^2\geq1$ & no turning point\\
 & $\varepsilon^2<1$ & horizon-connected branch only\\[2pt]
$12M^2<L^2<16M^2$
 & $\varepsilon^2\geq1$
 & no turning point\\
 & $2V_{\max}<\varepsilon^2<1$
 & horizon-connected branch only\\
 & $2V_{\min}<\varepsilon^2<2V_{\max}$
 & horizon-connected and bound branches\\
 & $\varepsilon^2<2V_{\min}$
 & horizon-connected branch only\\[2pt]
$L^2>16M^2$
 & $\varepsilon^2>2V_{\max}$ & no turning point\\
 & $1<\varepsilon^2<2V_{\max}$
 & horizon-connected and scattering branches\\
 & $2V_{\min}<\varepsilon^2<1$
 & horizon-connected and bound branches\\
 & $\varepsilon^2<2V_{\min}$
 & horizon-connected branch only\\
\bottomrule
\end{tabular}
\caption{Generic timelike motions in Schwarzschild spacetime. Here $V_{\max}$ and $V_{\min}$ are the values of the potential at its maximum and minimum. For $12M^2<L^2<16M^2$ the whole potential lies below its asymptotic value $1/2$: an energy level above the barrier but with $\varepsilon^2<1$ still meets the potential at large radius, so the particle has an apoapsis, which moves to infinity as $\varepsilon^2\to1$. The boundary cases $\varepsilon^2=2V_{\max}$ and $\varepsilon^2=2V_{\min}$ correspond to circular orbits and to orbits approaching them asymptotically.}
\label{tab:SchwPotentialClassification}
\end{table}

\begin{figure}[htbp]
\centering
\includegraphics{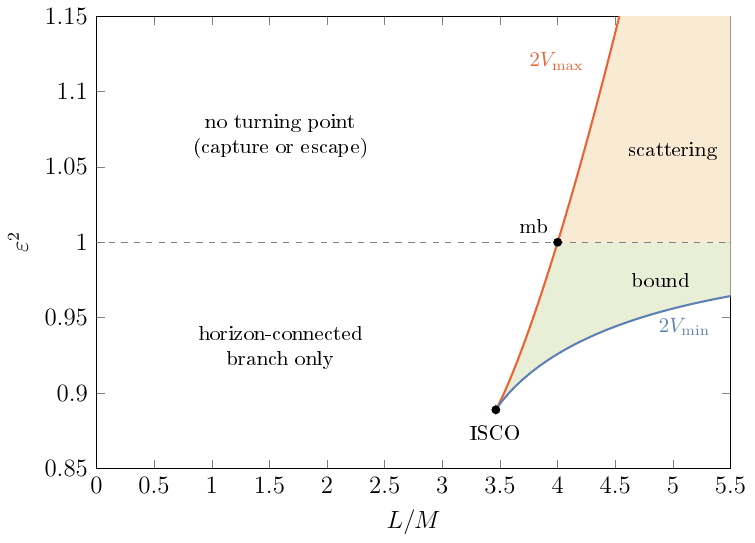}
 \caption{Table~\ref{tab:SchwPotentialClassification} in the plane of the constants of motion. The red and blue curves are the energies $2V_{\max}$ and $2V_{\min}$ of the unstable and stable circular orbits, parametrized by $r_c$ through Eq.~\eqref{SchwCircularConstants}; they meet in a cusp at the ISCO and cross the line $\varepsilon^2=1$ at the marginally bound orbit (mb). In the green region the particle may move on a bound branch, in the orange region on a scattering branch; in both cases a disconnected horizon-connected branch exists as well. There is no turning point when $\varepsilon^2$ lies above both the red curve and the line $\varepsilon^2=1$. In the rest of the diagram only a horizon-connected branch exists; for $\sqrt{12}\,M<L<4M$, this includes the strip between the red curve and the line $\varepsilon^2=1$, where the particle passes over the barrier but still has an apoapsis. Crossing either curve changes the number of real roots of the radial polynomial: the curves are the zero set of its discriminant, Eq.~\eqref{SchwDisc}.}
 \label{SchwParamSpace}
\end{figure}

The turning points bounding a branch are called the \emph{periapsis} $r_{\rm p}$, the smallest radius the particle reaches, and the \emph{apoapsis} $r_{\rm a}$, the largest. Three kinds of branches occur; examples are shown in Fig.~\ref{SchwOrbits}.
\begin{description}
 \item[Bound branch,] $r_{\rm p}\leq r\leq r_{\rm a}$. The particle oscillates radially between periapsis and apoapsis, while it orbits the black hole. This is the relativistic counterpart of a Keplerian ellipse, although the orbit does not close, as its periapsis precesses.
 \item[Scattering branch,] $r\geq r_{\rm p}$. The particle comes in from infinity, reaches the periapsis, and escapes back to infinity. This requires $\varepsilon\geq1$.
 \item[Horizon-connected branch,] $2M<r\leq r_{\rm a}$. There is no periapsis: the allowed region extends down to the horizon. A particle on this branch, moving outward, rises to the apoapsis $r_{\rm a}$, turns back, and falls through the horizon. When there is no turning point at all, the branch extends from the horizon to infinity, and the particle either plunges in from infinity or, when followed backward in time, escapes to it.
\end{description}
Falling through the horizon takes a finite proper time, but an infinite coordinate time. Near $r=2M$, Eq.~\eqref{SchwRadialFirst} gives $\dot r^2\to\varepsilon^2>0$, so the proper time $\int\dd r/|\dot r|$ needed to reach the horizon is finite. By contrast, for an infalling particle Eq.~\eqref{tschweq} gives
\begin{equation}
 \frac{\dd t}{\dd r}=\frac{\dot t}{\dot r}
 \simeq-\frac{1}{f(r)}\,,
\end{equation}
whose integral diverges logarithmically as $r\to2M$. A distant observer never sees the particle cross the horizon, while the particle itself crosses it at a finite value of its proper time. The divergence of $t$ reflects the failure of Schwarzschild coordinates at the horizon, not a physical singularity.

\begin{figure}[htbp]
\centering
\includegraphics{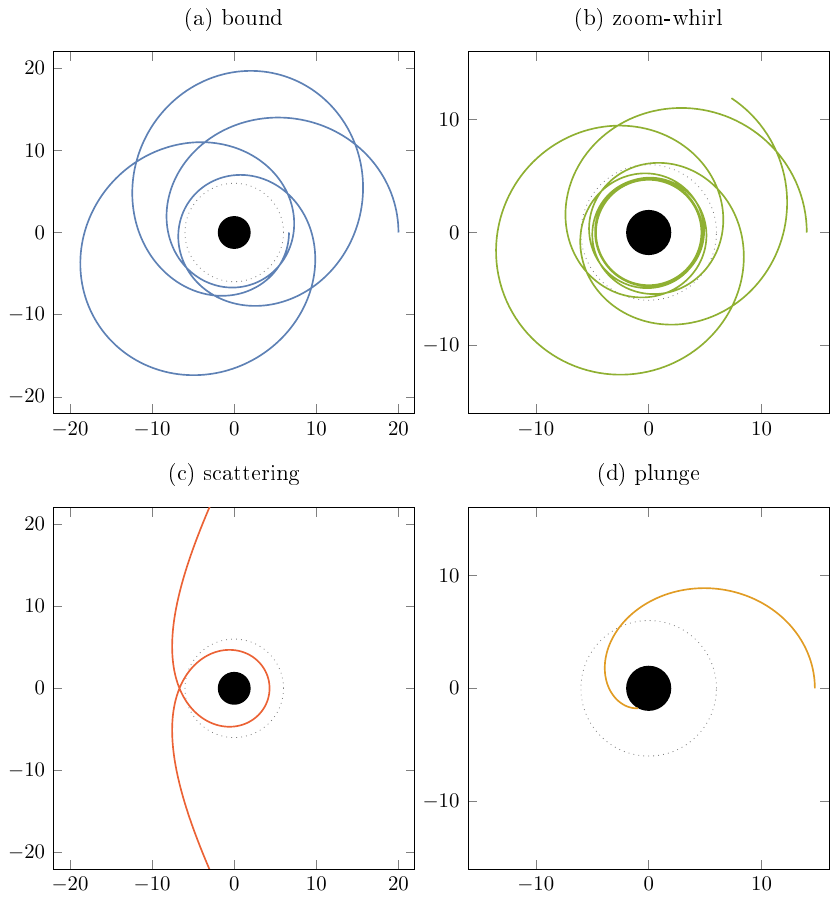}
 \caption{Timelike geodesics in the orbital plane, with coordinates $x=r\cos\varphi$, $y=r\sin\varphi$ in units of $M$. The black disk is the black hole, $r<2M$, and the dotted circle is the ISCO. (a)~Bound orbit with $r_{\rm p}=20M/3$ and $r_{\rm a}=20M$, the orbit of Example~\ref{Ex:SchwBound}; the periapsis advances by about $216^\circ$ per radial period. (b)~Zoom-whirl orbit with eccentricity $1/2$ just outside the separatrix: the particle whirls several times near $r_{\rm p}\simeq4.67M$ before zooming out to $r_{\rm a}=14M$. (c)~Scattering orbit with $\varepsilon=1.02$ and $r_{\rm p}=4.3M$, close to the capture threshold: the particle winds once around the black hole before escaping. (d)~Plunge from the apoapsis $r_{\rm a}\simeq14.7M$, where the radial velocity vanishes, on the horizon-connected branch with $L=3M$ and $\varepsilon^2=0.9$: since $L^2<12M^2$ the potential has no barrier, and the particle spirals into the black hole.}
 \label{SchwOrbits}
\end{figure}

\section{Timelike geodesics from Vieta's formulae}\label{Sec:SchwVieta}

We now derive the same classification by a different route, working with the roots of the radial polynomial instead of the shape of the potential. In Schwarzschild spacetime, the two methods are equivalent, and comparing them is the best way to see what Vieta's formulae achieve. In Kerr spacetime, by contrast, the effective-potential method largely loses its power. The radial function $R(r)$ depends on the energy in a way that cannot be separated from the rest, as $\varepsilon^2/2-V(r)$ can in Schwarzschild spacetime, and it also depends on Carter's constant, which ties the radial motion to the polar one. Apart from motion in the equatorial plane, there is no single one-dimensional potential whose graph can be compared with a horizontal energy line. The algebraic approach needs no such separation: it only requires the radial and polar functions to be polynomials, which they are. It is in Kerr spacetime, therefore, that Vieta's formulae will be used to their full power.

Multiplying Eq.~\eqref{SchwRadialFirst} by $r^3$ gives the radial polynomial
\begin{equation}\label{SchwTimelikePolynomial}
 P(r):=r^3\dot r^2
 =(\varepsilon^2-1)r^3+2Mr^2-L^2r+2ML^2\,.
\end{equation}
Since $r^3>0$, $P$ has the same sign as $\dot r^2$ and the same positive roots, with the same multiplicities, so the results of Section~\ref{Sec:TurningPoints} apply to it: its simple positive roots are radial turning points, and its double roots are circular orbits. Moreover, every positive root of $P$ lies outside the horizon. For $0<r<2M$ we have $f<0$, hence $V_1<0$ and $P=r^3(\varepsilon^2-2V_1)>0$; in particular, $P(2M)=8M^3\varepsilon^2>0$. In Schwarzschild spacetime, positive roots and exterior roots are therefore the same thing.

For $\varepsilon^2\neq1$, let $r_1,r_2,r_3$ denote the three roots, including complex and negative ones. Vieta's formulae give
\begin{align}
 r_1+r_2+r_3&=-\frac{2M}{\varepsilon^2-1}\,,\label{SchwVieta1}\\
 r_1r_2+r_1r_3+r_2r_3&=-\frac{L^2}{\varepsilon^2-1}\,,\label{SchwVieta2}\\
 r_1r_2r_3&=-\frac{2ML^2}{\varepsilon^2-1}\,.\label{SchwVieta3}
\end{align}
When the roots are real, we label them in decreasing order, $r_1\geq r_2\geq r_3$, so that $r_1$ is the outermost turning point.

\subsection{Unbound motion: \texorpdfstring{$\varepsilon^2>1$}{energy squared > 1}}

Here the product of the roots, Eq.~\eqref{SchwVieta3}, is negative, so the number of negative roots is odd (Section~\ref{Sec:RootPatterns}). It cannot be three: the pairwise products of three negative numbers are positive, whereas Eq.~\eqref{SchwVieta2} requires their sum to be negative. Hence exactly one root, $r_0<0$, is negative. \citet{Darwin:1959} reached the same conclusion from Descartes' rule of signs. The other two have positive product $r_1r_2r_3/r_0$, so they are either a complex-conjugate pair or two positive roots. The admissible generic patterns are therefore:
\begin{enumerate}
 \item one negative root and a complex-conjugate pair, giving no turning point: the particle is captured, or, followed backward in time, escapes;
 \item one negative root and two positive roots $r_{\rm a}<r_{\rm p}$, giving a horizon-connected branch $2M<r\leq r_{\rm a}$ and a scattering branch $r\geq r_{\rm p}$.
\end{enumerate}
At the capture threshold the two positive roots coalesce at a radius $r_c$. Using Vieta's formulae with the roots $r_0,r_c,r_c$ gives
\begin{equation}
 r_0=\frac{2Mr_c}{r_c-4M}\,,
 \qquad
 L^2=\frac{Mr_c^2}{r_c-3M}\,,
 \qquad
 \varepsilon^2=\frac{(r_c-2M)^2}{r_c(r_c-3M)}\,,
\end{equation}
the last two coinciding, as they must, with the circular-orbit constants~\eqref{SchwCircularConstants}. Because $r_0<0$, the critical radius obeys $3M<r_c<4M$. For fixed $\varepsilon>1$, a smaller angular momentum than the critical one gives capture, and a larger one gives scattering, together with the disconnected horizon-connected branch.

\subsection{Marginally bound motion: \texorpdfstring{$\varepsilon^2=1$}{energy squared = 1}}

Equation~\eqref{SchwTimelikePolynomial} becomes
\begin{equation}
 P(r)=r\left(2Mr^2-L^2r+2ML^2\right)\,.
\end{equation}
The discriminant of the quadratic factor is $L^2(L^2-16M^2)$. For $L<4M$ there is no exterior turning point, and a particle released from rest at infinity is captured. At $L=4M$ the quadratic has a double root at $r=4M$, the marginally bound circular orbit. For $L>4M$ it has two positive roots, producing a horizon-connected branch and a parabolic scattering branch.

\subsection{Bound-energy motion: \texorpdfstring{$\varepsilon^2<1$}{energy squared < 1}}

Equations~\eqref{SchwVieta1}--\eqref{SchwVieta3} now have positive right-hand sides. By the parity argument, the number of negative roots is even. Two negative roots and one positive root are impossible: if $r_1=-x$, $r_2=-y$, and $r_3=z$ with $x,y,z>0$, the first two Vieta relations would require simultaneously
\begin{equation}
 z>x+y\,,\qquad
 z<\frac{xy}{x+y}\,,
\end{equation}
which cannot hold, since $xy/(x+y)<x+y$. Hence there are only two generic possibilities:
\begin{enumerate}
 \item one positive root and a complex-conjugate pair, giving only a horizon-connected branch;
  \item three positive roots $r_1>r_2>r_3$, for which
 \begin{equation}
 P(r)=-(1-\varepsilon^2)(r-r_1)(r-r_2)(r-r_3)\,.
 \end{equation}
 The allowed exterior intervals are the horizon-connected branch $2M<r\leq r_3$ and the bound branch $r_{\rm p}=r_2\leq r\leq r_1=r_{\rm a}$.
\end{enumerate}
In the language of the potential, the second case occurs when $2V_{\min}<\varepsilon^2<2V_{\max}$: the three roots are the three intersections of the energy line with the graph in Fig.~\ref{SchwVschematic}.

A double root is a circular orbit. If $r_1=r_2=r_c>r_3$, the bound branch shrinks to a point and the circular orbit is stable; if $r_1>r_2=r_3=r_c$, the barrier between the two branches shrinks to a point and the circular orbit is unstable. When all three roots coincide, Vieta's formulae give
\begin{equation}
 3r_c=\frac{2M}{1-\varepsilon^2}\,,\qquad
 3r_c^2=\frac{L^2}{1-\varepsilon^2}\,,\qquad
 r_c^3=\frac{2ML^2}{1-\varepsilon^2}\,,
\end{equation}
whose unique solution is
\begin{equation}
 r_c=6M\,,\qquad L^2=12M^2\,,\qquad\varepsilon^2=\frac89\,.
\end{equation}
The triple root is the ISCO. Which pair of roots coincides, $r_1=r_2>r_3$ (stable) or $r_1>r_2=r_3$ (unstable), is the algebraic signature of the stability of circular motion, and the triple root is the transition between the two.

\subsection{The discriminant}\label{Sec:SchwDiscriminant}

Which of the generic patterns occurs is decided by the discriminant of the cubic. Applying Eq.~\eqref{CubicDisc} to Eq.~\eqref{SchwTimelikePolynomial} gives
\begin{equation}\label{SchwDisc}
 \Delta_3=4L^2\left[(\varepsilon^2-1)L^4
 -\left(27\varepsilon^4-36\varepsilon^2+8\right)M^2L^2-16M^4\right]\,.
\end{equation}
The zero set of the bracket is the curve of circular orbits in the $(L^2,\varepsilon^2)$ plane: substituting the parametrization~\eqref{SchwCircularConstants} makes it vanish identically. Viewed as a quadratic in $L^2$, the bracket has a double root precisely at $\varepsilon^2=8/9$, where it equals $-\tfrac19(L^2-12M^2)^2$. This is the cusp at which the stable and unstable branches of circular orbits meet, that is, the ISCO. At $\varepsilon^2=1$ the bracket reduces to $M^2(L^2-16M^2)$, giving the marginally bound orbit $L=4M$.

The classification of this section can then be restated compactly. Since every positive root lies outside the horizon, the discriminant alone decides the motion. For $\varepsilon^2<1$, $\Delta_3>0$ means three positive roots and hence a bound branch together with the horizon-connected one, whereas $\Delta_3<0$ means a single horizon-connected branch. For $\varepsilon^2>1$, $\Delta_3>0$ means scattering and $\Delta_3<0$ means capture.

\subsection{The separatrix}\label{Sec:SchwSeparatrix}

Dividing Eq.~\eqref{SchwVieta2} by Eq.~\eqref{SchwVieta3} gives
\begin{equation}\label{SchwReciprocal}
 \frac1{r_1}+\frac1{r_2}+\frac1{r_3}=\frac1{2M}
 \qquad(L\neq0)\,.
\end{equation}
The energy has dropped out, and the null relations of Section~\ref{Sec:SchwNull} give the same result. Every non-radial Schwarzschild geodesic, timelike or null, bound or unbound, therefore has roots whose reciprocals add up to the inverse Schwarzschild radius. In the variable $u=1/r$, this is the sum of the roots of the cubic for $u$, which appears in \citet{Hagihara:1931} and was used by \citet{Darwin:1961} to write the third root of a bound orbit, as we do below.

For a bound orbit, write the periapsis and apoapsis as
\begin{equation}
 r_{\rm p}=\frac{pM}{1+e}\,,\qquad r_{\rm a}=\frac{pM}{1-e}\,,\qquad 0\leq e<1\,,
\end{equation}
where $p$ and $e$ are the relativistic semi-latus rectum and eccentricity. Then $1/r_{\rm p}+1/r_{\rm a}=2/(pM)$, and Eq.~\eqref{SchwReciprocal} yields the third root without solving any equation, $r_3=2pM/(p-4)$. The other two Vieta relations then fix the constants of motion:
\begin{equation}\label{SchwSeparatrixConstants}
 L^2=\frac{p^2M^2}{p-3-e^2}\,,
 \qquad
 1-\varepsilon^2=\frac{(1-e^2)(p-4)}{p\,(p-3-e^2)}\,.
\end{equation}
The bound branch is separated from the horizon-connected branch as long as $r_3<r_{\rm p}$, that is,
\begin{equation}\label{SchwSeparatrix}
 p>6+2e\,.
\end{equation}
\citet{Darwin:1961} obtained this condition, in the form $l>2m(3+e)$ for the latus rectum $l$, from the same three roots. The curve $p=6+2e$ is now called the \emph{separatrix} of Schwarzschild bound motion \citep{CutlerKennefickPoisson:1994}. On it, $r_3$ and $r_{\rm p}$ merge into an unstable circular orbit, and the bound branch becomes the orbit of Section~\ref{Sec:SchwCircular} that approaches this circular orbit asymptotically. For $e=0$ one recovers the ISCO at $r=6M$; as $e\to1$, $p\to8$ and $r_{\rm p}\to4M$, the marginally bound circular orbit.

\section{Null geodesics}\label{Sec:SchwNull}

For $\kappa=0$ the effective potential is
\begin{equation}\label{Schwpotentialnull}
 V_0(r)=\frac{L^2}{2r^2}\left(1-\frac{2M}{r}\right)\,.
\end{equation}
It vanishes at the horizon and at infinity and has a unique exterior maximum at
\begin{equation}
 r_{\rm ph}=3M\,,\qquad
 V_0(r_{\rm ph})=\frac{L^2}{54M^2}\,.
\end{equation}
This is the Schwarzschild \emph{photon sphere}, on which photons can move on circular orbits. Since it sits at a maximum of the potential, the circular null orbit is unstable. As shown in Fig.~\ref{SchwVnull}, the maximum lies at $r=3M$ for every value of $L$, since $L$ only rescales the potential.

A photon coming from infinity is reflected by the barrier if its energy level lies below the maximum, $\varepsilon^2/2<L^2/(54M^2)$, that is, if $|b|>3\sqrt3\,M$. The corresponding critical value of the impact parameter is
\begin{equation}\label{SchwCriticalImpact}
 b_c:=3\sqrt{3}\,M\,.
\end{equation}
Null geodesics arriving from infinity are captured for $|b|<b_c$, scattered for $|b|>b_c$, and asymptotically approach the photon sphere at the critical value; examples are shown in Fig.~\ref{SchwPhotons}. These strong-deflection trajectories were analyzed in detail by \citet{Darwin:1959}. Photons with $|b|$ slightly larger than $b_c$ are the null counterpart of the zoom-whirl orbits of massive particles. They approach the photon sphere, wind around it, and escape again, as the orange trajectory in Fig.~\ref{SchwPhotons} does. The closer $b$ is to $b_c$, the more windings they perform: in the strong-deflection limit, the deflection angle diverges logarithmically,
\begin{equation}
 \hat\alpha(b)\simeq-\ln\!\left(\frac{b}{b_c}-1\right)+\ln\!\left[216\left(7-4\sqrt3\right)\right]-\pi\,,
\end{equation}
so that each additional winding requires $b-b_c$ to decrease by a factor $e^{2\pi}\simeq535$. \citet{Darwin:1959} obtained this result in the equivalent form $b-b_c\simeq3.48\,M\,e^{-\hat\alpha}$, and \citet{Bozza:2002} extended it to general static, spherically symmetric spacetimes. In gravitational lensing, these photons produce an infinite sequence of \emph{relativistic images} of a background source on both sides of the black hole, exponentially fainter and converging toward the angular radius of the photon sphere seen by the observer. \citet{Darwin:1959} already described these \enquote{ghost} images, each fainter than the previous one, and closer to the limiting radius, by the factor $e^{-2\pi}$. \citet{Ohanian:1987} expressed their deflection angles and intensities in terms of elliptic integrals; the name \emph{relativistic images} is due to \citet{VirbhadraEllis:2000}, and the strong-deflection treatment to \citet{Bozza:2002}. Their positions, magnifications, and time delays in Reissner--Nordstr\"om and Kerr spacetimes have been studied by \citet{GhazalePiattella:2025}.

\begin{figure}[htbp]
\centering
\includegraphics{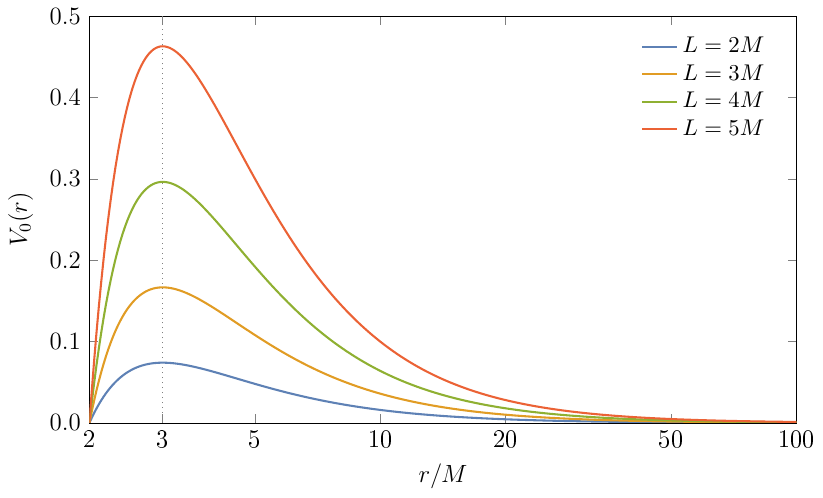}
 \caption{Null effective potential $V_0(r)$, Eq.~\eqref{Schwpotentialnull}, for four values of $L$, in the exterior region $r>2M$. All curves have their maximum at the photon sphere, $r_{\rm ph}=3M$, where $V_0=L^2/(54M^2)$. A photon with energy $\varepsilon$ is reflected by the barrier if $\varepsilon^2/2$ lies below the maximum, that is, if its impact parameter exceeds $b_c=3\sqrt3\,M$.}
 \label{SchwVnull}
\end{figure}

\begin{figure}[htbp]
\centering
\includegraphics{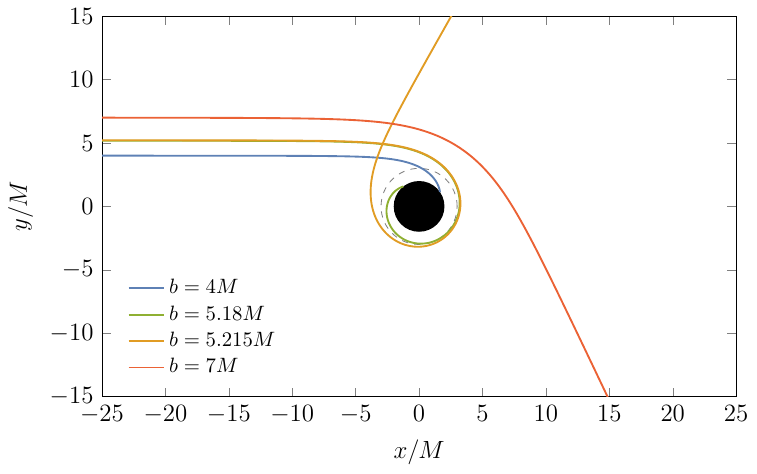}
 \caption{Photons coming from the left with different impact parameters. The black disk is the black hole and the dashed circle the photon sphere, $r=3M$. For $b=4M<b_c$ the photon is captured. For $b$ just below and just above $b_c\simeq5.196M$, the photon winds around the photon sphere before being captured or escaping, respectively. For $b=7M$ it is deflected by about $65^\circ$.}
 \label{SchwPhotons}
\end{figure}

The Vieta analysis makes the same classification especially transparent. Dividing by $\varepsilon^2$, the radial polynomial for null geodesics is
\begin{equation}\label{rschweq2null}
 P_0(r):=\frac{r^3\dot r^2}{\varepsilon^2}
 =r^3-b^2r+2Mb^2\,.
\end{equation}
If its roots are $r_1,r_2,r_3$, then
\begin{align}
 r_1+r_2+r_3&=0\,,\\
 r_1r_2+r_1r_3+r_2r_3&=-b^2\,,\\
 r_1r_2r_3&=-2Mb^2\,.
\end{align}
The product is negative, so the number of negative roots is odd; it cannot be three because the roots add up to zero. Exactly one root is negative, as \citet{Darwin:1959} noted, and in particular, a positive root accompanied by a complex-conjugate pair is impossible. As for timelike geodesics, every positive root lies outside the horizon. The generic alternatives are therefore:
\begin{enumerate}
 \item one negative root and a complex-conjugate pair: no turning point, and the photon is captured;
 \item one negative root and two positive roots: a horizon-connected branch and a scattering branch.
\end{enumerate}
Which alternative occurs is decided by the discriminant of the cubic, Eq.~\eqref{CubicDisc},
\begin{equation}\label{SchwNullDisc}
 \Delta_3=4b^4\left(b^2-27M^2\right)\,,
\end{equation}
which gives complex roots for $|b|<b_c$ and three real roots for $|b|>b_c$. At $|b|=b_c$, the roots are $-6M,3M,3M$, and the positive double root is the unstable circular photon orbit.

For $L=0$, Eq.~\eqref{SchwRadialFirst} reduces to $\dd r/\dd\lambda=\pm\varepsilon$, so radial null geodesics have no turning points in the exterior region.

\section{Worked examples}\label{Sec:SchwExamples}

We close the chapter with two examples. In each, the constants of motion are fixed first, the root pattern is predicted from Vieta's formulae and the discriminant, and the prediction is compared with the numerically computed roots. We set $M=1$, so that all lengths are in units of $M$, and numbers are rounded to the digits shown.

\begin{example}[A Schwarzschild bound orbit]\label{Ex:SchwBound}
Take $p=10$ and $e=1/2$, the orbit of Fig.~\ref{SchwOrbits}a. Then $r_{\rm p}=20/3$ and $r_{\rm a}=20$, and the reciprocal identity~\eqref{SchwReciprocal} gives the third root without solving any equation, $r_3=2p/(p-4)=10/3$. Equations~\eqref{SchwSeparatrixConstants} give $L^2=400/27\simeq14.815$ and $\varepsilon^2=14/15$. Since $p=10>6+2e=7$, the orbit lies on the bound side of the separatrix~\eqref{SchwSeparatrix}. Equation~\eqref{SchwDisc} gives $\Delta_3=(800/243)^2\simeq10.84>0$, so three real roots are expected, and the numerical roots are indeed $r_1=20.000$, $r_2=6.6667$ and $r_3=3.3333$. Since $2<r_3<r_2$, the particle either oscillates in $r_2\leq r\leq r_1$ or, on the disconnected horizon-connected branch, moves in $2<r\leq r_3$.
\end{example}

\begin{example}[A scattered photon]\label{Ex:SchwPhoton}
For $b=6>b_c\simeq5.196$, Eq.~\eqref{SchwNullDisc} gives $\Delta_3=4b^4(b^2-27)=46656>0$. The roots of Eq.~\eqref{rschweq2null} are $4.4534$, $2.3696$ and $-6.8229$: they add up to zero, and their reciprocals add up to $1/2$, as Eq.~\eqref{SchwReciprocal} requires. The photon comes in from infinity, reaches its periapsis at $r=4.4534$, and escapes again.
\end{example}

\chapter{Geodesics in Kerr spacetime}\label{Sec:Kerrgeod}

The Kerr solution \citep{Kerr:1963ud} describes the exterior gravitational field of an uncharged rotating black hole. Unlike the Schwarzschild geometry, it is axially symmetric, so generic geodesics are not planar. Their complete integrability follows from Carter's additional constant of motion \citep{Carter:1968rr,Carter:1968ks}, whose origin in a hidden symmetry, encoded in a rank-two Killing tensor, was identified by \citet{WalkerPenrose:1970}. \citet{Wilkins:1972rs} gave the first general conditions for bound motion and studied the spherical orbits of the extreme black hole; modern analytic formulations may be found in \citet{Schmidt:2002qk}, \citet{DrascoHughes:2004}, and \citet{FujitaHikida:2009}, while a detailed taxonomy of the exterior radial root structures of timelike geodesics was completed by \citet{CompereLiuLong:2022}.

\section{Metric and horizons}

We use the coordinates introduced by \citet{BoyerLindquist:1967}. The line element is
\begin{align}\label{KerrMetric}
 \dd s^2={}&-\left(1-\frac{2Mr}{\rho^2}\right)\dd t^2
 -\frac{4Mar\sin^2\theta}{\rho^2}\,\dd t\,\dd\varphi
 +\frac{\rho^2}{\Delta}\,\dd r^2+\rho^2\,\dd\theta^2\nonumber\\
 &+\left(r^2+a^2+\frac{2Ma^2r\sin^2\theta}{\rho^2}\right)
 \sin^2\theta\,\dd\varphi^2\,,
\end{align}
where
\begin{equation}
 \rho^2:=r^2+a^2\cos^2\theta\,,\qquad
 \Delta:=r^2-2Mr+a^2\,.
\end{equation}
The parameter $M$ is the mass, and $J=aM$ is the angular momentum of the black hole. Reversing the orientation of $\varphi$ reverses the sign of $a$, so we take $a\geq0$. The roots of $\Delta$ are
\begin{equation}
 r_{\rm H}=M+\sqrt{M^2-a^2}\,,\qquad r_{\rm C}=M-\sqrt{M^2-a^2}\,.
\end{equation}
For $a < M$, $r_{\rm H}$ is the event horizon, and $r_{\rm C}$ is the Cauchy horizon. We shall assume $0\leq a<M$ and work in the domain of outer communication, $r>r_{\rm H}$ \citep{Carter:1968rr}.

The Killing vector $\partial_t$ is normalized so that it coincides with the four-velocity of static observers at infinity. An observer at fixed $(r,\theta,\varphi)$ has four-velocity proportional to $\partial_t$, and is therefore possible only where $\partial_t$ is timelike,
\begin{equation}
 g_{tt}=-\left(1-\frac{2Mr}{\rho^2}\right)<0
 \quad\Longleftrightarrow\quad
 r^2-2Mr+a^2\cos^2\theta>0\,.
\end{equation}
Outside the horizon, this condition fails between $r_{\rm H}$ and the \emph{stationary-limit surface}
\begin{equation}
 r_{\rm erg}(\theta)=M+\sqrt{M^2-a^2\cos^2\theta}\,,
\end{equation}
the outer root of $g_{tt}=0$. The \emph{ergoregion} is the region
\begin{equation}\label{Ergoregion}
 r_{\rm H}<r<r_{\rm erg}(\theta)\,,
\end{equation}
in which $\partial_t$ is spacelike. Since $r_{\rm erg}=r_{\rm H}$ on the axis and $r_{\rm erg}=2M$ in the equatorial plane, the ergoregion is a shell that is thickest at the equator and pinches off at the poles (Fig.~\ref{KerrErgoregion}); it disappears for $a=0$. Two features distinguish the stationary-limit surface from the event horizon. First, it is not a one-way membrane: away from the axis, it is a timelike hypersurface, which particles and light can cross in both directions. Second, inside it, no observer can remain at rest with respect to infinity, and the conserved energy $\varepsilon=-p_\mu(\partial_t)^\mu$ of a particle need not be positive, since it is no longer the energy measured by any observer. The inner root of $g_{tt}=0$ lies inside the Cauchy horizon and plays no role here.

% Meridional section of a Kerr black hole (a = 0.9M), coordinates X = sqrt(r^2+a^2) sin(theta), Z = r cos(theta)
\begin{figure}[htbp]
\centering
\includegraphics{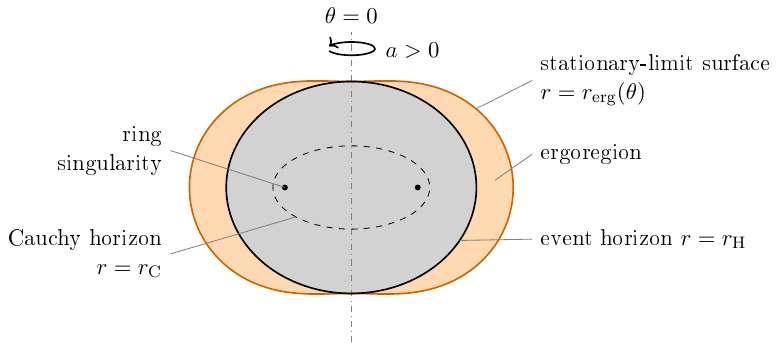}
\caption{Meridional section of a Kerr black hole with $a=0.9M$, drawn in the coordinates $X=\sqrt{r^2+a^2}\,\sin\theta$, $Z=r\cos\theta$, in which surfaces of constant $r$ are oblate spheroids. The black-hole interior $r<r_{\rm H}$ is gray. The ergoregion (orange) lies between the event horizon and the stationary-limit surface, which touch on the rotation axis. The ring singularity at $r=0$, $\theta=\pi/2$ appears as two dots; the inner stationary-limit surface, inside the Cauchy horizon, is not shown.}
\label{KerrErgoregion}
\end{figure}

The rotation of the black hole drags inertial frames. A convenient measure of this effect is the angular velocity, as seen from infinity, of the zero-angular-momentum observers (ZAMOs), whose worldlines are orthogonal to the hypersurfaces $t={\rm const}$; \citet{BardeenPressTeukolsky:1972} call them locally nonrotating observers:
\begin{equation}\label{ZAMOOmega}
 \omega:=-\frac{g_{t\varphi}}{g_{\varphi\varphi}}=\frac{2aMr}{\Sigma^2}\,,
 \qquad
 \Sigma^2:=(r^2+a^2)^2-a^2\Delta\sin^2\theta\,.
\end{equation}
It is positive for $a>0$ and falls off as $2J/r^3$ at large $r$. More generally, an observer moving on a circle with uniform angular velocity $\Omega=\dd\varphi/\dd t$ has four-velocity proportional to $\partial_t+\Omega\,\partial_\varphi$, which is timelike only if $\Omega_{\min}<\Omega<\Omega_{\max}$, where
\begin{equation}\label{OmegaRange}
 \Omega_{\min}=\omega-\frac{\rho^2\sqrt\Delta}{\Sigma^2\sin\theta}\,,
 \qquad
 \Omega_{\max}=\omega+\frac{\rho^2\sqrt\Delta}{\Sigma^2\sin\theta}\,.
\end{equation}
Their product is $g_{tt}/g_{\varphi\varphi}$, so inside the ergoregion $\Omega_{\min}>0$: every observer, and every future-directed causal trajectory, co-rotates with the black hole. As $r\to r_{\rm H}$, $\Delta\to0$ and the interval shrinks to the single value
\begin{equation}
 \Omega_{\rm H}=\omega(r_{\rm H},\theta)=\frac{a}{r_{\rm H}^2+a^2}\,,
\end{equation}
which does not depend on $\theta$. Thus $\Omega_{\rm H}$ is the angular velocity of the event horizon relative to infinity: observers just outside the horizon are forced to rotate with it, and the horizon rotates rigidly. Equivalently, the Killing vector
\begin{equation}
 \chi:=\partial_t+\Omega_{\rm H}\,\partial_\varphi
\end{equation}
is null on $r=r_{\rm H}$ and tangent to the horizon generators, so the event horizon is a Killing horizon of $\chi$. For a geodesic the conserved quantity $-p_\mu\chi^\mu=\varepsilon-\Omega_{\rm H}L_z$ is the one constrained by future orientation at the horizon, Eq.~\eqref{HorizonForwardCondition}. The angular velocity grows monotonically with the spin, from $\Omega_{\rm H}=0$ at $a=0$ toward $1/(2M)$ as $a\to M$.

\section{Separated geodesic equations}

Let $\lambda$ denote proper time for timelike geodesics and an affine parameter for null geodesics, and let $\kappa=1$ or $0$, respectively. The two Killing vectors give
\begin{equation}
 \varepsilon:=-p_t\,,\qquad L_z:=p_\varphi\,.
\end{equation}
For timelike motion these are the specific energy and the axial component of the specific angular momentum. Carter's separation introduces a third constant, $\mathcal Q$ \citep{Carter:1968rr}; see also the detailed study of Kerr geodesics by \citet{Bardeen:1973}. Define
\begin{equation}\label{KerrKDefinition}
 K:=\mathcal Q+(L_z-a\varepsilon)^2\,.
\end{equation}
The first-order geodesic equations are
\begin{align}
 \Delta\rho^2\frac{\dd t}{\dd\lambda}
 &=
 \left[(r^2+a^2)^2-a^2\Delta\sin^2\theta\right]\varepsilon
 -2aMrL_z\,,\label{dtdlambda}\\
 \Delta\rho^2\frac{\dd\varphi}{\dd\lambda}
 &=2aMr\varepsilon+
 \left(\rho^2-2Mr\right)L_z\csc^2\theta\,,\label{dphidlambda}\\
 \rho^4\left(\frac{\dd r}{\dd\lambda}\right)^2&=R(r)\,,
 \label{drdlambda}\\
 \rho^4\left(\frac{\dd\theta}{\dd\lambda}\right)^2&=\Theta(\theta)\,,
 \label{dthetadlambda}
\end{align}
where
\begin{align}
 R(r)
 &:=
 \left[\varepsilon(r^2+a^2)-aL_z\right]^2
 -\Delta\left(\kappa r^2+K\right)\,,\label{KerrRCompact}\\
 \Theta(\theta)
 &:=
 \mathcal Q-\cos^2\theta
 \left[a^2(\kappa-\varepsilon^2)
 +L_z^2\csc^2\theta\right]\,.\label{KerrTheta}
\end{align}
Expanding Eq.~\eqref{KerrRCompact} gives the quartic
\begin{equation}\label{KerrRExpanded}
 R=(\varepsilon^2-\kappa)r^4+2M\kappa r^3
 +\left[a^2(\varepsilon^2-\kappa)-L_z^2-\mathcal Q\right]r^2
 +2MKr-a^2\mathcal Q\,.
\end{equation}
These equations follow from Carter's Hamilton--Jacobi separation \citep{Carter:1968rr}, later extended to a wider class of spacetimes \citep{Carter:1968ks}. Exact solutions can be expressed using elliptic integrals and elliptic functions \citep{FujitaHikida:2009,GrallaLupsasca:2020}.

It is often convenient to introduce Mino time $\gamma$ by
\begin{equation}
 \dd\gamma=\frac{\dd\lambda}{\rho^2}\,.
\end{equation}
Then
\begin{equation}
 \left(\frac{\dd r}{\dd\gamma}\right)^2=R(r)\,,
 \qquad
 \left(\frac{\dd\theta}{\dd\gamma}\right)^2=\Theta(\theta)\,,
\end{equation}
so the radial and polar motions are dynamically decoupled. This parametrization is especially useful for frequencies and generic bound orbits \citep{Mino:2003,DrascoHughes:2004}.

The alternative Carter constant $K$ is manifestly non-negative for causal geodesics:
\begin{equation}\label{KNonnegative}
 K=p_\theta^2+
 \left(\frac{L_z}{\sin\theta}-a\varepsilon\sin\theta\right)^2
 +a^2\kappa\cos^2\theta\geq0\,.
\end{equation}
By contrast, $\mathcal Q$ itself can be negative for null geodesics and for unbound timelike geodesics.

\subsection{Time orientation and frame dragging}

For a future-directed geodesic outside the horizon, Eq.~\eqref{dtdlambda} must be positive along the allowed orbit. Near the horizon, this implies the invariant condition
\begin{equation}\label{HorizonForwardCondition}
 \varepsilon-\Omega_{\rm H} L_z\geq0\,.
\end{equation}
Away from the horizon, the same argument applied to the ZAMOs, whose four-velocity is proportional to $\partial_t+\omega\,\partial_\varphi$, gives the stronger condition
\begin{equation}\label{ZAMOForwardCondition}
 \varepsilon-\omega L_z>0\,,
\end{equation}
of which Eq.~\eqref{HorizonForwardCondition} is the limit $r\to r_{\rm H}$. Outside the ergoregion $\partial_t$ is itself future-directed and timelike, so $\varepsilon>0$. Negative-energy geodesics can therefore occur only inside the ergoregion and, since $\omega>0$ for $a>0$, Eq.~\eqref{ZAMOForwardCondition} requires $L_z<\varepsilon/\omega<0$.

This is the kinematic basis of the Penrose process \citep{Penrose:1969,PenroseFloyd:1971}. A particle that enters the ergoregion and splits into two fragments can place one of them on a negative-energy geodesic, which must eventually cross the horizon; by conservation of $\varepsilon$, the other fragment escapes to infinity with more energy than the incoming particle. The surplus is drawn from the rotation of the black hole. When the negative-energy fragment is captured, the mass and angular momentum of the hole change by $\delta M=\varepsilon$ and $\delta J=L_z$ of that fragment, and Eq.~\eqref{HorizonForwardCondition} gives
\begin{equation}
 \delta M\geq\Omega_{\rm H}\,\delta J\,,
\end{equation}
so energy can be extracted only together with angular momentum. This inequality was derived by \citet{PenroseFloyd:1971}, who noted that it expresses the increase of the area of the horizon. For a single decay, the gain is appreciable only if the fragments separate at relativistic relative speeds \citep{Wald:1974}; for matter coming from stable orbits, \citet{BardeenPressTeukolsky:1972} estimated that more than about half the speed of light is needed.

Collisions offer a second route to high energies. For two particles with four-momenta $p_1$ and $p_2$ meeting at the same event, the center-of-mass energy is
\begin{equation}
 E_{\rm cm}^2:=-(p_1+p_2)_\mu(p_1+p_2)^\mu\,.
\end{equation}
\citet{BanadosSilkWest:2009} found that for an extremal black hole $E_{\rm cm}$ grows without bound as the collision point approaches the horizon, provided one of the particles has the critical angular momentum $L_z=2M\varepsilon$. In the present notation, the critical condition is $\varepsilon-\Omega_{\rm H}L_z=0$, the equality case of Eq.~\eqref{HorizonForwardCondition}, which for $a=M$, where $\Omega_{\rm H}=1/(2M)$, reduces to the value above. For generic particles, $E_{\rm cm}$ remains finite at the horizon; it diverges only when one of the colliding particles approaches criticality. This \emph{BSW effect} is subject to strong kinematic restrictions: in the extremal case, the critical particle needs an infinite proper time to reach the horizon, while for $a<M$, critical particles arriving from infinity have a radial turning point outside the horizon, so that $E_{\rm cm}$ stays bounded \citep{BertiEtAl:2009,JacobsonSotiriou:2010}. \citet{GribPavlov:2011} showed that arbitrarily large center-of-mass energies are nevertheless possible for non-extremal black holes when the near-critical angular momentum is acquired in a previous collision close to the horizon \citep[see also][]{GribPavlovPiattella:2011,GribPavlovPiattella:2012}. They also found that $E_{\rm cm}$ can grow without bound anywhere in the ergoregion, if one of the particles has a large negative angular momentum \citep{GribPavlov:2013}, and studied collisions involving particles of zero or negative energy \citep{GribPavlov:2017}. Whether a large $E_{\rm cm}$ translates into energetic particles at infinity is a separate question: the escaping products of a single collision carry only a modest multiple of the incoming energy \citep{BejgerEtAl:2012,LeiderschneiderPiran:2016}. A review is given by \citet{HaradaKimura:2014}.

The sign of $L_z$ alone does not determine the sign of $\dd\varphi/\dd\lambda$ everywhere. A particle with $L_z<0$ can reverse its Boyer--Lindquist azimuthal motion; inside the ergoregion, every future-directed causal trajectory is nevertheless forced to co-rotate in the sense $\dd\varphi/\dd t>0$ for $a>0$. Zeros of Eq.~\eqref{dphidlambda} are therefore turning points of the Boyer--Lindquist azimuth only: they carry no information beyond the constants of motion and play no role in the classification.

\section{Polar motion}

Let us set
\begin{equation}
 z:=\cos^2\theta\,,\qquad 0\leq z\leq1\,,
 \qquad A:=a^2(\varepsilon^2-\kappa)\,.
\end{equation}
For $a\neq0$, the three signs of $A$ correspond, for timelike geodesics, to $\varepsilon^2<1$, $\varepsilon^2=1$, and $\varepsilon^2>1$; for null geodesics, $A=a^2\varepsilon^2>0$ always. Multiplying Eq.~\eqref{KerrTheta} by $1-z=\sin^2\theta$ gives
\begin{equation}\label{PolarPolynomial}
 F(z):=(1-z)\Theta=-Az^2+Bz+\mathcal Q\,,
\end{equation}
with
\begin{equation}\label{VorticalB}
 B:=A-L_z^2-\mathcal Q\,.
\end{equation}
The allowed polar region is $F(z)\geq0$. Its endpoint values are
\begin{equation}\label{PolarEndpoints}
 F(0)=\mathcal Q\,,\qquad F(1)=-L_z^2\,.
\end{equation}
When $L_z\neq0$, the poles $\theta=0,\pi$ ($z=1$) cannot be reached. When $L_z=0$, instead, $F(z)=(1-z)(\mathcal Q+Az)$, and the root $z=1$ is an artifact of the multiplication by $1-z$: the polar motion is then governed directly by $\Theta=\mathcal Q+Az$, and the poles are crossed if and only if $\mathcal Q+A>0$; for $\mathcal Q+A=0$ the axis is a double root of $\Theta$.

Let $A \neq 0$ and $z_{1,2}$ be the roots of $F(z)$. From Vieta's formulae we have: 
\begin{equation}\label{PolarVieta}
 z_1+z_2=\frac{B}{A}\,,
 \qquad
 z_1z_2=-\frac{\mathcal Q}{A}\,,
 \qquad
 (1-z_1)(1-z_2)=\frac{L_z^2}{A}\,,
\end{equation}
where the last relation is a combination of the first two, with the definition of $B$ in Eq.~\eqref{VorticalB} employed. The roots are real if and only if the discriminant of $F$ (Section~\ref{Sec:Discriminant}) is non-negative,
\begin{equation}\label{PolarDiscriminant}
 D_\theta := B^2 + 4A\mathcal Q \geq 0\,.
\end{equation}
Moreover, since only $z\in[0,1]$ is physical, the signs of the second and third relations tell us where real roots lie: $z_1z_2>0$ means that they are on the same side of $z=0$, and $(1-z_1)(1-z_2)>0$ indicates that they are on the same side of $z=1$.

If $D_\theta<0$, then $z_1$ and $z_2$ are complex conjugate roots, $z_2=\bar z_1$, so that $z_1z_2=|z_1|^2>0$ and $(1-z_1)(1-z_2)=|1-z_1|^2>0$. Equation~\eqref{PolarVieta} then demands
\begin{equation}
    A > 0\,, \qquad \mathcal{Q} < 0\,, \qquad L_z \neq 0\,.
\end{equation}
Then $F(z)$ has the sign of $-A<0$ for every real $z$, and no polar motion is possible. The same conclusion holds whenever neither root lies in $[0,1]$: $F$ then has a constant sign on $[0,1]$, namely that of $F(1)=-L_z^2<0$.

This leaves us with two kinds of non-equatorial motion, distinguished by the sign of $\mathcal Q$. If $\mathcal Q>0$, then $F(0)>0>F(1)$, so exactly one root lies in $(0,1)$; we call it $z_{\max}$. By the second and third relations of Eq.~\eqref{PolarVieta}, the other root is negative for $A>0$ and greater than unity for $A<0$. The orbit crosses the equatorial plane and oscillates between
\begin{equation}\label{PolarCrossing}
 \theta_{\min}=\arccos\sqrt{z_{\max}}\,,
 \qquad
 \theta_{\max}=\pi-\theta_{\min}\,.
\end{equation}
The particle oscillates symmetrically about the equatorial plane, between two cones that are mirror-symmetric with respect to it, and crosses the plane twice in each polar period. It is the analog of an inclined orbit in Newtonian gravity. For $a=0$ one finds $z_{\max}=\mathcal Q/(\mathcal Q+L_z^2)=\sin^2 i$, with $i$ the inclination of the orbital plane, so that $\mathcal Q=0$ is the equatorial limit and the band widens as $\mathcal Q$ grows, approaching the poles only as $\mathcal Q/L_z^2\to\infty$. Unlike the Newtonian case, for $a\neq0$ the orbit does not lie in a fixed plane: the dragging of inertial frames makes it precess about the spin axis and, for generic constants of motion, gradually fill the whole band (Fig.~\ref{Fig:PolarCrossing}).
% Generated by checks/polar_figures.py -- do not edit coordinates by hand
\begin{figure}[t]
\centering
\colorlet{bandcol}{blue!12}\colorlet{edgecol}{blue!70!black}\colorlet{conecol}{gray!70}\colorlet{orbcol}{red!75!black}
\raisebox{-141.45503pt}{\includegraphics{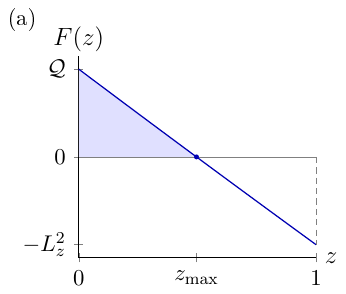}}%
\hfill
\raisebox{-167.63028pt}{\includegraphics{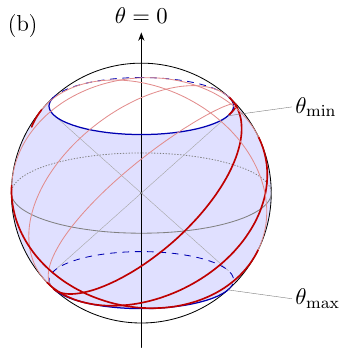}}
\caption{Polar motion of crossing type. (a)~The polynomial $F(z)$ of Eq.~\eqref{PolarPolynomial} for a bound timelike geodesic with $a=0.9M$, $\varepsilon=0.92$, $L_z=2M$ and $\mathcal Q=4M^2$ ($4.54M\leq r\leq6.32M$). Since $F(0)=\mathcal Q>0$ and $F(1)=-L_z^2<0$, exactly one root, $z_{\max}$, lies in $(0,1)$, and the motion is allowed for $0\leq z\leq z_{\max}$ (shaded). (b)~The same geodesic over three polar periods, projected radially onto a sphere. It oscillates between the cones $\theta_{\min}$ and $\theta_{\max}=\pi-\theta_{\min}$, crossing the equatorial plane twice per period, while the dragging of inertial frames makes it precess about the spin axis. Parts on the far side of the sphere are faded.}
\label{Fig:PolarCrossing}
\end{figure}

If instead $\mathcal Q<0$, then $F(0)<0$ and $F(1)<0$, so the roots in $[0,1]$, if any, come in pairs. By Eq.~\eqref{PolarVieta}, both lie in $(0,1)$ if and only if
\begin{equation}\label{VorticalConditions}
 A>0\,,\qquad 0<B<2A\,,\qquad D_\theta\geq0\,,
\end{equation}
since the first condition makes $z_1z_2$ positive, the second places $z_1+z_2$ in $(0,2)$, and the third makes the roots real. Calling them $z_{\min}\leq z_{\max}$, the motion is allowed for $z_{\min}\leq z\leq z_{\max}$, that is, in two bands symmetric with respect to the equatorial plane. In this \emph{vortical} motion, so named by \citet{DeFelice:1972ad}, who studied it for the extreme black hole, $a=M$, the particle never reaches the equatorial plane. It stays in one hemisphere, confined between two cones about the spin axis; the mirror orbit in the other hemisphere is equally allowed. In Mino time, the polar motion oscillates between the two cones, but a vortical geodesic crosses the exterior region too quickly to complete an oscillation: numerically, its Mino time outside the horizon is always less than half the time needed to go from one cone to the other, so it has at most one polar turning point before reaching the horizon (Fig.~\ref{Fig:PolarVortical}).% Generated by checks/polar_figures.py -- do not edit coordinates by hand
\begin{figure}[t]
\centering
\colorlet{bandcol}{blue!12}\colorlet{edgecol}{blue!70!black}\colorlet{conecol}{gray!70}\colorlet{orbcol}{red!75!black}
\raisebox{-141.46175pt}{\includegraphics{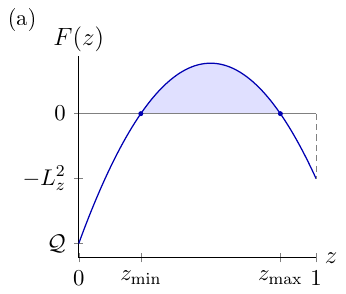}}%
\hfill
\raisebox{-151.18451pt}{\includegraphics{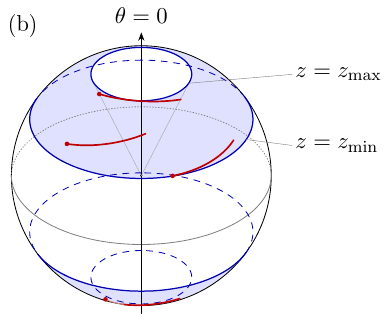}}
\caption{Polar motion of vortical type. (a)~The polynomial $F(z)$ of Eq.~\eqref{PolarPolynomial} for counter-rotating photons with $a=0.9M$, $L_z/\varepsilon=-0.3M$ and $\mathcal Q/\varepsilon^2=-0.18M^2$. Both endpoint values are negative, and the two roots $z_{\min}$ and $z_{\max}$ lie in $(0,1)$: the motion is allowed for $z_{\min}\leq z\leq z_{\max}$ (shaded). (b)~Four such photons arriving from large distance, projected radially onto a sphere, until they cross the event horizon (dots). Each stays in the band of one hemisphere, between the cones $z=z_{\max}$ and $z=z_{\min}$ or their mirror images, and never reaches the equatorial plane. The azimuth is that of ingoing Kerr coordinates, which is regular at the horizon. Parts on the far side of the sphere are faded.}
\label{Fig:PolarVortical}
\end{figure}

The marginal case $\mathcal Q=0$, which separates the two kinds of motion, is discussed in the classification below. Vortical motion has no Newtonian or Schwarzschild counterpart. It requires $A>0$, which is impossible for $a=0$, where $A=0$, and for timelike geodesics with $\varepsilon^2\leq1$; hence, it exists only for unbound timelike geodesics and for photons \citep[cf.][]{DeFelice:1972ad,Wilkins:1972rs}. Conversely, crossing motion is the only non-equatorial polar motion available to bound timelike geodesics.

\subsection{The effective potential for the polar motion}

The classification has a transparent effective-potential reading. Writing
\begin{equation}\label{PolarPotential}
 \Theta=\mathcal Q-U(z)\,,\qquad U(z):=\frac{L_z^2\,z}{1-z}-A\,z\,,
\end{equation}
the polar motion is a one-dimensional motion in $\theta$, since $\Theta=p_\theta^2$: a particle with \enquote{energy} $\mathcal Q$ in the potential $U(\cos^2\theta)$, as in Section~\ref{Sec:EffPot}. One has $U(0)=0$ and, for $L_z\neq0$, $U\to+\infty$ at the poles. If $L_z^2\geq A$, which includes every $A\leq0$, $U$ increases monotonically and $\mathcal Q\geq0$ is required. If $L_z^2<A$, the equatorial plane is a local maximum of $U$, and $U$ has a minimum at
\begin{equation}\label{PolarPotentialMin}
 z_* = 1-\frac{|L_z|}{\sqrt A}\,,\qquad U(z_*)=-\bigl(\sqrt A-|L_z|\bigr)^2\,.
\end{equation}
Orbits with $\mathcal Q>0$ pass over the maximum and cross the equatorial plane; $\mathcal Q=0$ gives the unstable equatorial motion and the separatrix; $U(z_*)<\mathcal Q<0$ gives vortical motion trapped in one of the two mirror-symmetric wells; $\mathcal Q=U(z_*)$ gives the stable cones $z=z_*$; and $\mathcal Q<U(z_*)$ is forbidden. For $L_z=0$, the minimum lies at the poles, $z_*=1$, and the two wells become polar caps. These ranges coincide with those obtained from Vieta's formulae in Eq.~\eqref{VorticalConditions}. Real roots outside $[0,1]$ and complex roots all describe the same situation, $\mathcal Q$ below the minimum of $U$: no geodesic has these constants of motion. See Fig.~\ref{Fig:PolarPotential}.

% Polar effective potential U(cos^2 theta), Eq. (PolarPotential); plotted against cos(theta)
\begin{figure}[htbp]
 \centering
 \includegraphics{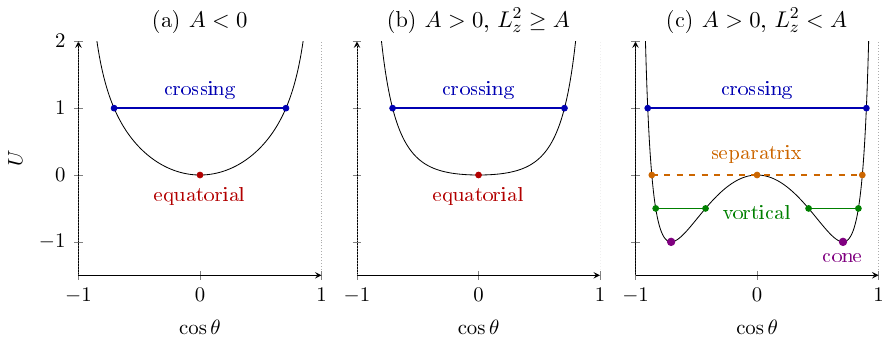}
 \caption{Polar potential $U$ of Eq.~\eqref{PolarPotential} as a function of $\cos\theta$, which shows both hemispheres; the equatorial plane is $\cos\theta=0$ and the poles are $\cos\theta=\pm1$ (dotted). Motion requires $\mathcal Q\geq U$; horizontal segments show the allowed range for representative values of $\mathcal Q$, and dots mark turning points or equilibria. Values are in arbitrary units. (a)~$A=-1$, $L_z^2=1/2$; (b)~$A=1$, $L_z^2=3/2$. In both, $L_z^2\geq A$ and $U$ has its only minimum at the equator: $\mathcal Q>0$ gives equator-crossing motion, $\mathcal Q=0$ stable equatorial motion, and $\mathcal Q<0$ is forbidden. (c)~$A=4$, $L_z^2=1$: the equator is a local maximum of $U$, flanked by two mirror-symmetric wells with minima $U=-(\sqrt A-|L_z|)^2=-1$ at $\cos^2\theta=1-|L_z|/\sqrt A=1/2$. From top: $\mathcal Q=1$, equator-crossing; $\mathcal Q=0$, unstable equatorial motion and the separatrix (dashed), which approaches the equatorial plane asymptotically; $\mathcal Q=-1/2$, vortical motion in either well; $\mathcal Q=-1$, the cones. Values $\mathcal Q<-1$ are forbidden.}
 \label{Fig:PolarPotential}
\end{figure}

\subsection{Classification of the polar motion based on Vieta's formulae}

The complete classification is as follows:
\begin{enumerate}
  \item \textbf{$A<0$} ($a\neq0$, timelike with $\varepsilon^2<1$). $F$ is convex and, for $L_z\neq0$, the third of Eqs.~\eqref{PolarVieta} gives $(1-z_1)(1-z_2)<0$: the roots are real and lie on opposite sides of $z=1$. Taking $z_1<1<z_2$, the allowed part of $[0,1]$ is $0\leq z\leq z_1$. The sign of $z_1$ follows from $z_1z_2=\mathcal Q/|A|$:
 \begin{itemize}
  \item $\mathcal Q>0$: $z_1\in(0,1)$ and the orbit crosses the equatorial plane, with $z_{\max}=z_1$;
  \item $\mathcal Q=0$: $z_1=0$ and the orbit is equatorial;
  \item $\mathcal Q<0$: $z_1<0$, so $F<0$ on $[0,1]$ and no motion is possible.
 \end{itemize}
 For $L_z=0$, one root is $z=1$ and the other is $\mathcal Q/|A|$; the conclusions are unchanged, except that for $\mathcal Q>0$, one has $z_{\max}=\min(1,\mathcal Q/|A|)$, and the orbit passes over the poles when $\mathcal Q>|A|$; for $\mathcal Q=|A|$, $\Theta=|A|\sin^2\theta$ and the orbit approaches the axis only asymptotically. Hence, bound timelike geodesics have $\mathcal Q\geq0$, with $\mathcal Q=0$ if and only if they are equatorial.
 \item \textbf{$A=0$} ($a=0$, or timelike with $\varepsilon^2=1$). $F(z)=\mathcal Q-(L_z^2+\mathcal Q)z$ is linear:
 \begin{itemize}
  \item $\mathcal Q>0$: the orbit crosses the equatorial plane, with
  \begin{equation}
   z_{\max}=\frac{\mathcal Q}{L_z^2+\mathcal Q}\,;
  \end{equation}
  \item $\mathcal Q=0$: $F(z)=-L_z^2z$, so the orbit is equatorial if $L_z\neq0$; if $L_z=0$, then $\Theta\equiv0$ and every constant $\theta$ is a solution;
  \item $\mathcal Q<0$: $F<0$ on $[0,1)$ and no motion is possible.
 \end{itemize}
 \item  \textbf{$A>0$} ($a\neq0$, timelike with $\varepsilon^2>1$, or null). $F$ is concave, and by the third of Eqs.~\eqref{PolarVieta}, $(1-z_1)(1-z_2)\geq0$. This means that real roots lie on the same side of $z=1$, one of them being $z=1$ itself when $L_z=0$.
 \begin{itemize}
  \item $\mathcal Q>0$: $z_1z_2=-\mathcal Q/A<0$, so the roots are real, with one negative; the other is then at most $1$, with equality only for $L_z=0$, when the orbit passes over the poles. The orbit crosses the equatorial plane, with $z_{\max}$ as the positive root.
  \item $\mathcal Q=0$: $F(z)=z\,(B-Az)$ with $B=A-L_z^2$, and the equatorial plane is an invariant solution. If $L_z^2\geq A$, it is the only one. If $L_z^2<A$, a non-equatorial branch also exists, with
  \begin{equation}
   0<z\leq\frac{B}{A}=1-\frac{L_z^2}{A}\,.
  \end{equation}
 Writing $x=\cos\theta$, one has exactly $(\dd x/\dd\gamma)^2=\sin^2\theta\,\Theta=F(x^2)$, so near the equator $(\dd x/\dd\gamma)^2\simeq Bx^2$ and $|\cos\theta|$ varies exponentially in Mino time:
 the orbit remains in one hemisphere and approaches the equatorial plane only as $\gamma\to\pm\infty$, without ever reaching it. This branch is the common limit of the equator-crossing orbits as $\mathcal Q\to0^+$ and of the vortical orbits below as $\mathcal Q\to0^-$ (for which $z_{\min}\to0$); it is the separatrix of the polar motion. Therefore, $\mathcal Q=0$ is necessary, but not by itself sufficient, for a geodesic to be equatorial.
  \item $\mathcal Q<0$: $z_1z_2>0$ and $(1-z_1)(1-z_2)\geq0$ hold automatically, so real roots lie in $(0,1]$ exactly when both $z_1+z_2$ and $(1-z_1)+(1-z_2)$ are positive. Since $F(0)<0$ and $F(1)\leq0$, motion is possible if and only if the conditions \eqref{VorticalConditions} hold, the boundary value $B=2A$ being allowed only for $L_z=0$, $\mathcal Q=-A$, the motion along the axis discussed below.
  The condition $B<2A$ is not redundant: for $A=1$, $L_z^2=0.1$, $\mathcal Q=-5$ one has $B>0$ and $D_\theta>0$, but both roots exceed $1$. If $D_\theta=0$, the double root $z_*=B/(2A)=1-|L_z|/\sqrt A$ of Eq.~\eqref{PolarPotentialMin} gives motion at constant $\theta$ on the cone $\cos^2\theta=z_*$. If $D_\theta>0$, the motion is vortical,
  \begin{equation}
   0<z_{\min}\leq z\leq z_{\max}\leq1\,,
  \end{equation}
  with $z_{\max}=1$ only for $L_z=0$, in which case the band is a polar cap crossed through the axis. The principal null geodesics, with $L_z=a\varepsilon\sin^2\theta_0$ and $\mathcal Q=-a^2\varepsilon^2\cos^4\theta_0$, are an example of the constant-$\theta$ case $D_\theta=0$, with $z_\ast=\cos^2\theta_0$. The limiting value $z_*=1$ ($L_z=0$, $\mathcal Q=-A$) is motion along the axis.
 \end{itemize}
\end{enumerate}
The results are summarized in Table~\ref{Tab:PolarClassification}.

\begin{table}[htbp]
\centering
\small
\begin{tabular}{@{}lll@{}}
\toprule
Sign of $A$ & Carter constant & Polar motion\\
\midrule
$A<0$ & $\mathcal Q>0$ & crossing, $0\leq z\leq z_{\max}$; second root $>1$\\
 & $\mathcal Q=0$ & equatorial\\
 & $\mathcal Q<0$ & none\\[2pt]
$A=0$ & $\mathcal Q>0$ & crossing, $z_{\max}=\mathcal Q/(L_z^2+\mathcal Q)$\\
 & $\mathcal Q=0$ & equatorial$^{\ast}$\\
 & $\mathcal Q<0$ & none\\[2pt]
$A>0$ & $\mathcal Q>0$ & crossing, $0\leq z\leq z_{\max}$; second root $<0$\\
 & $\mathcal Q=0$ & equatorial; separatrix $0<z\leq1-L_z^2/A$ if $L_z^2<A$\\
 & $\mathcal Q<0$ & vortical, $z_{\min}\leq z\leq z_{\max}$, if $0<B<2A$ and $D_\theta\geq0$\\
 & & none otherwise\\
\bottomrule
\end{tabular}
\caption{Polar motion in Kerr spacetime by the signs of $A=a^2(\varepsilon^2-\kappa)$ and $\mathcal Q$. For timelike geodesics with $a\neq0$, the three groups of rows correspond to $\varepsilon^2<1$, $\varepsilon^2=1$ and $\varepsilon^2>1$; null geodesics always have $A>0$. Crossing orbits oscillate between the angles of Eq.~\eqref{PolarCrossing}; the position of the second root of $F$ is given for $A\neq0$ and $L_z\neq0$. $^\ast$For $A=L_z=\mathcal Q=0$, every constant $\theta$ is a solution.}
\label{Tab:PolarClassification}
\end{table}

\section{Radial motion and Vieta's formulae}

The radial motion is governed by Eq.~\eqref{drdlambda}, $\rho^4(\dd r/\dd\lambda)^2=R(r)$, or, in Mino time, by $(\dd r/\dd\gamma)^2=R(r)$. Since $\rho^4>0$, a geodesic can move only where $R(r)\geq0$, and its radial velocity vanishes only at the roots of $R$. As in Schwarzschild spacetime (Section~\ref{Sec:SchwVieta}), the simple roots outside the horizon are the turning points (Section~\ref{Sec:TurningPoints}), and multiple roots correspond to orbits of constant radius, which we treat in Section~\ref{Sec:KerrSpherical}. The radial problem thus reduces to locating the roots of the quartic~\eqref{KerrRExpanded} relative to the event horizon. In Mino time, this problem is decoupled from the polar one: the two motions interact only through the constants of motion $(\varepsilon,L_z,\mathcal Q)$, which must also satisfy the polar admissibility conditions of the previous section.

Three values of the radial polynomial are especially useful:
\begin{align}
 R(0)&=-a^2\mathcal Q\,,\label{RAtZero}\\
 R(r_{\rm H})&=
 \left[\varepsilon(r_{\rm H}^2+a^2)-aL_z\right]^2\geq0\,,\label{RAtHorizon}\\
 R(r)&\sim(\varepsilon^2-\kappa)r^4
 \qquad(r\to\infty)\,.\label{RAtInfinity}
\end{align}
Since $\Omega_{\rm H}=a/(r_{\rm H}^2+a^2)$, the horizon value vanishes if and only if $\varepsilon=\Omega_{\rm H}L_z$, the equality case of Eq.~\eqref{HorizonForwardCondition}. Moreover, $\Delta<0$ between the two horizons, so Eq.~\eqref{KerrRCompact} gives
\begin{equation}\label{RBetweenHorizons}
 R(r)>0\qquad(r_{\rm C}<r<r_{\rm H})\,,
\end{equation}
except for null geodesics with $K=0$, for which $R$ is a perfect square. In the notation of the previous section, the coefficient of $r^2$ in Eq.~\eqref{KerrRExpanded} is the polar coefficient $B$ of Eq.~\eqref{VorticalB}:
\begin{equation}\label{KerrRWithB}
 R=(\varepsilon^2-\kappa)r^4+2M\kappa r^3+Br^2+2MKr-a^2\mathcal Q\,.
\end{equation}
For $\varepsilon^2\neq\kappa$, let $r_1,\ldots,r_4$ denote all roots of $R$, including complex roots, negative roots, and roots at or inside the event horizon. Equation~\eqref{KerrRWithB} and Vieta's formulae give
\begin{align}
 \sum_i r_i&=-\frac{2M\kappa}{\varepsilon^2-\kappa}\,,
 \label{KerrVieta1}\\
 \sum_{i<j}r_ir_j&=\frac{B}{\varepsilon^2-\kappa}
 =a^2-\frac{L_z^2+\mathcal Q}{\varepsilon^2-\kappa}\,,
 \label{KerrVieta2}\\
 \sum_{i<j<k}r_ir_jr_k&=
 -\frac{2MK}{\varepsilon^2-\kappa}\,,
 \label{KerrVieta3}\\
 \prod_i r_i&=
 -\frac{a^2\mathcal Q}{\varepsilon^2-\kappa}\,.
 \label{KerrVieta4}
\end{align}
Two features are immediately informative. For null geodesics, the roots sum to zero, so that when they are all real and not all zero, at least one of them is negative. For $a\neq0$, the product of the roots has the sign of $-\mathcal Q$ when $\varepsilon^2>\kappa$ and that of $\mathcal Q$ when $\varepsilon^2<\kappa$; together with Eq.~\eqref{RAtZero}, it controls whether a root lies between $r=0$ and the horizon. However, the Kerr classification cannot be inferred from these relations alone: one must also impose $r>r_{\rm H}$, $R\geq0$, the polar admissibility conditions of the previous section, and future orientation, Eqs.~\eqref{HorizonForwardCondition} and~\eqref{ZAMOForwardCondition} \citep{CompereLiuLong:2022}.

\subsection{Timelike geodesics with \texorpdfstring{$\varepsilon^2>1$}{energy squared > 1}}\label{Sec:KerrUnbound}

Here $\varepsilon^2-1>0$, and $R(r_{\rm H})\geq0$. Writing $R=(\varepsilon^2-1)\prod_i(r-r_i)$, every complex-conjugate pair and every real root below the horizon contributes a positive factor at $r=r_{\rm H}$, and every real root above it contributes a negative one. Hence, if $R(r_{\rm H})>0$, the number of exterior roots, counted with multiplicity, is even: zero, two, or four. The last possibility is excluded by Eq.~\eqref{KerrVieta1}: four exterior roots would all be positive, whereas their sum is $-2M/(\varepsilon^2-1)<0$. At most two roots, therefore, lie outside the horizon, and the sign of $\mathcal Q$ determines where the others are.

\begin{itemize}
 \item If $\mathcal Q>0$, Eq.~\eqref{KerrVieta4} gives $\prod_ir_i<0$, so at least one root is real and negative. When two roots lie outside the horizon, the remaining two have a negative product, so they are real and of opposite sign; by Eq.~\eqref{RBetweenHorizons}, the positive one lies in $(0,r_{\rm C}]$.
 \item If $\mathcal Q=0$, one root is $r=0$, and the other three are the roots of $(\varepsilon^2-1)r^3+2Mr^2+Br+2MK$. For $K>0$, their product, $-2MK/(\varepsilon^2-1)$, is negative, so when two of them lie outside the horizon, the third is negative. For $K=0$, that is, $L_z=a\varepsilon$, the root $r=0$ is double, and the remaining quadratic, $(\varepsilon^2-1)r^2+2Mr-a^2$, has roots of opposite sign; by parity, the positive one lies inside the horizon, and there is no exterior root.
 \item If $\mathcal Q<0$, the polar motion is vortical, and the condition $B>0$ of Eq.~\eqref{VorticalConditions} makes the coefficient of $r^2$ in Eq.~\eqref{KerrRWithB} positive. Since $K\geq0$ by Eq.~\eqref{KNonnegative}, every coefficient of
 \begin{equation}
 R=(\varepsilon^2-1)r^4+2Mr^3+Br^2+2MKr+a^2|\mathcal Q|
 \end{equation}
 is non-negative, and the constant term is positive. Hence $R>0$ for every $r>0$: $R$ has no positive root at all, as Descartes' rule of signs (Section~\ref{Sec:Descartes}) also states, since the coefficients show no sign change.
\end{itemize}

The generic exterior patterns are, therefore, only two:
\begin{enumerate}
 \item no exterior root: a horizon-to-infinity transit orbit, interpreted as capture or escape according to its time orientation;
 \item two exterior roots $r_1>r_2$: a horizon-connected branch $r_{\rm H}<r\leq r_2$ and a scattering branch $r\geq r_1$.
\end{enumerate}
Two conclusions follow. First, \emph{no radially bound timelike geodesic with $\varepsilon^2>1$ exists in the domain of outer communication}; a double root of $R$ in this regime is necessarily an unstable spherical orbit separating capture from scattering. Second, \emph{every vortical timelike geodesic is a transit orbit}: it has no radial turning point outside the horizon and connects the horizon with infinity. Both statements follow from the signs of the coefficients alone, without computing a single root. They are not new. \citet{Wilkins:1972rs} proved the first, also for $\varepsilon^2=1$, by applying Descartes' rule to $R$ expanded about $r=M$, and showed that bound motion requires $\mathcal Q\geq0$; \citet{CompereLiuLong:2022} found that every radial root structure with $\mathcal Q<0$ has no exterior root. The argument given here only makes the role of Vieta's formulae explicit. The complete radial taxonomy, including the non-generic cases, is given by \citet{CompereLiuLong:2022}. Figure~\ref{Fig:KerrScattering} shows the two generic patterns, together with the double root that separates them, for three orbits that differ only in $\mathcal Q$.

Since a transit orbit reaches infinity, where Eq.~\eqref{ZAMOForwardCondition} requires $\varepsilon>0$, vortical timelike geodesics, in fact, have $\varepsilon>1$. Energies $\varepsilon<-1$ are possible only inside the ergoregion. For them, the transit and scattering branches are past-directed, and the only future-directed motion is the horizon-connected branch, with its turning point inside the ergoregion.

% Fig_KerrScattering.tex -- generated by checks/fig_kerr_scattering.py
% a=0.9M, epsilon=1.05, L_z=2M; Q = 0.92 Q_c, (1+2e-5) Q_c, 1.08 Q_c, Q_c=7.70855 M^2
\begin{figure}[htbp]
\centering
\colorlet{KScat}{blue!70!black}
\colorlet{KCapt}{red!75!black}
\colorlet{KCrit}{black}
\includegraphics{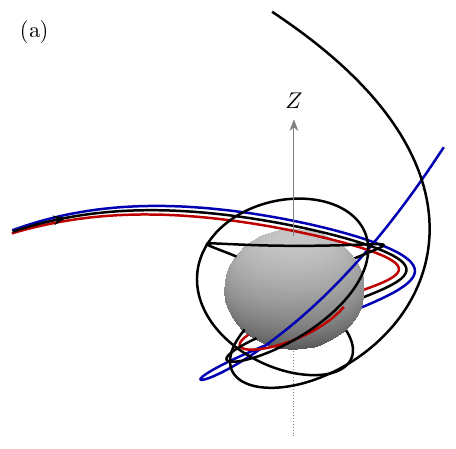}

\medskip
\includegraphics{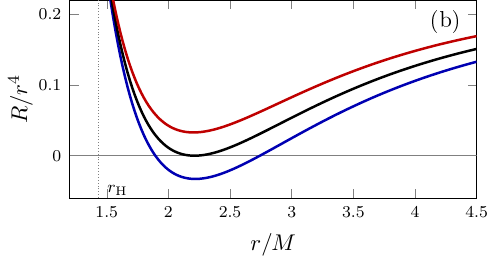}
\caption{Timelike geodesics with $\varepsilon^2>1$ for $a=0.9M$, $\varepsilon=1.05$ and $L_z=2M$, coming in from large $r$ with $\mathcal Q>0$. (a)~Trajectories in Kerr--Schild Cartesian coordinates, in which the horizon (gray) is regular; the arrow marks the common incoming direction. The three orbits differ only in $\mathcal Q$: for $\mathcal Q=1.08\,\mathcal Q_{\rm c}$, with $\mathcal Q_{\rm c}\simeq7.71M^2$ the critical value, the particle is scattered (blue); for $\mathcal Q=0.92\,\mathcal Q_{\rm c}$ it is captured (red); for $\mathcal Q=(1+2\times10^{-5})\,\mathcal Q_{\rm c}$ it winds about three times around the unstable spherical orbit $r_s\simeq2.21M$ before escaping (black). None of the orbits stays in a plane. (b)~The radial function $R/r^4$ for the same three values of $\mathcal Q$: two exterior roots (scattering), two nearly coincident roots close to $r_s$ (near-critical), and no exterior root (capture).}
\label{Fig:KerrScattering}
\end{figure}

\subsection{Marginally bound timelike geodesics}\label{Sec:KerrMarginal}

For $\varepsilon^2=1$, the quartic term vanishes and $A=0$, so that $B=-(L_z^2+\mathcal Q)$, and Eq.~\eqref{KerrRWithB} becomes the cubic
\begin{equation}\label{KerrMarginalR}
 R=2Mr^3-(L_z^2+\mathcal Q)r^2+2MKr-a^2\mathcal Q\,.
\end{equation}
By the polar classification (case $A=0$), motion requires $\mathcal Q\geq0$, and for $L_z\neq0$, $\mathcal Q=0$ means equatorial motion. The leading coefficient is positive and $R(r_{\rm H})\geq0$, so the parity argument of the previous subsection applies: for $R(r_{\rm H})>0$, the number of exterior roots, counted with multiplicity, is even; hence, zero or two for a cubic. The generic exterior topology is therefore either a transit orbit, with no turning point, or a pair of turning points $r_1>r_2$, with a horizon-connected branch $r_{\rm H}<r\leq r_2$ and a parabolic scattering branch $r\geq r_1$. A double root marks a marginally bound spherical orbit, which is unstable since $R\geq0$ on both sides of it. Near the horizon, these orbits look like those of Fig.~\ref{Fig:KerrScattering}; only the escaping branch changes character, reaching infinity with zero radial velocity.

For $\varepsilon=-1$, which is possible only inside the ergoregion, only the horizon-connected branch is future-directed. Together with the previous subsection, this shows that radially bound timelike motion in the Kerr exterior requires $\varepsilon^2<1$, exactly as in Schwarzschild spacetime.

Vieta's formulae for the three roots of the cubic are
\begin{align}
 r_1+r_2+r_3&=\frac{L_z^2+\mathcal Q}{2M}\,,\\
 r_1r_2+r_1r_3+r_2r_3&=K\,,\\
 r_1r_2r_3&=\frac{a^2\mathcal Q}{2M}\,.
\end{align}
Since $\mathcal Q\geq0$, every coefficient of $R(-r)$ is non-positive, so $R$ has no negative roots, and the product of the roots is non-negative. When two roots lie outside the horizon, the third is therefore real and, by Eq.~\eqref{RBetweenHorizons}, lies in $[0,r_{\rm C}]$; it vanishes if and only if $a\mathcal Q=0$.

The equatorial case shows these relations at work. For $\mathcal Q=0$ and $\varepsilon=1$, one root is $r=0$, and the other two are the roots of $2Mr^2-L_z^2r+2M(L_z-a)^2$, with a sum of $L_z^2/(2M)$ and a product of $(L_z-a)^2$. They coincide at $r_c$ if and only if
\begin{equation}
 r_c=|L_z-a|\,,\qquad L_z^2=4Mr_c\,.
\end{equation}
For prograde orbits, $L_z>a$, and for retrograde orbits, $L_z<0$, the solutions outside the horizon are
\begin{align}
 r_{\rm mb}^{\rm pro}&=2M-a+2\sqrt{M(M-a)}, & L_{\rm mb}^{\rm pro}&=r_{\rm mb}^{\rm pro}+a\,,\\
 r_{\rm mb}^{\rm ret}&=2M+a+2\sqrt{M(M+a)}, & L_{\rm mb}^{\rm ret}&=-\bigl(r_{\rm mb}^{\rm ret}-a\bigr)\,,
\end{align}
the radii and angular momenta of the marginally bound circular orbits. For $a=0$ both reduce to $r_{\rm mb}=4M$ and $|L_z|=4M$, the Schwarzschild values. Their dependence on the spin is shown in Fig.~\ref{KerrRadii}.

\subsection{Bound timelike geodesics: \texorpdfstring{$\varepsilon^2<1$}{energy squared < 1}}\label{Sec:KerrBound}

Now $\varepsilon^2-1<0$, so $R\to-\infty$ as $r\to\infty$, whereas $R(r_{\rm H})\geq0$. The parity argument of Section~\ref{Sec:KerrUnbound}, with the leading coefficient now negative, shows that for $R(r_{\rm H})>0$ the number of exterior roots, counted with multiplicity, is odd:
\begin{enumerate}
 \item one exterior root $r_1$: only the horizon-connected branch $r_{\rm H}<r\leq r_1$ is allowed;
 \item three exterior roots $r_1>r_2>r_3$: the allowed intervals are the horizon-connected branch $r_{\rm H}<r\leq r_3$ and the bound branch
 \begin{equation}
 r_2\leq r\leq r_1\,.
 \end{equation}
\end{enumerate}
For the bound branch, $r_{\rm p}=r_2$ is the periapsis, and $r_{\rm a}=r_1$ is the apoapsis.

By the polar classification (case $A<0$), $\mathcal Q\geq0$ in this energy regime, with $\mathcal Q=0$ only for equatorial orbits. Since also $B=A-L_z^2-\mathcal Q<0$, every coefficient of $R(-r)$ is non-positive, so $R$ has no negative root. Moreover, $R(0)=-a^2\mathcal Q\leq0\leq R(r_{\rm H})$, and $R>0$ between the horizons by Eq.~\eqref{RBetweenHorizons}, so at least one root lies in $[0,r_{\rm C}]$. In the three-root case, the fourth root $r_4$ therefore lies in $[0,r_{\rm C}]$; by Eq.~\eqref{KerrVieta4}, it is positive when $a\mathcal Q\neq0$ and vanishes when $a\mathcal Q=0$. In the one-root case, the remaining three roots are either a real root in $[0,r_{\rm C}]$ and a complex-conjugate pair, or three real roots in $[0,r_{\rm C}]$.

The sign of the discriminant $\Delta_4$ of $R$ (Section~\ref{Sec:Discriminants}) distinguishes these possibilities only in part. If $\Delta_4<0$, there is one complex-conjugate pair, and the orbit belongs to the one-root case. If $\Delta_4>0$, all four roots are real and non-negative, but either three or only one of them lie outside the horizon. The second possibility does occur, as Example~\ref{Ex:KerrDiscriminant} shows: three roots lie between $0$ and $r_{\rm C}$, and only one outside the horizon (Fig.~\ref{Fig:KerrBound}c, compared with a genuinely bound case in panel~b). The two cases are separated by Descartes' rule applied at the horizon, which is exact when all roots are real (Section~\ref{Sec:Descartes}). Writing $R(r_{\rm H}+h)=\sum_{k=0}^{4}R^{(k)}(r_{\rm H})\,h^k/k!$, the number of roots with $r>r_{\rm H}$ equals the number of sign changes in the sequence
\begin{equation}\label{KerrHorizonDescartes}
 \varepsilon^2-1\,,\quad \tfrac16R'''(r_{\rm H})\,,\quad \tfrac12R''(r_{\rm H})\,,\quad R'(r_{\rm H})\,,\quad R(r_{\rm H})\,.
\end{equation}
A bound branch therefore exists if and only if $\Delta_4>0$ and this sequence has three sign changes.

A standard parametrization of the bound radial turning points is
\begin{equation}
 r_{\rm p}=\frac{pM}{1+e}\,,
 \qquad
 r_{\rm a}=\frac{pM}{1-e}\,,
 \qquad 0\leq e<1\,,
\end{equation}
where $p$ and $e$ are the relativistic semi-latus rectum and eccentricity. These may be supplemented by the inclination angle $i$, defined by
\begin{equation}\label{KerrInclination}
 \cos i:=\frac{L_z}{\sqrt{L_z^2+\mathcal Q}}\,,
\end{equation}
which is smaller than $\pi/2$ for prograde and larger than $\pi/2$ for retrograde orbits and, for $a=0$, reduces to the inclination of the orbital plane introduced in the polar section. The triple $(p,e,i)$ may be used in place of $(\varepsilon,L_z,\mathcal Q)$; the inclination $i$ is the one used by \citet{DrascoHughes:2004}, while \citet{Schmidt:2002qk} and \citet{FujitaHikida:2009} use related inclination parameters built from the polar turning point.

The boundary between bound and plunging orbits is the \emph{separatrix}, the Kerr counterpart of the curve $p=6+2e$ of Section~\ref{Sec:SchwVieta}. On it, the periapsis $r_2$ merges with the turning point $r_3$ of the horizon-connected branch, and
\begin{equation}
 R=(\varepsilon^2-1)(r-r_1)(r-r_2)^2(r-r_4)\,.
\end{equation}
The double root $r_2$ is an unstable spherical orbit, and the bound branch becomes a \emph{homoclinic} orbit: it leaves the spherical orbit in the asymptotic past, reaches the apoapsis $r_1$, and returns to the spherical orbit in the asymptotic future. Figure~\ref{Fig:KerrBoundOrbits} shows a generic bound orbit and, for slightly different values of $\mathcal Q$, an orbit close to homoclinic and one that has crossed the separatrix and plunges.

The separatrix lies on the zero set of $\Delta_4$, which, however, contains more. Double roots outside the horizon also occur as $r_1=r_2$, the stable spherical orbits at which bound orbits reach $e=0$ (Section~\ref{Sec:KerrSpherical}), and double roots inside the Cauchy horizon make $\Delta_4$ vanish without affecting the exterior motion. \citet{SteinWarburton:2020} showed that, in terms of $p$, $e$ and an inclination parameter, the separatrix is the zero set of a single polynomial, of degree twelve in $p$; for equatorial orbits, it reduces to a quartic in $p$ \citep{OShaughnessy:2003}. An explicit parametric solution, in terms of the radius of the unstable spherical orbit, was found by \citet{LevinPerezGiz:2009} in the equatorial case and extended to inclined orbits by \citet{NgTeo:2025}. Their construction is itself an application of Vieta's formulae: matching the constant and linear coefficients of $R$ with those of the factorized form above gives $r_4$, and then the apoapsis $r_1$, in terms of $r_2$. At $e=0$, the three roots $r_1$, $r_2$, $r_3$ coalesce, and the separatrix ends on the innermost stable spherical orbit, where $R''=0$; as $e\to1$, the apoapsis recedes to infinity, $\varepsilon\to1$, and the separatrix ends on the marginally bound spherical orbit of Section~\ref{Sec:KerrMarginal}. For $a=0$ it reduces to $p=6+2e$ for every inclination. Figure~\ref{Fig:KerrBound}a shows the separatrix for $a=0.9M$: frame dragging moves it inward for prograde orbits and outward for retrograde ones, the polar orbits lying inside the Schwarzschild line, and starting at $p\simeq5.44$ for $e=0$.

% Fig_KerrBound.tex -- generated by checks/fig_kerr_bound.py
% (a) separatrix for a=0.9M from the unstable spherical orbits (Ng & Teo 2025, Eqs. 4.1), checked against R(r_u)=R'(r_u)=0
% (b),(c) a=0.86M, epsilon=0.9: (L_z,Q)=(2.3M,1.0M^2) and (1.13M,0.0066M^2); both Delta_4>0
\begin{figure}[htbp]
\centering
\includegraphics{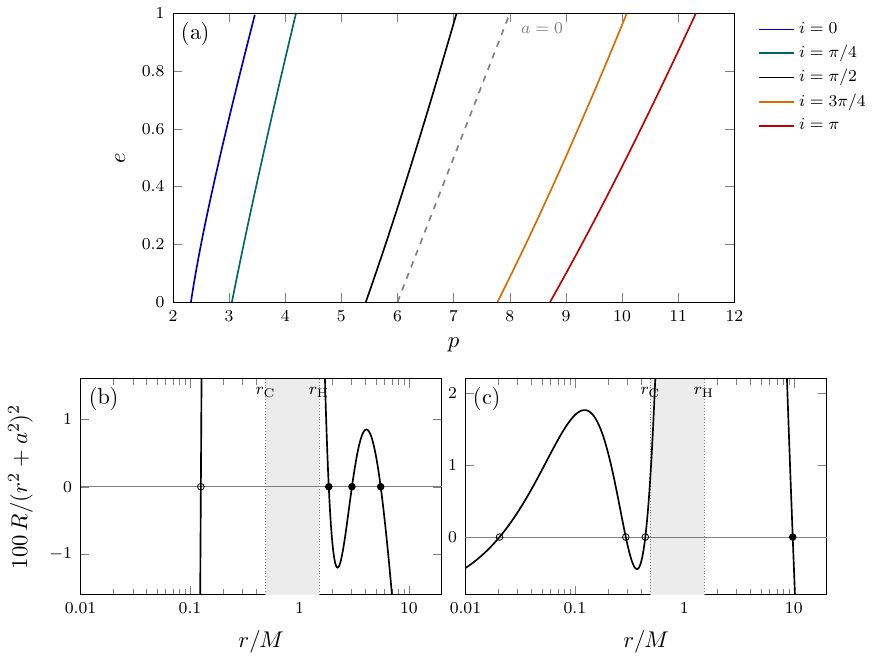}
\caption{(a)~Separatrix between bound and plunging timelike orbits in the $(p,e)$ plane for $a=0.9M$ and several inclinations $i$, Eq.~\eqref{KerrInclination}, from prograde equatorial ($i=0$) to retrograde equatorial ($i=\pi$); bound orbits lie to the right of each curve, plunging ones to the left. The dashed line is the Schwarzschild separatrix $p=6+2e$. Each curve is traced by the homoclinic orbits of the unstable spherical orbits; at $e=0$ it starts at the innermost stable spherical orbit, $p=r_{\rm ISCO}/M$ in the equatorial cases. (b), (c)~Two radial functions with the same $a=0.86M$ and $\varepsilon=0.9$, both with $\Delta_4>0$: (b)~$L_z=2.3M$, $\mathcal Q=M^2$, three exterior roots (filled dots) and a bound branch $r_2\leq r\leq r_1$; (c)~$L_z=1.13M$, $\mathcal Q=0.0066M^2$, three roots inside the Cauchy horizon (open dots) and a single exterior root, so that no bound branch exists. In (b) and (c) the radial function is plotted as $100\,R/(r^2+a^2)^2$, which, unlike the $R/r^4$ of Figs.~\ref{Fig:KerrScattering} and~\ref{Fig:KerrBoundOrbits}, stays finite as $r\to0$, where the roots inside the Cauchy horizon lie. The shaded band is the region between the horizons, where $R>0$. The sequence~\eqref{KerrHorizonDescartes} has three sign changes in~(b) and one in~(c).}
\label{Fig:KerrBound}
\end{figure}

% Fig_KerrBoundOrbits.tex -- generated by checks/fig_kerr_bound_orbits.py
% a=0.9M, epsilon=0.92249, L_z=1.82900M; Q = 4.75116 M^2 (p=5, e=0.3, i=50 deg), (1+2e-4) Q_c, 0.92 Q_c; Q_c=4.24900 M^2
\begin{figure}[htbp]
\centering
\colorlet{KScat}{blue!70!black}
\colorlet{KCapt}{red!75!black}
\colorlet{KCrit}{black}
\includegraphics{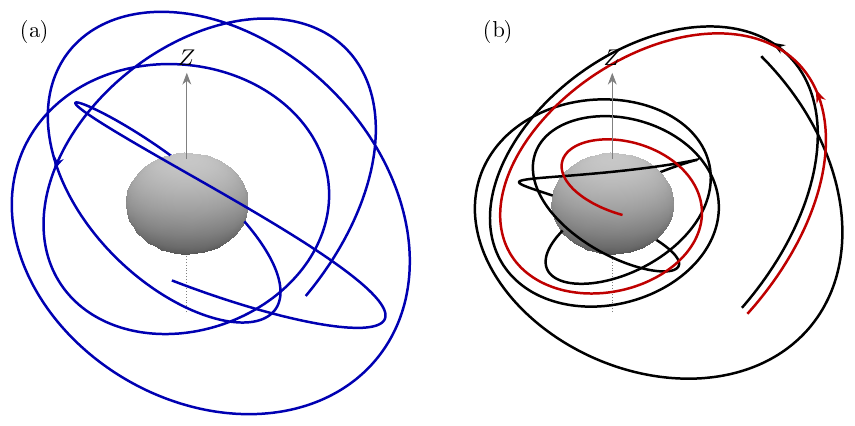}

\medskip
\includegraphics{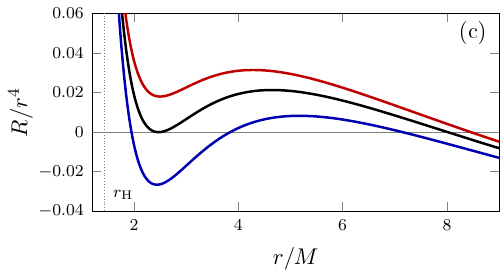}
\caption{Timelike geodesics with $\varepsilon^2<1$ for $a=0.9M$, $\varepsilon\simeq0.922$ and $L_z\simeq1.83M$, all starting at their apoapsis $r_1$ in the equatorial plane, in Kerr--Schild Cartesian coordinates; the arrows mark the direction of motion. The orbits differ only in $\mathcal Q$. (a)~A bound orbit, $\mathcal Q\simeq4.75M^2$, with $p=5$, $e=0.3$ and $i=50^\circ$, over two radial periods: it oscillates in latitude between $\theta\simeq40^\circ$ and $140^\circ$ while its orbit precesses about the spin axis. (b)~Near the separatrix value $\mathcal Q_{\rm c}\simeq4.25M^2$: for $\mathcal Q=(1+2\times10^{-4})\,\mathcal Q_{\rm c}$ the orbit is close to homoclinic and winds about three times around the unstable spherical orbit $r\simeq2.47M$ before returning to its apoapsis (black); for $\mathcal Q=0.92\,\mathcal Q_{\rm c}$ the periapsis has disappeared and the particle plunges into the horizon (red). (c)~The radial function $R/r^4$ for the three values of $\mathcal Q$: three exterior roots (blue), two nearly coincident roots close to the spherical orbit (black), and a single exterior root (red).}
\label{Fig:KerrBoundOrbits}
\end{figure}

\section{Multiple roots, spherical orbits, and stability}\label{Sec:KerrSpherical}

The root patterns of the previous sections can change in only two ways: two roots meet, or a root crosses the horizon. The first event is the vanishing of the discriminant, $\Delta_4=0$ (Section~\ref{Sec:Discriminants}); the second is $R(r_{\rm H})=0$, that is, $\varepsilon=\Omega_{\rm H}L_z$ by Eq.~\eqref{RAtHorizon}. Multiple roots are therefore the boundaries of the classification in the space of constants of motion, and they also carry geodesics of their own. Differentiating $(\dd r/\dd\gamma)^2=R(r)$ gives $\dd^2r/\dd\gamma^2=R'(r)/2$, so $r=r_s$ is a solution if and only if
\begin{equation}\label{SphericalConditions}
 R(r_s)=0\,,\qquad R'(r_s)=0\,.
\end{equation}
Such orbits are called \emph{spherical}; they need not remain at constant $\theta$. Equatorial circular orbits are the special case $\mathcal Q=0$, $\theta=\pi/2$. Spherical timelike orbits and their parameter domain were analyzed by \citet{Wilkins:1972rs}, for the extreme black hole, and, in a modern unified treatment, by \citet{Teo:2021}.

The radial stability of a spherical orbit at fixed constants of motion is determined by the sign of $R''(r_s)$: for $R''(r_s)<0$, $R<0$ on both sides of $r_s$, the orbit is stable; for $R''(r_s)>0$, $R>0$ on both sides, it is unstable. This sign can be read off from the position of the other two roots, $\tilde r_1$ and $\tilde r_2$. Writing $R=(\varepsilon^2-\kappa)(r-r_s)^2(r-\tilde r_1)(r-\tilde r_2)$, one finds
\begin{equation}\label{KerrDoubleRootStability}
 R''(r_s)=2(\varepsilon^2-\kappa)(r_s-\tilde r_1)(r_s-\tilde r_2)\,.
\end{equation}
We take $R(r_{\rm H})>0$ and $r_s>r_{\rm H}$, and use the parity argument of Section~\ref{Sec:KerrUnbound}.
\begin{itemize}
 \item $\varepsilon^2>\kappa$ (unbound timelike geodesics and all null geodesics). At most two exterior roots exist, counted with multiplicity: for timelike geodesics because of Eq.~\eqref{KerrVieta1}, for null geodesics because the roots sum to zero. The double root uses both, so $\tilde r_1$ and $\tilde r_2$ are either real and below the horizon or complex conjugate. In both cases $(r_s-\tilde r_1)(r_s-\tilde r_2)>0$, and the orbit is unstable. In particular, \emph{every spherical photon orbit in the domain of outer communication is unstable}.
 \item $\varepsilon^2=1$. Now $R=2M(r-r_s)^2(r-\tilde r)$ and $R''(r_s)=4M(r_s-\tilde r)$. By parity, $\tilde r$ is not exterior, so the marginally bound spherical orbits are unstable.
 \item $\varepsilon^2<1$. Since $R(r_{\rm H})>0$ while $\varepsilon^2-1<0$, the factorized form of $R$ evaluated at the horizon requires $(r_{\rm H}-\tilde r_1)(r_{\rm H}-\tilde r_2)<0$: the two roots are real, one inside and one outside the horizon, say $\tilde r_2<r_{\rm H}<\tilde r_1$. The orbit is stable if $\tilde r_1<r_s$, when the double root is the outermost root, $r_1=r_2$; it is the limit $e\to0$ of the bound orbits of Section~\ref{Sec:KerrBound}. It is unstable if $\tilde r_1>r_s$, when $r_2=r_3$; this is the separatrix, and $\tilde r_1$ is the apoapsis of the homoclinic orbit.
 \item The two cases with $\varepsilon^2<1$ meet at a triple root, $\tilde r_1=r_s$, where $R''(r_s)=0$. This marginally stable orbit is the \emph{innermost stable spherical orbit} for the given inclination, the end point $e=0$ of the separatrix.
\end{itemize}
Stability is thus a statement about root patterns: a stable spherical orbit is a bound orbit whose two turning points have merged, an unstable one is the meeting point of a bound or scattering branch with the horizon-connected branch.

\subsection{Equatorial circular timelike orbits}

For $\mathcal Q=0$, Eq.~\eqref{KerrVieta4} shows that $r=0$ is a root of $R$. A circular orbit in the equatorial plane at $r=r_c$ therefore corresponds to
\begin{equation}
 R=(\varepsilon^2-1)\,r\,(r-r_c)^2(r-\tilde r)\,,
\end{equation}
and Vieta's formulae for the three non-trivial coefficients give
\begin{equation}\label{KerrCircVieta}
 2r_c+\tilde r=\frac{2M}{1-\varepsilon^2}\,,
 \qquad
 r_c^2+2r_c\tilde r=a^2+\frac{L_z^2}{1-\varepsilon^2}\,,
 \qquad
 r_c^2\tilde r=\frac{2M(L_z-a\varepsilon)^2}{1-\varepsilon^2}\,.
\end{equation}
For given $r_c$, these three relations determine $\varepsilon$, $L_z$, and the remaining root $\tilde r$. Their solutions are the energy and angular momentum found by \citet{BardeenPressTeukolsky:1972},
\begin{align}
 \varepsilon_{\rm pro}
 &=\frac{r^{3/2}-2Mr^{1/2}+aM^{1/2}}
 {r^{3/4}\left(r^{3/2}-3Mr^{1/2}+2aM^{1/2}\right)^{1/2}}\,,
 \label{KerrEPro}\\
 L_{\rm pro}
 &=\frac{M^{1/2}\left(r^2-2a\sqrt{Mr}+a^2\right)}
 {r^{3/4}\left(r^{3/2}-3Mr^{1/2}+2aM^{1/2}\right)^{1/2}}\,,
 \label{KerrLPro}\\
 \varepsilon_{\rm ret}
 &=\frac{r^{3/2}-2Mr^{1/2}-aM^{1/2}}
 {r^{3/4}\left(r^{3/2}-3Mr^{1/2}-2aM^{1/2}\right)^{1/2}}\,,
 \label{KerrERet}\\
 L_{\rm ret}
 &=-\frac{M^{1/2}\left(r^2+2a\sqrt{Mr}+a^2\right)}
 {r^{3/4}\left(r^{3/2}-3Mr^{1/2}-2aM^{1/2}\right)^{1/2}}\,,
 \label{KerrLRet}
\end{align}
where \enquote{pro} and \enquote{ret} denote prograde and retrograde motion for $a>0$, and $r=r_c$.

The first of Eqs.~\eqref{KerrCircVieta} gives the remaining root explicitly, $\tilde r=2M/(1-\varepsilon^2)-2r_c$. For $\varepsilon^2<1$, the stability condition $\tilde r<r_c$ of the previous section becomes $1-\varepsilon^2>2M/(3r_c)$; for $\varepsilon^2\geq1$, circular orbits are unstable. Both cases are covered by the single criterion
\begin{equation}\label{KerrCircStability}
 \varepsilon^2<1-\frac{2M}{3r_c}\qquad\hbox{(stable circular orbit)}\,.
\end{equation}
The \emph{innermost stable circular orbit} (ISCO) is the triple root $\tilde r=r_c$, where $R=(\varepsilon^2-1)\,r\,(r-r_c)^3$. Equations~\eqref{KerrCircVieta} then reduce to
\begin{equation}\label{KerrISCOVieta}
 1-\varepsilon^2=\frac{2M}{3r_c}\,,
 \qquad
 L_z^2=\frac{2M\left(3r_c^2-a^2\right)}{3r_c}\,,
 \qquad
 (L_z-a\varepsilon)^2=\frac{r_c^2}{3}\,.
\end{equation}
Substituting the first two relations into the third and eliminating the square roots yields
\begin{equation}\label{KerrISCOEquation}
 r_c^2-6Mr_c\pm8a\sqrt{Mr_c}-3a^2=0\,,
\end{equation}
with the upper sign for prograde and the lower sign for retrograde orbits. The relevant roots are
\begin{align}
 r_{\rm ISCO}^{\rm pro}
 &=M\left[3+Z_2-\sqrt{(3-Z_1)(3+Z_1+2Z_2)}\right]\,,\\
 r_{\rm ISCO}^{\rm ret}
 &=M\left[3+Z_2+\sqrt{(3-Z_1)(3+Z_1+2Z_2)}\right]\,,
\end{align}
where
\begin{align}
 Z_1&:=1+\left(1-\frac{a^2}{M^2}\right)^{1/3}
 \left[\left(1+\frac aM\right)^{1/3}
 +\left(1-\frac aM\right)^{1/3}\right]\,,\\
 Z_2&:=\left(3\frac{a^2}{M^2}+Z_1^2\right)^{1/2}\,,
\end{align}
and the first two of Eqs.~\eqref{KerrISCOVieta} give the corresponding energy and angular momentum, $\varepsilon_{\rm ISCO}$ and $L_{\rm ISCO}$, directly. For $a=0$, one recovers $r_{\rm ISCO}=6M$, $\varepsilon^2_{\rm ISCO}=8/9$, and $L_{\rm ISCO}^2=12M^2$. In the opposite direction, as $\varepsilon\to1$, the first of Eqs.~\eqref{KerrCircVieta} sends $\tilde r$ to infinity: the quartic degenerates into the cubic of Section~\ref{Sec:KerrMarginal}, whose double root is the marginally bound circular orbit $r_{\rm mb}$ derived there. The dependence of these radii on the spin is shown in Fig.~\ref{KerrRadii}.

% Fig_KerrRadii.tex -- generated by checks/fig_kerr_radii.py
\begin{figure}[htbp]
\centering
\colorlet{KHor}{black}
\colorlet{KPh}{blue!70!black}
\colorlet{KMb}{red!75!black}
\colorlet{KIsco}{teal!80!black}
\includegraphics{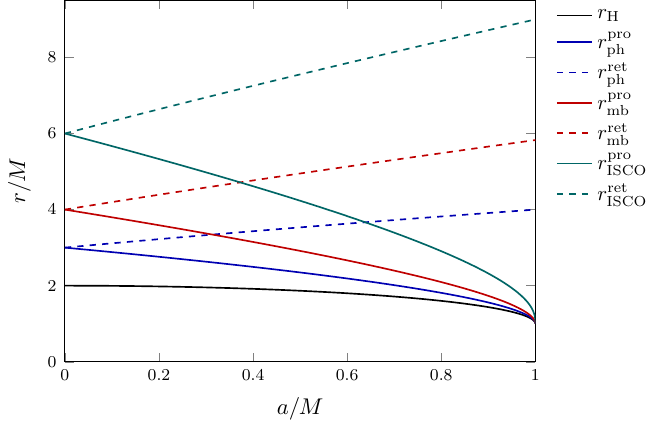}
\caption{Characteristic equatorial radii as functions of the spin: event horizon $r_{\rm H}$, light ring $r_{\rm ph}$, marginally bound orbit $r_{\rm mb}$, and ISCO. Solid lines are prograde and dashed lines retrograde. At $a=0$ they reduce to $2M$, $3M$, $4M$, and $6M$; at $a=M$ the prograde radii all tend to $M$, while the retrograde ones tend to $4M$, $(3+2\sqrt2)M$, and $9M$.}
\label{KerrRadii}
\end{figure}

In the equatorial plane, the same information can be displayed as an effective potential, as in Section~\ref{Sec:EffPot}, although it must be read off from $R$ rather than added to it. For $\mathcal Q=0$, Eq.~\eqref{KerrRCompact} is a quadratic in $\varepsilon$,
\begin{equation}
 R=\alpha(r)\,\varepsilon^2-4aMrL_z\,\varepsilon
 +a^2L_z^2-\Delta\left(r^2+L_z^2\right)\,,
 \qquad
 \alpha(r):=(r^2+a^2)^2-a^2\Delta\,,
\end{equation}
so that $R=\alpha(\varepsilon-V_{\rm eq})(\varepsilon-\tilde V_{\rm eq})$, with
\begin{equation}\label{KerrVeqDef}
 V_{\rm eq}(r)=\frac{2aMrL_z+r\sqrt{\Delta\left[r^4+L_z^2r^2+a^2(r^2+2Mr)\right]}}{\alpha(r)}\,,
\end{equation}
and $\tilde V_{\rm eq}$ given by the same expression with the opposite sign in front of the square root. Future-directed motion outside the horizon is allowed where $V_{\rm eq}(r)\leq\varepsilon$; the branch $\varepsilon\leq\tilde V_{\rm eq}$ is its time reverse. At the horizon, $V_{\rm eq}=\tilde V_{\rm eq}=\Omega_{\rm H}L_z$, so the condition $\varepsilon\geq V_{\rm eq}$ reduces there to Eq.~\eqref{HorizonForwardCondition}; for $L_z<0$, the potential is negative near the horizon, which is the equatorial picture of the negative-energy orbits of the Penrose process. The circular orbits above are the extrema of $V_{\rm eq}$, the ISCO is an inflection point, and the marginally bound orbit is a maximum with $V_{\rm eq}=1$. For $\varepsilon>0$, the root topologies of Table~\ref{tab:SchwPotentialClassification} carry over with $\varepsilon$ in place of $\varepsilon^2$, the local maximum and minimum of $V_{\rm eq}$ in place of $2V_{\max}$ and $2V_{\min}$, and the thresholds $|L_z|=|L_{\rm ISCO}|$ and $|L_z|=|L_{\rm mb}|$, taken separately for prograde and retrograde orbits, in place of $L^2=12M^2$ and $L^2=16M^2$. Off the equatorial plane, $R$ is still quadratic in $\varepsilon$, but the potential then depends on both $L_z$ and $\mathcal Q$, and its shape would have to be controlled over a two-parameter family; the root arguments of this chapter apply directly instead.

\section{Null geodesics and the Kerr photon shell}\label{Sec:KerrNull}

Outside the ergoregion, future orientation requires $\varepsilon>0$; null geodesics with $\varepsilon=0$ exist only inside it, and we do not consider them further. For $\varepsilon\neq0$, introduce the impact parameters
\begin{equation}
 \xi:=\frac{L_z}{\varepsilon}\,,
 \qquad
 \eta:=\frac{\mathcal Q}{\varepsilon^2}\,,
\end{equation}
which for $a=0$ satisfy $\xi^2+\eta=b^2$, with $b$ the impact parameter of Chapter~\ref{Sec:Schgeod}. After division by $\varepsilon^2$, the radial and polar polynomials are
\begin{align}
 \mathcal R(r)
 &:=
 \frac{R(r)}{\varepsilon^2}
 =r^4+(a^2-\xi^2-\eta)r^2
 +2Mr\left[\eta+(\xi-a)^2\right]-a^2\eta\,,
 \label{NullKerrR}\\
 \frac{F(z)}{\varepsilon^2}
 &=-a^2z^2+(a^2-\xi^2-\eta)z+\eta\,.
 \label{NullKerrF}
\end{align}
The coefficient $a^2-\xi^2-\eta=B/\varepsilon^2$ appears in both. For the four radial roots,
\begin{align}
 \sum_i r_i&=0\,,\\
 \sum_{i<j}r_ir_j&=a^2-\xi^2-\eta\,,\\
 \sum_{i<j<k}r_ir_jr_k&=-2M\left[\eta+(\xi-a)^2\right]\,,\\
 \prod_i r_i&=-a^2\eta\,.
\end{align}
Since $\mathcal R(r_{\rm H})\geq0$ and $\mathcal R>0$ at large $r$, the parity argument of Section~\ref{Sec:KerrUnbound} gives an even number of exterior roots \citep[cf.][]{GrallaLupsasca:2020}, and four are excluded because the roots sum to zero. Away from multiple roots, a photon therefore has either no exterior turning point, in which case it is captured (or, time-reversed, emerges from the white-hole region), or two, in which case it is scattered at the outer one. In particular, there is no finite-width band of radially bound null motion \citep{Teo:2003,GrallaLupsasca:2020}.

For $\eta<0$, the photon is vortical, and the condition $B>0$ of Eq.~\eqref{VorticalConditions} states that the coefficient of $r^2$ in Eq.~\eqref{NullKerrR} is positive. Every coefficient of $\mathcal R$ is then non-negative, with a positive constant term, so $\mathcal R>0$ for $r>0$: \emph{vortical photons have no radial turning point}, in agreement with the classification of \citet{GrallaLupsasca:2020}. They fall into the black hole or emerge from the white-hole region.

For $\eta>0$, the product of the roots is negative, so an odd number of roots are real and negative. Suppose three were negative, $-x,-y,-z$, and one positive, $w$. The vanishing sum gives $w=x+y+z$, and the third Vieta relation becomes
\begin{equation}
 w(xy+yz+zx)-xyz=-2M\left[\eta+(\xi-a)^2\right]<0\,.
\end{equation}
But $(x+y+z)(xy+yz+zx)\geq9xyz$ by the arithmetic--geometric mean inequality applied to each factor, a contradiction. Hence, when all roots are real, exactly one is negative and three are positive. For $\eta=0$, one root is $r=0$, and the remaining cubic, $r^3+(a^2-\xi^2)r+2M(\xi-a)^2$, has roots summing to zero and a non-positive product, hence a negative root: the possible outcomes are the same as for $\eta>0$.

The discriminant of the null quartic~\eqref{NullKerrR} factorizes; a direct computation gives
\begin{equation}\label{KerrNullDisc}
 \Delta_4=16\left[\eta+(\xi-a)^2\right]^2\mathcal S(\xi,\eta)\,,
\end{equation}
where, with $u:=\eta+\xi^2$,
\begin{align}
 \mathcal S={}&M^2u^2\left(u-27M^2\right)+108M^4a\xi u
 \nonumber\\
 &+a^2\left(-u^3+u^2\xi^2+33M^2u^2-36M^2u\xi^2-54M^4u-108M^4\xi^2\right)
 \nonumber\\
 &-4a^3\xi\left(u^2-u\xi^2-27M^4\right)
 +a^4\left(4\xi^4-2u\xi^2-2u^2+36M^2\xi^2-33M^2u-27M^4\right)
 \nonumber\\
 &-4a^5\xi\left(u-\xi^2\right)
 +a^6\left(\xi^2-u-M^2\right)\,.
\end{align}
For $a=0$ this reduces to $M^2u^2(u-27M^2)$, and $u=\xi^2+\eta$ is the square of the Schwarzschild impact parameter, so that $\mathcal S=0$ reproduces $b=b_c$. The prefactor vanishes for the principal null geodesics, $K=0$, for which $\mathcal R$ is a perfect square.

The sign of $\Delta_4$ does not by itself separate capture from scattering. If $\Delta_4<0$, $\mathcal R$ has two real roots of opposite sign (their product has the sign of $\prod_ir_i<0$, because the complex pair contributes a positive factor); since the number of exterior roots is even, the positive root lies inside the horizon, and the photon is captured. If $\Delta_4>0$, the three positive roots lie either two outside the horizon and one inside, giving scattering, or all three in $(0,r_{\rm C}]$, giving capture. The latter, which is region II in the classification of \citet{GrallaLupsasca:2020}, occurs near extremality at small $\eta$: for $a=0.99M$, $\xi=1.5M$, $\eta=0.004M^2$, the positive roots are $0.0076M$, $0.5174M$ and $0.7724M$, all below $r_{\rm C}\simeq0.8589M$. Descartes' rule at the horizon, Eq.~\eqref{KerrHorizonDescartes}, settles the matter: since $\mathcal R'''(r_{\rm H})=24r_{\rm H}>0$, a photon with $\eta\geq0$ is scattered if and only if $\Delta_4>0$ and at least one of $\mathcal R'(r_{\rm H})$ and $\mathcal R''(r_{\rm H})$ is negative.

The boundary between capture and scattering is formed by the spherical photon orbits. For a double root at $r=r_s$, the conditions
\begin{equation}
 \mathcal R(r_s)=0\,,\qquad \mathcal R'(r_s)=0
\end{equation}
give \citep{Teo:2003}
\begin{align}
 \xi(r_s)
 &=\frac{r_s^2(r_s-3M)+a^2(r_s+M)}
 {a(M-r_s)}\,,\label{PhotonXi}\\
 \eta(r_s)
 &=\frac{r_s^3\left[4a^2M-r_s(r_s-3M)^2\right]}
 {a^2(r_s-M)^2}\,.\label{PhotonEta}
\end{align}
The apparent singularity of these expressions at $a=0$ reflects the restoration of spherical symmetry; the Schwarzschild photon sphere must be recovered before dividing by $a$. By Section~\ref{Sec:KerrSpherical}, every spherical photon orbit in the domain of outer communication is radially unstable \citep[see also][]{Teo:2003}.

Since vortical photons have no turning point, spherical photon orbits require $\eta\geq0$. By Eq.~\eqref{PhotonEta}, this restricts $r_s$ to the \emph{photon shell} bounded by the two equatorial light rings, where $\eta=0$ \citep{Teo:2003}. The radii of the light rings, Eqs.~\eqref{PhotonRingPro}--\eqref{PhotonRingRet} below, were found by \citet{BardeenPressTeukolsky:1972}; here they are obtained directly from Vieta's formulae. For $\eta=0$, $\mathcal R=r\,(r^3+(a^2-\xi^2)r+2M(\xi-a)^2)$; since the cubic has no $r^2$ term, a double root $r_{\rm ph}$ forces the third root to be $-2r_{\rm ph}$, so that the cubic equals $(r-r_{\rm ph})^2(r+2r_{\rm ph})=r^3-3r_{\rm ph}^2r+2r_{\rm ph}^3$. Comparing coefficients,
\begin{equation}
 \xi^2=a^2+3r_{\rm ph}^2\,,\qquad (\xi-a)^2=\frac{r_{\rm ph}^3}{M}\,,
\end{equation}
and eliminating $\xi$ gives $\sqrt{r_{\rm ph}}\,(r_{\rm ph}-3M)=\mp2a\sqrt M$, that is, $r_{\rm ph}(r_{\rm ph}-3M)^2=4a^2M$, the vanishing of the bracket in Eq.~\eqref{PhotonEta}. The upper sign corresponds to prograde ($\xi>0$) and the lower to retrograde ($\xi<0$) orbits, and the roots outside the horizon are
\begin{align}
 r_{\rm ph}^{\rm pro}
 &=2M\left[1+\cos\left(\frac23\arccos\left(-\frac aM\right)\right)\right]\,,\label{PhotonRingPro}\\
 r_{\rm ph}^{\rm ret}
 &=2M\left[1+\cos\left(\frac23\arccos\left(\frac aM\right)\right)\right]\,.\label{PhotonRingRet}
\end{align}
For $a=0$, both reduce to $r_{\rm ph}=3M$ with $\xi^2=27M^2$, the critical impact parameter $b_c=3\sqrt3\,M$ of Chapter~\ref{Sec:Schgeod}. Interior values of $r_s$ describe constant-$r$ photons oscillating in latitude (Fig.~\ref{Fig:KerrPhotonOrbits}). Near the prograde light ring they stay close to the equator and co-rotate; the band widens inward from both light rings until, at the radius where $\xi=0$, the orbit passes over the poles; near the retrograde light ring the orbits again hug the equator, now counter-rotating. Together, these orbits form the critical set separating capture from escape, and they underlie the critical curve of a Kerr black hole. Figure~\ref{KerrShell} shows the curve $(\xi(r_s),\eta(r_s))$ traced as $r_s$ spans the photon shell. Photons with $\eta>0$ whose impact parameters lie in the region enclosed by the curve and the $\xi$ axis are captured; those outside it are scattered. Since a double root of $\mathcal R$ is a zero of its discriminant, the curve is the part of the zero set of $\mathcal S$, Eq.~\eqref{KerrNullDisc}, on which the double root lies outside the horizon \citep[see also][]{GrallaLupsasca:2020}. The zero set can have a second branch with $\eta>0$, made of spherical photon orbits inside the Cauchy horizon (for $a=0.99M$ and $\xi=1.5M$, at $\eta\simeq0.106M^2$, with a double root at $r\simeq0.31M$); it lies in the capture region and plays no role in the exterior. The orbit $r_s=3M$ survives for every spin: Eqs.~\eqref{PhotonXi}--\eqref{PhotonEta} give $\xi=-2a$ and $\eta=27M^2$ there, and since
\begin{equation}
 27M^2-\eta(r_s)=\frac{(r_s-3M)^2\left(r_s^4-4Ma^2r_s+3M^2a^2\right)}{a^2(r_s-M)^2}\,,
\end{equation}
whose quartic factor is bounded below by $3M^{4/3}a^2\bigl(M^{2/3}-a^{2/3}\bigr)\geq0$, this is the maximum of $\eta$ over the shell, as \citet{Teo:2003} observed.

% Fig_KerrShell.tex -- generated by checks/fig_kerr_shell.py
\begin{figure}[htbp]
\centering
\colorlet{KZero}{blue!70!black}
\colorlet{KHalf}{orange!85!black}
\colorlet{KNine}{teal!80!black}
\colorlet{KMax}{red!75!black}
\includegraphics{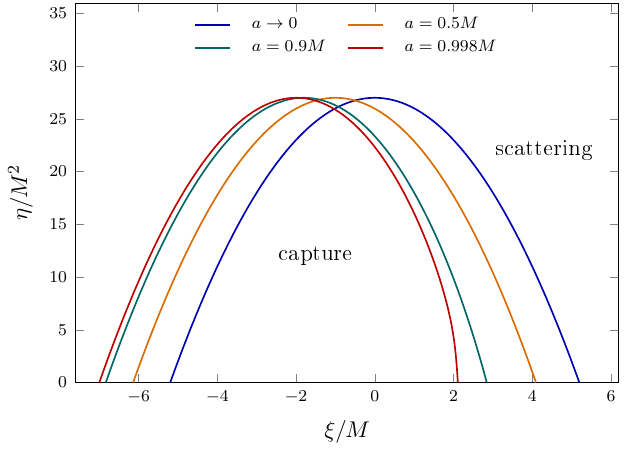}
\caption{Impact parameters $(\xi,\eta)$ of spherical photon orbits, Eqs.~\eqref{PhotonXi}--\eqref{PhotonEta}, for several spins. For $a\to0$ the curve tends to the Schwarzschild parabola $\xi^2+\eta=27M^2$. For every spin, the maximum $\eta=27M^2$ is reached at $\xi=-2a$, by the spherical orbit $r_s=3M$. The endpoints on the $\xi$ axis are the prograde ($\xi>0$) and retrograde ($\xi<0$) equatorial light rings.}
\label{KerrShell}
\end{figure}

% Fig_KerrPhotonOrbits.tex -- generated by checks/fig_kerr_photon_orbits.py
% a=0.9M; spherical photon orbits at r_s=1.75, 2.53, 3.00, 3.75 M (photon shell 1.558M--3.910M)
\begin{figure}[htbp]
\centering
\colorlet{KPhot}{teal!80!black}
\includegraphics{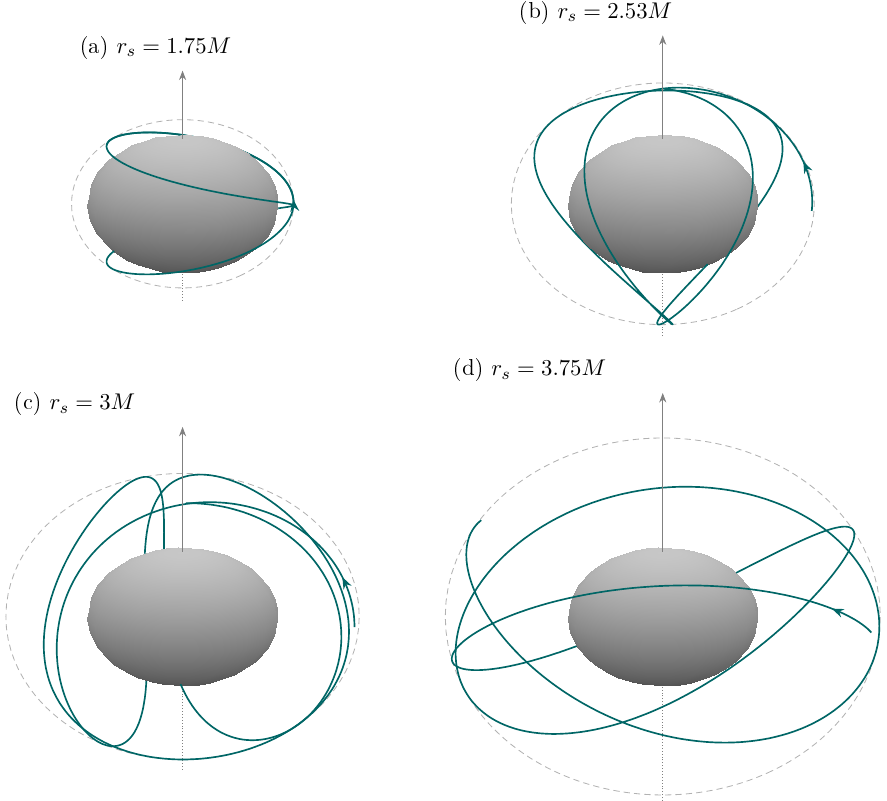}
\caption{Spherical photon orbits for $a=0.9M$, at four radii across the photon shell $r_{\rm ph}^{\rm pro}\simeq1.56M\leq r_s\leq r_{\rm ph}^{\rm ret}\simeq3.91M$, in Kerr--Schild Cartesian coordinates. Each orbit lies on the surface $r=r_s$ (dashed outline) and oscillates in latitude while it moves in azimuth; the arrows mark the direction of motion. (a)~Near the prograde light ring, $\xi\simeq2.37M$: the orbit stays within $47^\circ$ of the equator and co-rotates with the hole. (b)~$\xi\simeq0.11M$: the orbit passes within $1.3^\circ$ of the poles and is dragged in the prograde direction; at $r_s\simeq2.56M$, where $\xi=0$, it would cross the spin axis. (c)~$r_s=3M$, $\xi=-2a$, $\eta=27M^2$, the maximum of $\eta$ over the shell: the orbit counter-rotates and reaches $\theta\simeq19^\circ$. (d)~Near the retrograde light ring, $\xi\simeq-5.82M$: the orbit stays within $28^\circ$ of the equator and counter-rotates. Each orbit corresponds to one point of the $a=0.9M$ curve in Fig.~\ref{KerrShell}.}
\label{Fig:KerrPhotonOrbits}
\end{figure}

\section{Worked examples}\label{Sec:KerrExamples}

We close the chapter with four examples, in which the root pattern predicted from the signs of the coefficients, Vieta's formulae, the discriminant, and Descartes' rule at the horizon is compared with the numerically computed roots. We set $M=1$, so that all lengths are in units of $M$, and numbers are rounded to the digits shown.

For a given set of constants $(\varepsilon,L_z,\mathcal Q)$, the analysis of this chapter reduces to the following steps:
\begin{enumerate}
 \item determine the allowed polar interval from $F(z)$, Eqs.~\eqref{PolarVieta}--\eqref{VorticalConditions};
 \item restrict the radial root pattern with Vieta's formulae, Eqs.~\eqref{KerrVieta1}--\eqref{KerrVieta4}, the parity argument, and the signs of the coefficients of Eq.~\eqref{KerrRWithB};
 \item decide whether the roots are real from the sign of $\Delta_4$ (Section~\ref{Sec:Discriminants}) and, if they are, count those outside the horizon with Descartes' rule at $r_{\rm H}$, Eq.~\eqref{KerrHorizonDescartes};
 \item at a multiple root, read off the stability from the position of the remaining roots, Eq.~\eqref{KerrDoubleRootStability};
 \item impose future orientation, Eqs.~\eqref{HorizonForwardCondition} and~\eqref{ZAMOForwardCondition}.
\end{enumerate}
The examples below follow these steps.

\begin{example}[A generic bound Kerr orbit]\label{Ex:KerrBound}
Take $a=0.9$, so that $r_{\rm H}\simeq1.4359$ and $r_{\rm C}\simeq0.5641$, and ask for the orbit with $p=5$, $e=0.3$ and inclination $i=50^\circ$, Eq.~\eqref{KerrInclination}; this is the bound orbit of Fig.~\ref{Fig:KerrBoundOrbits}a. Its turning points are $r_{\rm p}=50/13\simeq3.8462$ and $r_{\rm a}=50/7\simeq7.1429$, and the three conditions $R(r_{\rm p})=R(r_{\rm a})=0$ and $\cos i=L_z/\sqrt{L_z^2+\mathcal Q}$ fix
\begin{equation}
 \varepsilon=0.922488\,,\qquad L_z=1.828999\,,\qquad \mathcal Q=4.751162\,.
\end{equation}
\begin{enumerate}
 \item \emph{Polar motion.} Here $A=a^2(\varepsilon^2-1)=-0.12070<0$ and $\mathcal Q>0$, so the orbit crosses the equatorial plane. By Eq.~\eqref{PolarVieta}, the roots of $F$ have product $-\mathcal Q/A=39.363$ and sum $B/A=68.077$; they are $z_{\max}=0.58320$ and $67.494$, the latter outside $[0,1]$. The particle oscillates between $\theta_{\min}\simeq40.2^\circ$ and $139.8^\circ$.
 \item \emph{Signs and parity.} The coefficients of $R$ are $(-0.14902,\allowbreak\,2,\allowbreak\,-8.21710,\allowbreak\,11.49737,\allowbreak\,-3.84844)$. Every coefficient of $R(-r)$ is negative, so no root is negative, and since $\varepsilon^2<1$, the number of exterior roots is odd: one or three.
 \item \emph{Discriminant and Descartes' rule at the horizon.} The discriminant is $\Delta_4\simeq12.48>0$, so all four roots are real. The sequence~\eqref{KerrHorizonDescartes} is $(-0.149,\allowbreak\,1.144,\allowbreak\,-1.445,\allowbreak\,-1.494,\allowbreak\,1.006)$, with three sign changes: exactly three roots lie outside the horizon, and a bound branch exists.
 \item \emph{Multiple roots.} None.
 \item \emph{Future orientation.} The orbit stays outside the ergoregion, where $\varepsilon>0$ is all that is required.
\end{enumerate}
The numerical roots are $r_1=7.1429$, $r_2=3.8462$, $r_3=1.9505$ and $r_4=0.4820$. They satisfy Eqs.~\eqref{KerrVieta1}--\eqref{KerrVieta4} to the digits shown, $\sum r_i=13.4215$, $\sum_{i<j}r_ir_j=55.1427$, $\sum_{i<j<k}r_ir_jr_k=77.1557$, and $\prod r_i=25.8258$; and, as predicted in Section~\ref{Sec:KerrBound}, the fourth root lies in $(0,r_{\rm C}]$. The turning point $r_3$ of the horizon-connected branch lies well below the periapsis $r_2$; since the separatrix is reached when these two roots merge, this orbit is far from plunging.
\end{example}

\begin{example}[A vortical photon]\label{Ex:KerrVortical}
Take $a=0.9$, $\xi=0.5$ and $\eta=-0.1$. Then $B/\varepsilon^2=a^2-\xi^2-\eta=0.66$ lies between $0$ and $2A/\varepsilon^2=2a^2=1.62$, and $D_\theta/\varepsilon^4=(a^2-\xi^2-\eta)^2+4a^2\eta=0.1116>0$, so the conditions~\eqref{VorticalConditions} hold. The polar roots are $z_{\min}=0.2012$ and $z_{\max}=0.6136$: the photon is confined to the band $38.4^\circ\leq\theta\leq63.3^\circ$, or its mirror image, and never crosses the equatorial plane. The radial polynomial is
\begin{equation}
 \mathcal R=r^4+0.66\,r^2+0.12\,r+0.081\,,
\end{equation}
with only positive coefficients, so there is no positive root; indeed, the four roots are $0.1316\pm0.7585\,i$ and $-0.1316\pm0.3455\,i$. The photon has no radial turning point: followed forward in time, an ingoing photon falls into the black hole, and an outgoing one, emerging from the white-hole region, escapes to infinity.
\end{example}

\begin{example}[A spherical photon orbit]\label{Ex:KerrSpherical}
Take $a=0.9$ and $r_s=2.5$, which lies inside the photon shell $1.5579<r<3.9103$ for this spin. Equations~\eqref{PhotonXi}--\eqref{PhotonEta} give $\xi=0.21481$ and $\eta=22.4194$. The radial roots are the double root $2.5$, together with $s_1=0.5258$ and $s_2=-5.5258$; their sum is zero, as required. By Eq.~\eqref{KerrDoubleRootStability}, $\mathcal R''(r_s)$ has the sign of $(r_s-s_1)(r_s-s_2)=1.974\times8.026>0$, so the orbit is unstable, as is every spherical photon orbit outside the horizon: nearby photons either escape or are captured. The polar roots are $z_{\max}=0.99802$ and $-27.733$, so the orbit reaches $\theta_{\min}\simeq2.6^\circ$; since $\xi$ is small, it passes almost over the poles while remaining at constant radius, like the orbit of Fig.~\ref{Fig:KerrPhotonOrbits}b.
\end{example}

\begin{example}[When the discriminant is not enough]\label{Ex:KerrDiscriminant}
Take $a=0.86$, so that $r_{\rm H}\simeq1.5103$ and $r_{\rm C}\simeq0.4897$, and $\varepsilon=0.9$, $L_z=1.13$, $\mathcal Q=0.0066$.
\begin{enumerate}
 \item \emph{Polar motion.} As in Example~\ref{Ex:KerrBound}, $A<0$ and $\mathcal Q>0$: the orbit crosses the equatorial plane.
 \item \emph{Signs and parity.} The coefficients of $R$ are $(-0.19,\allowbreak\,2,\allowbreak\,-1.42402,\allowbreak\,0.26667,\allowbreak\,-0.00488)$. Again, no root is negative, and the number of exterior roots is odd.
 \item \emph{Discriminant and Descartes' rule at the horizon.} The discriminant is positive, so all four roots are real and non-negative, exactly as in Example~\ref{Ex:KerrBound}. A bound branch might therefore be expected. However, the sequence~\eqref{KerrHorizonDescartes} is $(-0.190,\allowbreak\,0.852,\allowbreak\,5.037,\allowbreak\,7.033,\allowbreak\,3.051)$, with a single sign change: only one root lies outside the horizon, and there is no bound branch.
\end{enumerate}
The numerical roots confirm this: $r_1=9.7742$ is the only exterior root, while $0.4399$, $0.2917$ and $0.0205$ all lie inside the Cauchy horizon, in agreement with Eq.~\eqref{RBetweenHorizons}. The only exterior motion is the horizon-connected branch $r_{\rm H}<r\leq9.7742$: a particle at its apoapsis $r_1$ falls into the black hole. The sign of $\Delta_4$ tells us that the roots are real, but only Descartes' rule at the horizon tells us where they are.
\end{example}

\chapter{Geodesics in Schwarzschild--de~Sitter spacetime}\label{Sec:SdS}

The Schwarzschild--de~Sitter, or Kottler, solution \citep{Kottler:1918} describes a spherically symmetric black hole in a universe with a positive cosmological constant $\Lambda$. Far from the hole, it approaches the de~Sitter spacetime, and it has two horizons: the event horizon of the black hole and a cosmological horizon, beyond which the repulsion due to $\Lambda$ dominates. Its horizons, its static radius, and its circular orbits with their stability were analyzed by \citet{StuchlikHledik:1999}, building on earlier work by Stuchl\'\i k; most of the results of Sections~\ref{Sec:SdSMetric}--\ref{Sec:SdSTimelikePot} can be found there. The complete analytic solution of the geodesic equation, in terms of hyperelliptic functions, was given by \citet{HackmannLaemmerzahl:2008}, who also classified the orbits by the number of positive zeros of the radial polynomial.

We return to spherical symmetry with the tools of the previous chapters. The treatment follows Chapter~\ref{Sec:Schgeod}, and most results reduce to the Schwarzschild ones as $\Lambda\to0$. What changes is the algebra: the radial polynomial of timelike geodesics becomes a quintic, for which no general formula in radicals exists, and the region of physical interest is bounded by two horizons instead of one horizon and infinity. Section~\ref{Sec:SdSMetric} locates the horizons and the static radius, Section~\ref{Sec:SdSGeodesics} adapts the geodesic equations and the effective potential, and Section~\ref{Sec:SdSTimelikePot} classifies the timelike motion with the potential. Section~\ref{Sec:SdSVieta} obtains the same classification from Vieta's formulae and Descartes' rule of signs, Section~\ref{Sec:SdSNull} treats photons, and Section~\ref{Sec:SdSExamples} closes with worked examples.

\section{Metric, horizons, and the static radius}\label{Sec:SdSMetric}

In Schwarzschild coordinates, the line element has the form~\eqref{Schwmetric} with
\begin{equation}\label{SdSf}
 f(r)=1-\frac{2M}{r}-\frac{\Lambda r^2}{3}\,,
 \qquad
 M>0\,,\quad\Lambda>0\,.
\end{equation}
The horizons are the positive zeros of $f$. Multiplying by $-3r$ turns the condition $f=0$ into the cubic equation
\begin{equation}\label{SdSHorizonCubic}
 h(r):=-3rf(r)=\Lambda r^3-3r+6M=0\,,
\end{equation}
whose roots obey
\begin{equation}
 r_1+r_2+r_3=0\,,\qquad
 r_1r_2+r_1r_3+r_2r_3=-\frac3\Lambda\,,\qquad
 r_1r_2r_3=-\frac{6M}{\Lambda}\,.
\end{equation}
The product is negative, so the number of negative roots is odd, and it cannot be three, because the roots add up to zero. Exactly one root is negative, and the other two are either positive or a complex-conjugate pair. The discriminant of the cubic, Eq.~\eqref{CubicDisc}, is
\begin{equation}
 \Delta_3=108\,\Lambda\left(1-9\Lambda M^2\right)\,,
\end{equation}
so the two remaining roots are real and distinct if and only if $9\Lambda M^2<1$ \citep{StuchlikHledik:1999}. We assume this from now on, and call them $r_{\rm H}<r_{\rm C}$: the event horizon and the \emph{cosmological horizon}. Since the roots add up to zero, the negative root is $-(r_{\rm H}+r_{\rm C})$, and the other two Vieta relations become
\begin{equation}
 r_{\rm H}^2+r_{\rm H}r_{\rm C}+r_{\rm C}^2=\frac3\Lambda\,,
 \qquad
 r_{\rm H}r_{\rm C}\left(r_{\rm H}+r_{\rm C}\right)=\frac{6M}{\Lambda}\,.
\end{equation}
The values $h(0)=6M$, $h(2M)=8\Lambda M^3$, $h(3M)=-3M(1-9\Lambda M^2)$, and $h(\sqrt{3/\Lambda})=6M$ locate the horizons without solving the cubic:
\begin{equation}
 2M<r_{\rm H}<3M<r_{\rm C}<\sqrt{\frac3\Lambda}\,.
\end{equation}
The cosmological constant thus moves the event horizon slightly outward, and the cosmological horizon lies inside the horizon $\sqrt{3/\Lambda}$ of the empty de~Sitter spacetime. As $9\Lambda M^2\to1$ the two horizons merge at $r=3M$; for $9\Lambda M^2>1$ there is no horizon at all, and the spacetime is no longer a black hole.

Between the horizons $f>0$ and $\partial_t$ is timelike: this \emph{static region}, $r_{\rm H}<r<r_{\rm C}$, plays the role of the domain of outer communication, and it is the region we consider. Outside it, $f<0$: inside the event horizon, as in Schwarzschild spacetime, and beyond the cosmological horizon in the expanding region of the de~Sitter universe.

\paragraph{The static radius.} The function $f$ has a single maximum in the static region, where
\begin{equation}
 f'(r)=\frac{2M}{r^2}-\frac{2\Lambda r}{3}=0\,,
\end{equation}
that is, at the \emph{static radius}
\begin{equation}\label{SdSStaticRadius}
 r_{\rm st}=\left(\frac{3M}{\Lambda}\right)^{1/3}\,,
 \qquad
 f(r_{\rm st})=1-\frac{3M}{r_{\rm st}}=1-\left(9\Lambda M^2\right)^{1/3}\,.
\end{equation}
For radial timelike motion, $L=0$ and $V_1=f/2$, so the radial equation of the Remark in Section~\ref{Sec:SchwEqs} becomes $\ddot r=-f'/2$. At $r_{\rm st}$, therefore, the attraction of the black hole is balanced by the cosmic repulsion. A particle at rest there stays at rest: the static observer at $r_{\rm st}$ is a geodesic, although an unstable one, since $f$ has a maximum \citep{StuchlikHledik:1999}. Inside $r_{\rm st}$, gravity attracts, and outside it, $\Lambda$ repels.

The observed value $\Lambda\simeq1.1\times10^{-52}\,{\rm m^{-2}}$ makes these scales astronomical. For one solar mass, $\Lambda M^2\simeq2\times10^{-46}$ and $r_{\rm st}\simeq110\,{\rm pc}$; for $10^{12}$ and $10^{15}$ solar masses, which are typical of a large galaxy and a rich cluster, $r_{\rm st}\simeq1.1$ and $11\,{\rm Mpc}$, while $r_{\rm C}\simeq\sqrt{3/\Lambda}\simeq5\,{\rm Gpc}$. Near a black hole, $\Lambda$ is utterly negligible, and its effects concern the motion at the scale of $r_{\rm st}$. In the figures of this chapter, we take the much larger value $\Lambda M^2=10^{-4}$, for which all the characteristic radii fit in one picture: $r_{\rm H}\simeq2.0003M$, $r_{\rm st}\simeq31.07M$, and $r_{\rm C}\simeq172.2M$. Figure~\ref{SdSRadii} shows how these radii, and those of the special circular orbits of Section~\ref{Sec:SdSCircular}, depend on $\Lambda$.

\begin{figure}[htbp]
\centering
\includegraphics{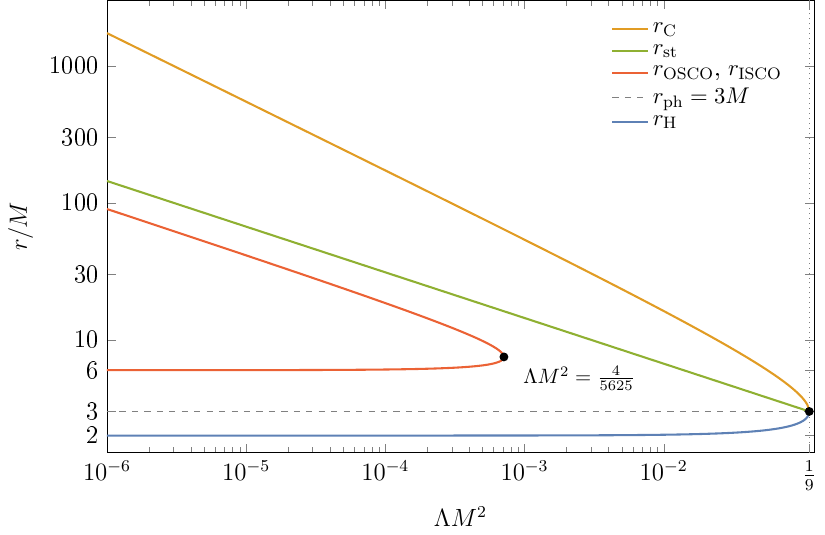}
 \caption{Characteristic radii of the Schwarzschild--de~Sitter spacetime as functions of $\Lambda M^2$: the event horizon $r_{\rm H}$, the photon sphere $r_{\rm ph}=3M$, the innermost and outermost stable circular orbits, the static radius $r_{\rm st}$, and the cosmological horizon $r_{\rm C}$. As $\Lambda\to0$, $r_{\rm H}\to2M$ and $r_{\rm ISCO}\to6M$, while $r_{\rm OSCO}\simeq4^{-1/3}r_{\rm st}$, $r_{\rm st}$ and $r_{\rm C}$ grow without bound. Stable circular orbits exist only for $\Lambda M^2<4/5625$, where the ISCO and the OSCO merge at $r=15M/2$. At $9\Lambda M^2=1$ the two horizons, the static radius and the photon sphere coalesce at $r=3M$.}
 \label{SdSRadii}
\end{figure}

\section{Geodesic equations and effective potential}\label{Sec:SdSGeodesics}

As noted in the Remark of Section~\ref{Sec:SchwEqs}, nothing in the derivation of the Schwarzschild geodesic equations depends on the form of $f$. The motion is planar, $\varepsilon=f\,\dot t$ and $L=r^2\dot\varphi$ are conserved, and the radial motion obeys
\begin{equation}\label{SdSRadialFirst}
 \dot r^2=\varepsilon^2-f(r)\left(\kappa+\frac{L^2}{r^2}\right)\,,
 \qquad
 V_\kappa(r)=\frac12\,f(r)\left(\kappa+\frac{L^2}{r^2}\right)\,,
\end{equation}
with $\ddot r=-V_\kappa'(r)$ and $\kappa=1$ or $0$ for timelike or null geodesics. The effective-potential method of Section~\ref{Sec:EffPot} applies without change.

Two interpretations change, however, because the spacetime has no asymptotically flat region. The normalization of the Killing vector $\partial_t$ is inherited from the Schwarzschild metric, but there is no asymptotic region in which static observers are inertial. A static observer at radius $r$ has four-velocity $\partial_t/\sqrt{f}$ and measures the energy $\varepsilon/\sqrt{f(r)}$ per unit rest mass: $\varepsilon$ is a conserved label of the geodesic, but not the energy at infinity. The particle at rest at the static radius, for instance, has $\varepsilon^2=f(r_{\rm st})<1$. In the same way, the ratio $b=L/\varepsilon$ still labels null geodesics, but it is no longer the impact parameter of a straight line at infinity. In the static region, $f>0$, and $\varepsilon=f\,\dot t>0$ for every future-directed geodesic.

The potential vanishes at both horizons and is positive between them. For any $\varepsilon>0$, therefore, a neighborhood of each horizon is always allowed, and a particle can always cross either horizon. As at the event horizon, the crossing of the cosmological horizon takes a finite proper time, since $\dot r^2\to\varepsilon^2$ there, but an infinite coordinate time $t$.

\section{Timelike geodesics: effective-potential analysis}\label{Sec:SdSTimelikePot}

For $\kappa=1$, the effective potential is
\begin{equation}\label{SdSPotential}
 V_1(r)=\frac12\left(1+\frac{L^2}{r^2}\right)\left(1-\frac{2M}{r}-\frac{\Lambda r^2}{3}\right)\,.
\end{equation}
It vanishes at $r_{\rm H}$ and $r_{\rm C}$, and in between it has the shape shown in Fig.~\ref{SdSV}.

\subsection{Stationary points and circular orbits}\label{Sec:SdSCircular}

The stationary points of $V_1$ are the zeros of
\begin{equation}\label{SdSStationary}
 -r^4\,V_1'(r)=\frac\Lambda3\,r^5-Mr^2+L^2r-3ML^2\,,
\end{equation}
which reduces to the Schwarzschild condition $Mr^2-L^2r+3ML^2=0$ for $\Lambda=0$. The coefficients show three changes of sign, so by Descartes' rule, there are three positive stationary points or one; since $V_1$ vanishes at both horizons and is positive between them, there is at least one maximum. The potential, therefore, has either a single maximum or, from the inside out, a maximum, a minimum, and a second maximum. The inner maximum is the centrifugal barrier of Schwarzschild spacetime; the outer one is new, and it is produced by the cosmic repulsion, which pushes particles outward beyond the static radius. We denote the positions of the two maxima by $r_{\max}^{\rm in}<r_{\max}^{\rm out}$, with $r_{\min}$ between them, and the heights of the barriers by $V_{\max}^{\rm in}$ and $V_{\max}^{\rm out}$.

As in Section~\ref{Sec:SchwCircular}, the conditions $V_1(r_c)=\varepsilon^2/2$ and $V_1'(r_c)=0$ provide the constants of motion for a circular orbit of radius $r_c$:
\begin{equation}\label{SdSCircularConstants}
 L^2=\frac{r_c^2\left(M-\Lambda r_c^3/3\right)}{r_c-3M}\,,
 \qquad
 \varepsilon^2=\frac{r_c\,f(r_c)^2}{r_c-3M}
 =\frac{\left(r_c-2M-\Lambda r_c^3/3\right)^2}{r_c\,(r_c-3M)}\,.
\end{equation}
They reduce to Eq.~\eqref{SchwCircularConstants} for $\Lambda=0$ \citep[see][and references therein]{StuchlikHledik:1999}. Since $L^2\geq0$, circular orbits exist only for
\begin{equation}
 3M<r_c\leq r_{\rm st}\,.
\end{equation}
The lower limit is the photon sphere, which does not depend on $\Lambda$. At the upper limit $L=0$ and $\varepsilon^2=f(r_{\rm st})$: the circular orbit degenerates into the particle at rest at the static radius. Beyond $r_{\rm st}$, the repulsion exceeds the attraction, and no centripetal force is available.

\paragraph{Stability.} Substituting Eq.~\eqref{SdSCircularConstants} into $V_1''$ gives
\begin{equation}\label{SdSStability}
 V_1''(r_c)=\frac{S(r_c)}{3r_c^3\,(r_c-3M)}\,,
 \qquad
 S(r):=3M(r-6M)-\Lambda r^3(4r-15M)\,,
\end{equation}
so a circular orbit is stable when $S(r_c)>0$. For $\Lambda=0$, $S=3M(r-6M)$, and we recover the Schwarzschild ISCO. The polynomial $-S(r)=4\Lambda r^4-15\Lambda Mr^3-3Mr+18M^2$ shows two changes of sign, so it has two positive roots or none. When they exist, they are the \emph{innermost} and the \emph{outermost stable circular orbits}, $r_{\rm ISCO}<r_{\rm OSCO}$: stable circular orbits exist only between them. The derivatives of both constants in Eq.~\eqref{SdSCircularConstants} with respect to $r_c$ are proportional to $S(r_c)$, so $L^2$ and $\varepsilon^2$ have a minimum at the ISCO and a maximum at the OSCO. The two roots merge when $S$ and $S'$ vanish together, which happens at $r=15M/2$ for \citep{StuchlikHledik:1999}
\begin{equation}
 \Lambda M^2=\frac{4}{5625}\simeq7.1\times10^{-4}\,.
\end{equation}
For larger $\Lambda$, every circular orbit is unstable. Since $S(6M)=-1944\,\Lambda M^4<0$, the circular orbit at $6M$ is unstable for every $\Lambda>0$: the cosmological constant pushes the ISCO outward. For small $\Lambda$, the OSCO is where $3Mr\simeq4\Lambda r^4$, that is, $r_{\rm OSCO}\simeq4^{-1/3}r_{\rm st}\simeq0.63\,r_{\rm st}$. For $\Lambda M^2=10^{-4}$,
\begin{gather}
 r_{\rm ISCO}\simeq6.069M\,,\qquad L_{\rm ISCO}\simeq3.451M\,,\qquad \varepsilon^2_{\rm ISCO}\simeq0.8857\,,\nonumber\\
 r_{\rm OSCO}\simeq18.53M\,,\qquad L_{\rm OSCO}\simeq4.174M\,,\qquad \varepsilon^2_{\rm OSCO}\simeq0.9253\,.
\end{gather}

The circular orbits thus form three families, shown in Fig.~\ref{SdSCircularFig}: unstable orbits at the inner maximum of the potential, for $3M<r_c<r_{\rm ISCO}$; stable orbits at the minimum, for $r_{\rm ISCO}<r_c<r_{\rm OSCO}$; and unstable orbits at the outer maximum, for $r_{\rm OSCO}<r_c<r_{\rm st}$. Along these families, $L^2$ decreases, increases, and decreases again. Hence, for $L<L_{\rm ISCO}$ or $L>L_{\rm OSCO}$, the potential has a single maximum, the outer or the inner one respectively, while for $L_{\rm ISCO}<L<L_{\rm OSCO}$ it has two maxima and a minimum.

\begin{figure}[htbp]
\centering
\includegraphics{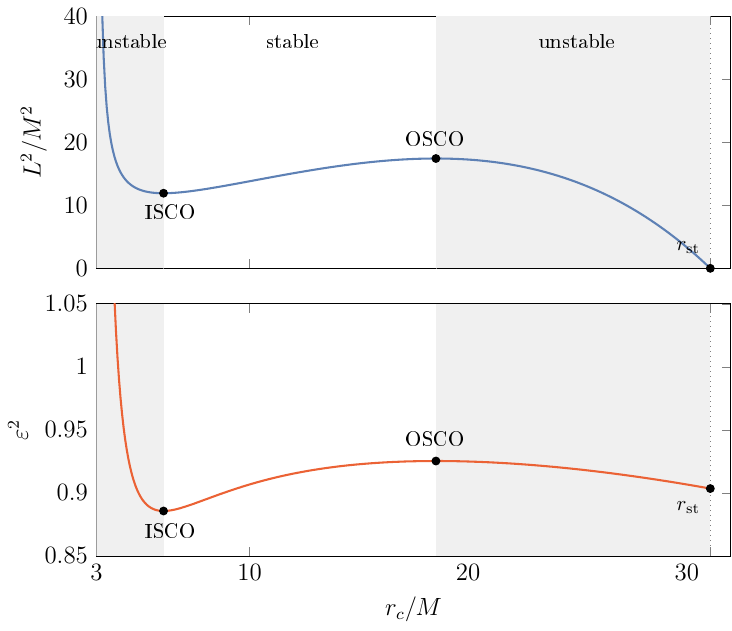}
 \caption{Angular momentum (top) and energy (bottom) of circular timelike orbits as functions of their radius, Eq.~\eqref{SdSCircularConstants}, for $\Lambda M^2=10^{-4}$. Circular orbits exist between the photon sphere, $r_c=3M$, where both constants diverge, and the static radius, $r_{\rm st}\simeq31.07M$, where $L=0$ and $\varepsilon^2=f(r_{\rm st})$. Orbits in the shaded regions are unstable; stable orbits lie between the ISCO, where both constants are minimal, and the OSCO, where both are maximal.}
 \label{SdSCircularFig}
\end{figure}

\begin{figure}[htbp]
\centering
 \includegraphics{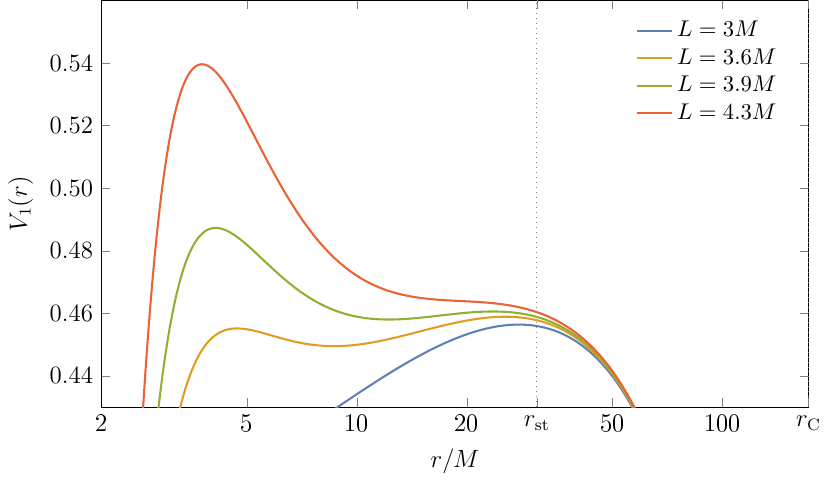}
 \caption{Effective potential $V_1(r)$ for timelike geodesics, Eq.~\eqref{SdSPotential}, with $\Lambda M^2=10^{-4}$, in the static region $r_{\rm H}<r<r_{\rm C}$. The potential vanishes at both horizons. For $L=3M<L_{\rm ISCO}$ it has a single maximum, near the static radius. For $L=3.6M$ and $L=3.9M$, between $L_{\rm ISCO}\simeq3.451M$ and $L_{\rm OSCO}\simeq4.174M$, a centrifugal barrier near the photon sphere, a well, and an outer barrier coexist: the inner barrier is lower than the outer one for $L=3.6M$ and higher for $L=3.9M$. For $L=4.3M>L_{\rm OSCO}$ only the inner barrier is left.}
 \label{SdSV}
\end{figure}

\subsection{Classification of the motion}\label{Sec:SdSClassification}

When there are two barriers, let $V_{\rm high}$ and $V_{\rm low}$ denote the higher and the lower of $V_{\max}^{\rm in}$ and $V_{\max}^{\rm out}$. Comparing the energy level $\varepsilon^2/2$ with the heights of the barriers and the bottom of the well gives the generic motions of Table~\ref{tab:SdSPotentialClassification}, displayed in the plane of the constants of motion in Fig.~\ref{SdSParamSpace}. The branches are those of Section~\ref{Sec:SchwClassification}, with the cosmological horizon in place of infinity.
\begin{description}
 \item[Bound branch,] $r_{\rm p}\leq r\leq r_{\rm a}$. The particle oscillates radially in the well between the two barriers.
 \item[Cosmological branch,] $r_{\rm p}\leq r<r_{\rm C}$. The particle enters the static region through the cosmological horizon, reaches its periapsis, and leaves again. This branch replaces the scattering branch of Schwarzschild spacetime, but it exists for every energy: since $V_1$ vanishes at $r_{\rm C}$, every energy level meets the potential near the cosmological horizon.
 \item[Horizon-connected branch,] $r_{\rm H}<r\leq r_{\rm a}$, as in Schwarzschild spacetime. When there is no turning point at all, the branch connects the two horizons: the particle falls into the black hole from the cosmological horizon, or, followed backward in time, travels from the black hole to the cosmological region.
\end{description}
The cosmological branch is the qualitative novelty. In Schwarzschild spacetime, every particle with $\varepsilon^2<1$ that is not on a bound orbit eventually falls into the black hole. Here, a particle beyond the outer barrier is pushed outward by the cosmic repulsion instead, and leaves through the cosmological horizon. For the same reason, a bound orbit can now be lost in two ways: across the inner barrier, toward the black hole, or across the outer one, toward the cosmological horizon. It is bounded by two separatrices, families of orbits asymptotic to the inner and outer unstable circular orbits respectively.

\begin{table}[htbp]
\centering
\small
\begin{tabular}{@{}ll>{\raggedright\arraybackslash}p{0.4\textwidth}@{}}
\toprule
Angular momentum & Energy range & Radial motion in the static region\\
\midrule
$L<L_{\rm ISCO}$ or $L>L_{\rm OSCO}$
 & $\varepsilon^2>2V_{\max}$ & no turning point\\
 & $\varepsilon^2<2V_{\max}$ & horizon-connected and cosmological branches\\[2pt]
$L_{\rm ISCO}<L<L_{\rm OSCO}$
 & $\varepsilon^2>2V_{\rm high}$ & no turning point\\
 & $2V_{\rm low}<\varepsilon^2<2V_{\rm high}$
 & horizon-connected and cosmological branches\\
 & $2V_{\min}<\varepsilon^2<2V_{\rm low}$
 & horizon-connected, bound and cosmological branches\\
 & $\varepsilon^2<2V_{\min}$
 & horizon-connected and cosmological branches\\
\bottomrule
\end{tabular}
\caption{Generic timelike motions in Schwarzschild--de~Sitter spacetime. For $L<L_{\rm ISCO}$ or $L>L_{\rm OSCO}$ the potential has a single maximum $V_{\max}$, near the static radius or near the photon sphere, respectively. For $L_{\rm ISCO}<L<L_{\rm OSCO}$ it has an inner barrier $V_{\max}^{\rm in}$, a well $V_{\min}$ and an outer barrier $V_{\max}^{\rm out}$; $V_{\rm high}$ and $V_{\rm low}$ are the higher and the lower of the two barriers. When the energy lies between the two barrier heights, the particle passes over the lower barrier but not over the higher one: if the inner barrier is the lower, the horizon-connected branch extends to the outer barrier; if the outer barrier is the lower, the cosmological branch extends down to the inner one. The boundary cases correspond to circular orbits and to orbits approaching them asymptotically.}
\label{tab:SdSPotentialClassification}
\end{table}

\begin{figure}[htbp]
\centering
\includegraphics{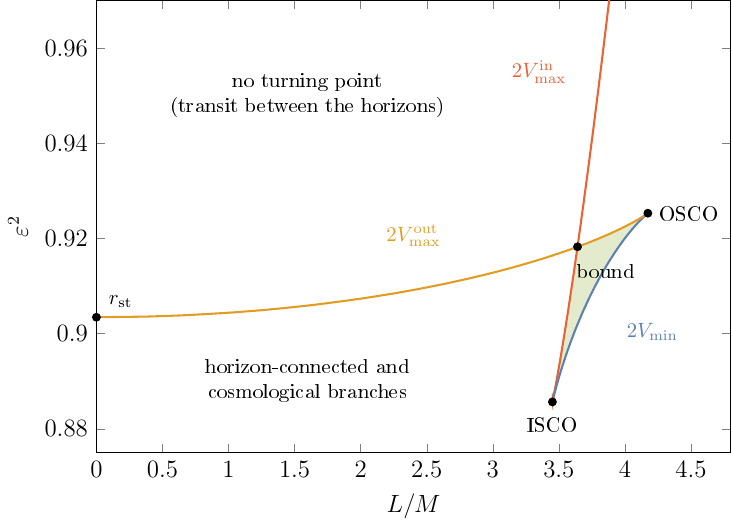}
 \caption{Table~\ref{tab:SdSPotentialClassification} in the plane of the constants of motion, for $\Lambda M^2=10^{-4}$. The red, blue and orange curves are the energies of the inner unstable, stable, and outer unstable circular orbits, parametrized by $r_c$ through Eq.~\eqref{SdSCircularConstants}. The stable family meets the two unstable ones in cusps at the ISCO and the OSCO; the outer family ends at $L=0$ on the particle at rest at the static radius. The two unstable families cross where the two barriers have the same height, Eq.~\eqref{SdSEqualBarriers}. In the green region a bound branch exists, together with a horizon-connected and a cosmological branch. Above the upper envelope of the curves there is no turning point; everywhere else, including the region below the frame, there are a horizon-connected and a cosmological branch. As $\Lambda\to0$, the orange curve tends to the line $\varepsilon^2=1$ and the diagram tends to Fig.~\ref{SchwParamSpace}.}
 \label{SdSParamSpace}
\end{figure}

\begin{figure}[htbp]
\centering
\includegraphics{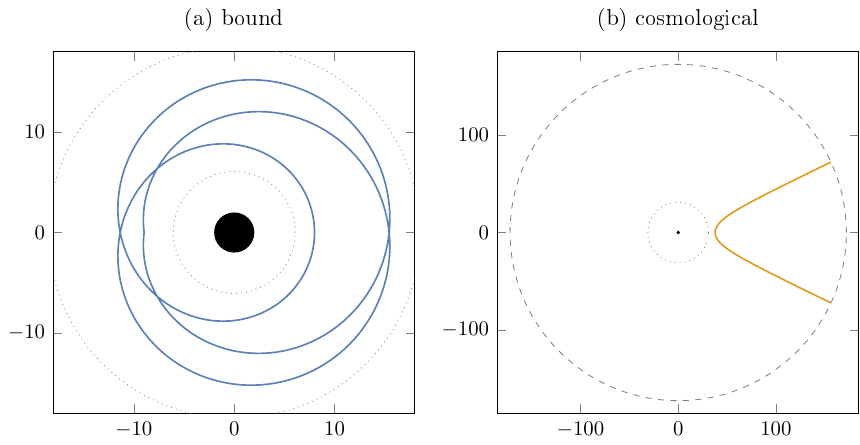}
 \caption{Timelike geodesics in the orbital plane for $\Lambda M^2=10^{-4}$, with $x=r\cos\varphi$ and $y=r\sin\varphi$ in units of $M$. (a)~Bound orbit with $r_{\rm p}=8M$ and $r_{\rm a}=16M$, the orbit of Example~\ref{Ex:SdSBound}; the dotted circles are the ISCO and the OSCO. (b)~Cosmological branch with $L=3M$ and $\varepsilon^2=0.905$: the particle enters through the cosmological horizon (dashed), is turned back at $r_{\rm p}\simeq37.9M$, outside the static radius (dotted), and leaves again. The black hole is the small disk at the center.}
 \label{SdSOrbits}
\end{figure}

\section{Timelike geodesics from Vieta's formulae}\label{Sec:SdSVieta}

\citet{HackmannLaemmerzahl:2008} classified the orbits by the number of positive zeros of the radial polynomial, written in the variable $1/r$, which they mapped numerically in the plane of the constants of motion. Here we show which numbers are possible, and when, from the coefficients alone. Multiplying Eq.~\eqref{SdSRadialFirst} by $r^3$ gives the radial polynomial
\begin{equation}\label{SdSPolynomial}
 P(r):=r^3\dot r^2=\frac\Lambda3\,r^5+\left(\tilde\varepsilon^2-1\right)r^3+2Mr^2-L^2r+2ML^2\,,
 \qquad
 \tilde\varepsilon^2:=\varepsilon^2+\frac{\Lambda L^2}{3}\,.
\end{equation}
It is the Schwarzschild polynomial~\eqref{SchwTimelikePolynomial}, with $\varepsilon^2$ replaced by $\tilde\varepsilon^2$, plus the term $\Lambda r^5/3$. As in Schwarzschild spacetime, positive roots and physical roots are the same thing. For $0<r<r_{\rm H}$ and for $r>r_{\rm C}$ we have $f<0$, hence $V_1<0$ and $P=r^3(\varepsilon^2-2V_1)>0$; at the horizons, $P=r^3\varepsilon^2>0$. Every positive root of $P$ therefore lies in the static region, and since $P$ is positive at both ends of it, the number of these roots, counted with multiplicity, is even: zero, two, or four.

For $L\neq0$, let $r_0,\dots,r_4$ be the five roots. Vieta's formulae give
\begin{align}
 \sum_ir_i&=0\,,\label{SdSVieta1}\\
 \sum_{i<j}r_ir_j&=\frac{3(\tilde\varepsilon^2-1)}{\Lambda}\,,\label{SdSVieta2}\\
 \sum_{i<j<k}r_ir_jr_k&=-\frac{6M}{\Lambda}\,,\label{SdSVieta3}\\
 \sum_{i<j<k<l}r_ir_jr_kr_l&=-\frac{3L^2}{\Lambda}\,,\label{SdSVieta4}\\
 r_0r_1r_2r_3r_4&=-\frac{6ML^2}{\Lambda}\,.\label{SdSVieta5}
\end{align}

\paragraph{Signs.} The coefficients of $P(-r)$, in decreasing order of degree, are $-\Lambda/3$, $0$, $-(\tilde\varepsilon^2-1)$, $2M$, $L^2$, $2ML^2$. Whatever the sign of $\tilde\varepsilon^2-1$, the sequence of non-zero coefficients has exactly one change of sign, so by Descartes' rule $P$ has \emph{exactly one negative root}, which we call $r_0$. The coefficients of $P$ itself have the signs $+$, $\sgn(\tilde\varepsilon^2-1)$, $+$, $-$, $+$. If $\tilde\varepsilon^2\geq1$, they change sign twice, and there are at most two positive roots; if $\tilde\varepsilon^2<1$, they change sign four times. A bound branch, which needs four positive roots, therefore requires
\begin{equation}\label{SdSBoundNecessary}
 \varepsilon^2<1-\frac{\Lambda L^2}{3}\,,
\end{equation}
the counterpart of the Schwarzschild condition $\varepsilon^2<1$. The generic root patterns are therefore only three:
\begin{enumerate}
 \item one negative root and two complex-conjugate pairs: no turning point, and the particle travels between the two horizons;
 \item one negative root, a complex-conjugate pair, and two positive roots $r_1>r_2$: a horizon-connected branch $r_{\rm H}<r\leq r_2$ and a cosmological branch $r_1\leq r<r_{\rm C}$;
 \item one negative root and four positive roots $r_1>r_2>r_3>r_4$: a horizon-connected branch $r_{\rm H}<r\leq r_4$, a bound branch $r_{\rm p}=r_3\leq r\leq r_2=r_{\rm a}$, and a cosmological branch $r_1\leq r<r_{\rm C}$.
\end{enumerate}
Compared with Schwarzschild spacetime, the pattern with a single positive root, which gave a horizon-connected branch only, has disappeared: the outermost positive root is now the periapsis of a cosmological branch. As $\Lambda\to0$, this root and the negative one run off to $+\infty$ and $-\infty$, and the Schwarzschild patterns are recovered. In the third pattern, Eq.~\eqref{SdSVieta1} gives the negative root directly, $r_0=-(r_1+r_2+r_3+r_4)$.

\paragraph{Two identities.} Dividing Eq.~\eqref{SdSVieta4} by Eq.~\eqref{SdSVieta5} gives
\begin{equation}\label{SdSReciprocal}
 \sum_{i=0}^4\frac1{r_i}=\frac1{2M}\,,
\end{equation}
which is Eq.~\eqref{SchwReciprocal} with two more roots. The energy and the cosmological constant have both dropped out. As $\Lambda\to0$, the reciprocals of the two runaway roots vanish, and the Schwarzschild identity is recovered. Combining Eqs.~\eqref{SdSVieta3}--\eqref{SdSVieta5} in the same way gives the sum of the squared reciprocals,
\begin{equation}\label{SdSReciprocalSquares}
 \sum_{i=0}^4\frac1{r_i^2}=\left(\sum_i\frac1{r_i}\right)^2-2\,\frac{\sum_{i<j<k}r_ir_jr_k}{r_0r_1r_2r_3r_4}=\frac1{4M^2}-\frac2{L^2}\,,
\end{equation}
again independent of $\varepsilon$ and $\Lambda$. If $L^2<8M^2$, the right-hand side is negative, and the roots cannot all be real (Section~\ref{Sec:RootPatterns}): there is no bound branch. The same identity holds for the Schwarzschild cubic, where it gives a weaker bound than $L^2>12M^2$.

\paragraph{Multiple roots.} A double root of $P$ in the static region is a circular orbit, as in Section~\ref{Sec:SchwVieta}. A triple root $r_c$ requires
\begin{equation}
 P(r)=\frac\Lambda3\,(r-r_c)^3\left(r^2+3r_cr+\beta\right)\,,
\end{equation}
where the coefficient $3r_c$ is fixed by the vanishing of the $r^4$ term. Comparing the other coefficients with Eq.~\eqref{SdSPolynomial} gives
\begin{equation}
 \beta=\frac{6Mr_c^2}{6M-r_c}\,,\qquad
 L^2=\Lambda r_c^2\left(r_c^2-\beta\right)\,,\qquad
 3M(r_c-6M)=\Lambda r_c^3\left(4r_c-15M\right)\,,
\end{equation}
and the last condition is $S(r_c)=0$: the triple roots are the ISCO and the OSCO. Vieta's formulae also tell where they lie. The two remaining roots have the negative sum $-3r_c$ and the product $\beta$. Since $P$ has exactly one negative root, they must be real and of opposite signs, so $\beta<0$, that is, $r_c>6M$. Both marginally stable orbits lie beyond the Schwarzschild ISCO.

The two unstable families of circular orbits cross, in Fig.~\ref{SdSParamSpace}, where the potential has two maxima at the same height, and $P$ has two double roots. Writing $u$ and $v$ for their sum and product, the vanishing $r^4$ term fixes the negative root at $-2u$, and
\begin{equation}
 P(r)=\frac\Lambda3\,(r+2u)\left(r^2-ur+v\right)^2\,.
\end{equation}
Comparing coefficients gives
\begin{equation}\label{SdSEqualBarriers}
 v=\frac{4Mu^2}{M+u}\,,\qquad
 \Lambda u^3(u+5M)=3M(M+u)\,,\qquad
 L^2=\frac{\Lambda uv^2}{3M}\,.
\end{equation}
For $\Lambda M^2=10^{-4}$, the two orbits have radii $4.570M$ and $25.26M$, with $L\simeq3.641M$ and $\varepsilon^2\simeq0.9182$. As $\Lambda\to0$, $u^3\simeq3M/\Lambda=r_{\rm st}^3$ and $v\simeq4Mu$, so the inner orbit tends to $v/u=4M$, the marginally bound orbit of Schwarzschild spacetime, while the outer one recedes with the static radius.

\paragraph{The discriminant.} Since exactly one root is always real and negative, the three patterns have two, one, and no complex-conjugate pairs, and Eq.~\eqref{DiscSign} gives:
\begin{itemize}
 \item $\Delta_5<0$: horizon-connected and cosmological branches;
 \item $\Delta_5>0$: either no turning point or, in addition, a bound branch.
\end{itemize}
The discriminant is a polynomial in the constants of motion, too long to be useful in explicit form, which vanishes on the curves of circular orbits in Fig.~\ref{SdSParamSpace}. As for the Kerr quartic, a positive discriminant does not decide between the two alternatives that remain. A bound branch is excluded if $\tilde\varepsilon^2\geq1$, by Eq.~\eqref{SdSBoundNecessary}, or if $L^2<8M^2$, by Eq.~\eqref{SdSReciprocalSquares}; otherwise, a single radius in the static region where $P<0$ shows that the roots exist.

\section{Null geodesics}\label{Sec:SdSNull}

For $\kappa=0$, multiplying Eq.~\eqref{SdSRadialFirst} by $r^3/\varepsilon^2$ gives
\begin{equation}\label{SdSNullPolynomial}
 \frac{r^3\dot r^2}{\varepsilon^2}
 =\left(1+\frac{\Lambda b^2}{3}\right)\left(r^3-b_\Lambda^2\,r+2Mb_\Lambda^2\right)\,,
 \qquad
 \frac1{b_\Lambda^2}:=\frac1{b^2}+\frac\Lambda3\,.
\end{equation}
Up to a positive factor, this is the Schwarzschild null polynomial~\eqref{rschweq2null} with $b$ replaced by $b_\Lambda$: the cosmological constant enters only through this effective impact parameter. Everything in Section~\ref{Sec:SchwNull} carries over. Exactly one root is negative; the positive roots, when they exist, lie in the static region, by the argument of Section~\ref{Sec:SdSVieta}; the photon sphere is at $r_{\rm ph}=3M$ for every $\Lambda$; and the discriminant $4b_\Lambda^4(b_\Lambda^2-27M^2)$ separates capture, for $b_\Lambda<3\sqrt3\,M$, from a horizon-connected branch together with a cosmological one, for $b_\Lambda>3\sqrt3\,M$. In terms of $b$, the critical value is \citep{StuchlikHledik:1999}
\begin{equation}
 b_c=\frac{3\sqrt3\,M}{\sqrt{1-9\Lambda M^2}}\,.
\end{equation}
As $b\to\infty$, $b_\Lambda^2\to3/\Lambda$ and the cubic in parentheses tends to $h(r)/\Lambda$, whose positive roots are the two horizons. For finite $b$ with $b_\Lambda>3\sqrt3\,M$, the cubic is negative at $r=3M$, where it equals $M(27M^2-b_\Lambda^2)$: the periapsis of the cosmological branch lies between $3M$ and $r_{\rm C}$, and the turning point of the horizon-connected branch between $r_{\rm H}$ and $3M$.

The same substitution works for the orbit equation. With $u=1/r$, $(\dd u/\dd\varphi)^2=1/b_\Lambda^2-u^2+2Mu^3$, which is the Schwarzschild equation with $b_\Lambda$ in place of $b$. The shape of a light ray, as a curve $r(\varphi)$, therefore does not depend on $\Lambda$ beyond this relabeling \citep{Islam:1983}. This does not mean that lensing is blind to $\Lambda$: the angles that observers actually measure between light rays depend on the metric at the observer's position, and therefore on $\Lambda$ \citep{RindlerIshak:2007}.

\section{Worked examples}\label{Sec:SdSExamples}

We set $M=1$ and $\Lambda=10^{-4}$, as in the figures, and round numbers to the digits shown.

\begin{example}[A Schwarzschild--de~Sitter bound orbit]\label{Ex:SdSBound}
Take $r_{\rm p}=8$ and $r_{\rm a}=16$, the orbit of Fig.~\ref{SdSOrbits}a. Since $P$ is linear in $\varepsilon^2$ and $L^2$, the two conditions $P(8)=P(16)=0$ give $\varepsilon^2\simeq0.91483$ and $L^2\simeq14.288$. Then $\tilde\varepsilon^2-1\simeq-0.0847<0$, the coefficients of $P$ change sign four times, and $1/4-2/L^2\simeq0.110>0$: neither test excludes a bound branch. The numerical roots are $r_1=33.107$, $r_2=16$, $r_3=8$, $r_4=3.3463$ and $r_0=-60.453$, so $r_{\rm H}<r_4<r_3$ and $r_2<r_1<r_{\rm C}$, and the discriminant is positive, as no complex pair is present. The roots add up to zero, and their reciprocals add up to $1/2$, Eq.~\eqref{SdSReciprocal}. Besides the bound branch, the same constants of motion allow a horizon-connected branch, $r_{\rm H}<r\leq3.3463$, and a cosmological branch, $r\geq33.107$, which turns back just outside the static radius $r_{\rm st}\simeq31.07$.
\end{example}

\begin{example}[A transit orbit that the rule of signs does not exclude]\label{Ex:SdSTransit}
Take $L=2$ and $\varepsilon^2=0.95$. Then $\tilde\varepsilon^2-1\simeq-0.0499<0$, so Descartes' rule allows zero, two, or four positive roots, and the discriminant, $\Delta_5\simeq1.4\times10^{-5}$, is positive, which leaves zero or four. Equation~\eqref{SdSReciprocalSquares} settles the question: $1/4-2/L^2=-1/4<0$, so the roots are not all real, and there must be two complex-conjugate pairs. Indeed, the numerical roots are $-51.937$, $0.9972\pm1.7932i$, and $24.971\pm21.773i$. There is no turning point: since $L<L_{\rm ISCO}$ and $\varepsilon^2$ exceeds the height $2V_{\max}\simeq0.907$ of the single barrier, at $r\simeq29.8$, the particle travels between the horizons.
\end{example}

\begin{example}[A scattered photon]\label{Ex:SdSPhoton}
For $b=6$, the effective impact parameter is $b_\Lambda=(1/36+\Lambda/3)^{-1/2}\simeq5.9964$, larger than $3\sqrt3\simeq5.196$; equivalently, $b>b_c\simeq5.1985$. The roots of the cubic in Eq.~\eqref{SdSNullPolynomial} are $4.4489$, $2.3704$ and $-6.8193$, close to those of Example~\ref{Ex:SchwPhoton} for the same $b$. The photon comes in through the cosmological horizon, reaches its periapsis at $r=4.4489$, and leaves again, while the root $2.3704$ bounds a disconnected horizon-connected branch.
\end{example}

\chapter{Conclusions}\label{Sec:Conclusions}

The aim of this monograph was to show how far one can go in classifying black-hole geodesics without solving for the turning points. The answer is: very far. Vieta's formulae, supplemented by the parity of the number of exterior roots, the rule of signs, the discriminant, and the values of the radial polynomial at $r=0$, at the horizon, and at infinity, determine the qualitative behavior of every generic geodesic in the domain of outer communication of the Schwarzschild, Kerr, and Schwarzschild--de~Sitter black holes. Several of these results were known: some go back to \citet{Hagihara:1931}, \citet{Darwin:1959,Darwin:1961} and \citet{Wilkins:1972rs}, the complete Kerr radial taxonomy is due to \citet{CompereLiuLong:2022}, and the Schwarzschild--de~Sitter horizons and circular orbits were analyzed by \citet{StuchlikHledik:1999}. What the method adds is a single, transparent route to all of them.

The main results can be summarized as follows.
\begin{enumerate}
 \item \textbf{Schwarzschild spacetime.} The radial cubic admits only two generic root patterns in each energy regime. Two negative roots and one positive root are excluded for $\varepsilon^2<1$, and a positive root with a complex pair is excluded for null geodesics. Every positive root lies outside the horizon, so the discriminant alone decides the motion. The reciprocals of the three roots always add up to $1/(2M)$, Eq.~\eqref{SchwReciprocal}. This gives the third root of a bound orbit in closed form and, with it, the separatrix $p=6+2e$, as in \citet{Darwin:1961}. The ISCO is the triple root, and the critical photon impact parameter $b_c=3\sqrt3\,M$ is the zero of the discriminant.
 \item \textbf{Kerr spacetime: polar motion.} The endpoint values $F(0)=\mathcal Q$ and $F(1)=-L_z^2$, together with the Vieta relations of the polar quadratic, give the three regimes $\mathcal Q>0$, $\mathcal Q=0$ and $\mathcal Q<0$. Bound timelike geodesics have $\mathcal Q\geq0$ \citep{Wilkins:1972rs}. Vortical motion requires $A>0$, $0<B<2A$ and $D_\theta\geq0$, Eq.~\eqref{VorticalConditions}; for massive particles, $A>0$ means $\varepsilon^2>1$. Because the polar function is linear in its \enquote{energy} $\mathcal Q$, the same classification has a transparent effective-potential reading.
 \item \textbf{Kerr spacetime: radial motion.} For $\varepsilon^2\geq1$ there are at most two exterior turning points: for $\varepsilon^2>1$ because four positive roots would have a negative sum, and for $\varepsilon^2=1$ because the radial function is a cubic with an even number of exterior roots. Hence no radially bound orbit exists, as \citet{Wilkins:1972rs} first showed with the rule of signs. For $\varepsilon^2<1$ the number of exterior roots is odd, and at least one root lies inside the Cauchy horizon, because $R>0$ between the two horizons. A positive discriminant is necessary for bound motion but not sufficient: three of the four real roots may lie inside the Cauchy horizon. The exact criterion is Descartes' rule applied at the horizon, which counts the exterior roots once the discriminant has shown them to be real. The separatrix of bound motion is the locus where the periapsis meets the turning point of the horizon-connected branch.
 \item \textbf{Multiple roots and stability.} Spherical orbits are the double roots of $R$, and their stability is decided by the position of the other two roots. Every spherical orbit with $\varepsilon^2\geq\kappa$, including every spherical photon orbit, is unstable. In the equatorial plane, Vieta's formulae alone give the circular-orbit constants, the stability criterion $\varepsilon^2<1-2M/(3r_c)$, the ISCO equation as a triple root, and the marginally bound radii.
 \item \textbf{Vortical geodesics.} The coefficient $B$ that makes vortical polar motion possible is also the coefficient of $r^2$ in $R$. Every coefficient of the radial polynomial is then non-negative, with a positive constant term, so vortical geodesics, whether timelike or null, have no radial turning point outside the horizon, in agreement with \citet{CompereLiuLong:2022}. Since they reach infinity, vortical timelike geodesics have $\varepsilon>1$.
 \item \textbf{Photons.} The equatorial light rings follow from Vieta's formulae, and spherical photon orbits fill the shell between them. Their impact parameters form the part of the discriminant locus of $\mathcal R$ on which the double root lies outside the horizon, the boundary between capture and scattering; for every spin, $\eta\leq27M^2$, with equality only for the orbit at $r_s=3M$ \citep{Teo:2003}.
 \item \textbf{Schwarzschild--de~Sitter spacetime.} The horizon cubic has roots that add up to zero, and its discriminant shows that a black hole requires $9\Lambda M^2<1$. The timelike radial polynomial is a quintic, which always has exactly one negative root and whose positive roots all lie between the two horizons, in even number. Only three generic patterns remain: transit between the horizons, a horizon-connected branch with a cosmological one, and the same with a bound branch, which requires $\varepsilon^2+\Lambda L^2/3<1$. The reciprocals of the five roots still add up to $1/(2M)$, and the triple roots are the innermost and outermost stable circular orbits, both beyond $6M$. Null geodesics reduce to the Schwarzschild case with an effective impact parameter.
\end{enumerate}

The method has clear limits. Vieta's formulae are necessary conditions on a root pattern. Turning them into a classification requires locating the roots with respect to the horizon; in Kerr spacetime this is the role of Descartes' rule at the horizon, without which the discriminant can mislead. The analysis covers the generic cases: horizon roots, $R(r_{\rm H})=0$, and multiple roots require separate treatment, and the complete radial taxonomy, including all non-generic cases, is given by \citet{CompereLiuLong:2022}. The radial potential, which is quadratic in $\varepsilon$ with a frame-dragging cross term, is useful as a picture in the equatorial plane but not as a classification tool. Finally, the detailed trajectories, the fundamental frequencies and the precession rates need the explicit elliptic-function solutions \citep{FujitaHikida:2009,GrallaLupsasca:2020}.

The same approach applies to any spacetime whose geodesic equations separate into polynomial radial and polar parts. The most immediate extensions are the extremal Kerr black hole, $a=M$, where the two horizons merge and the Cauchy-horizon bounds used above degenerate, and the Kerr naked singularity, $a>M$, where there is no horizon and the sign of $R(0)=-a^2\mathcal Q$ becomes the key boundary value. Beyond Kerr, the Kerr--Newman family keeps a quartic radial polynomial but changes its coefficients, while in the Kerr--de~Sitter family the timelike radial polynomial becomes a sextic. Schwarzschild--de~Sitter spacetime with a negative cosmological constant, which has no cosmological horizon and confines all timelike motion, is a natural companion to Chapter~\ref{Sec:SdS}. In all these cases the logic of the classification stays the same: fix the constants of motion, read off the symmetric functions of the roots, and let the signs decide.

\bibliography{Vieta}
\bibliographystyle{plainnat} 

\end{document}